\renewcommand{\floatc@ruled}[2]{\vspace{2pt}{\@fs@cfont #1.\:} #2 \par
  \vspace{1pt}} 
\newcommand{\mypseudocodelabel}[1]{\hfil}
\DeclareMathSymbol{\qedsymb} {\mathord}{AMSa}{"04}
\newcommand{\qedbox}{\hspace*{0pt}\hfill$\qedsymb$}
\newcommand{\eps}{\varepsilon}
\renewcommand{\epsilon}{\varepsilon}
\newcommand{\ceil}[1]{\left\lceil #1 \right\rceil}
\newcommand{\floor}[1]{\left\lfloor #1 \right\rfloor}
\newcommand{\oct}{\quad\quad}                                   
\newcommand{\LE}{\hbox{LE}}
\newcommand{\RE}{\hbox{RE}}
\newcommand{\minnm}{N}
\newcommand{\R}{\mathbb{R}}
\newcommand{\C}{\mathbb{C}}
\newcommand{\E}{\mathbf{E}}
\newcommand{\U}{\mathbf{U}}
\newcommand{\Var}{\mathbf{Var}}
\renewcommand{\Pr}{\mathbf{Pr}}
\renewcommand{\mod}{\hbox{ mod }}
\newcommand{\tO}{\tilde{O}}
\newcommand{\CorollaryName}[1]{\label{cor:#1}}
\newcommand{\DefinitionName}[1]{\label{def:#1}}
\newcommand{\EquationName}[1]{\label{eq:#1}}
\newcommand{\FactName}[1]{\label{fact:#1}}
\newcommand{\LemmaName}[1]{\label{lem:#1}}
\newcommand{\ObservationName}[1]{\label{obs:#1}}
\newcommand{\SectionName}[1]{\label{sec:#1}}
\newcommand{\TheoremName}[1]{\label{thm:#1}}
\newcommand{\FigureName}[1]{\label{fig:#1}}
\newcommand{\Corollary}[1]{Corollary~\ref{cor:#1}}
\newcommand{\Definition}[1]{Definition~\ref{def:#1}}
\newcommand{\Equation}[1]{Eq.\:\eqref{eq:#1}}
\newcommand{\Fact}[1]{Fact~\ref{fact:#1}}
\newcommand{\Lemma}[1]{Lemma~\ref{lem:#1}}
\newcommand{\Observation}[1]{Observation~\ref{obs:#1}}
\newcommand{\Section}[1]{Section~\ref{sec:#1}}
\newcommand{\Theorem}[1]{Theorem~\ref{thm:#1}}
\newcommand{\Figure}[1]{Figure~\ref{fig:#1}}
\newcommand{\thmabove}{3pt}
\newcommand{\thmbelow}{3pt}
    \newtheoremstyle{mythmstyle}
      {\thmabove}   
      {\thmbelow}   
      {}            
      {}            
      {\bfseries}   
      {. }          
      {2.5pt}       
      {\thmname{#1}\thmnumber{ #2}\thmnote{ \normalfont (#3)}}   
    \theoremstyle{mythmstyle}
    \newtheorem{theorem}{Theorem}[section]\numberwithin{equation}{section}
    \newtheorem{corollary}[theorem]{Corollary}
    \newtheorem{definition}[theorem]{Definition}
    \newtheorem{observation}[theorem]{Observation}
    \newtheorem{fact}[theorem]{Fact}
    \newtheorem{lemma}[theorem]{Lemma}
\newcommand{\proofbelow}{3pt}
\newcommand{\afterproof}{\hfill $\blacksquare$ \par \vspace{\proofbelow}}
\newcommand{\aftersubproof}{\hfill $\Box$ \par \vspace{\proofbelow}}
\renewenvironment{proof}{\noindent\textbf{Proof.}\,}{\afterproof}
\newenvironment{proofof}[1]{\noindent\textbf{Proof} \,(of #1).\,}{\afterproof}
\newcommand{\poly}{\mathop{{\rm poly}}}
\newcommand{\lsb}{\mathop{{\rm lsb}}}
\renewcommand{\th}{\ifmmode{^{\textrm{th}}}\else{\textsuperscript{th}\ }\fi}
\newcommand{\comment}[1]{}
\newcommand{\TODO}[1]{}
\renewcommand{\maketitle}{
    \begin{center}
    \begin{minipage}[t]{6.5in}
    \begin{center}
    \vspace*{15pt}
    {\LARGE\bf \@title \par}
    \vspace*{10pt}
    \vspace*{20pt}
    {\Large \@author}
    \vspace*{3pt}
    \end{center}
    \end{minipage}
    \end{center}
}
\renewcommand{\thefootnote}{\fnsymbol{footnote}}
\title{Revisiting Norm Estimation in Data Streams}
\author{Daniel M. Kane\footnotemark[2]\oct Jelani
  Nelson\footnotemark[3]\oct
  David P. Woodruff\footnotemark[4]}
\date{}
\newcommand{\eqdef}{\mathbin{\stackrel{\rm def}{=}}}
\begin{document}

\footnotetext[1]{Harvard University, Department of
  Mathematics. \texttt{dankane@math.harvard.edu}. Supported by a
  National Defense Science and Engineering Graduate (NDSEG)
  Fellowship.}
\footnotetext[2]{MIT Computer Science and Artificial Intelligence
  Laboratory. \texttt{minilek@mit.edu}. Supported by a National
  Defense Science and Engineering Graduate (NDSEG) Fellowship. Much of this
  work was done while the author was at the IBM Almaden
  Research Center.}
\footnotetext[3]{IBM Almaden Research Center, 650 Harry Road, San
  Jose, CA, USA. \texttt{dpwoodru@us.ibm.com}.}

\maketitle

\renewcommand{\thefootnote}{\arabic{footnote}}
\setcounter{page}{0}

\begin{abstract}
\thispagestyle{empty}
\noindent
We revisit the problem of $(1\pm\eps)$-approximating the $L_p$
norm, for real $p$ with $0 \le p \le 2$, of
a length-$n$ vector updated in a length-$m$ stream with 
updates to its coordinates. We assume the updates are integers
in the range $[-M, M]$. 
We prove new bounds on the space and time complexity
of this problem. In many cases our results are optimal.
\begin{enumerate}
\item We give a $1$-pass space-optimal algorithm for $L_p$-estimation
  for constant $p$, $0 < p < 2$. Namely, we give an algorithm using
  $O(\eps^{-2}\log(mM) + \log\log n)$ bits
of space to estimate $L_p$ within relative error $\epsilon$ with 
constant probability. Unlike previous
algorithms which achieved
optimal dependence on $1/\eps$, but suboptimal 
dependence on $n$ and $m$, our algorithm
{\it does not use a generic pseudorandom generator (PRG)}. 
\item We improve the $1$-pass lower bound on the space to 
$\Omega(\eps^{-2}\log(\eps^2 N))$ bits for real constant $p \geq
0$ and $1/\sqrt{N} \le \eps \le 1$, where $N = \min\{n,m\}$.
If $p > 0$, the bound improves to $\Omega(\min\{N, \eps^{-2}\log(\eps^2mM)\})$.
Our bound is based on showing a direct sum property for the $1$-way 
communication of the gap-Hamming problem.
\item For $p=0$, we give
an algorithm which matches our space lower bound up to an
$O(\log(1/\eps) +
\log\log(mM))$ factor. Our algorithm is the first space-efficient
algorithm to achieve $O(1)$ update and reporting time. 
Our techniques also 
yield a $1$-pass $O((\eps^{-2}+\log N) \log \log N + \log\log
n)$-space algorithm for estimating $F_0$, the
number of distinct elements in the update-only model, with $O(1)$
update and reporting time. This
significantly improves upon previous algorithms achieving this amount of space,
which suffered from $\tO(\eps^{-2})$ worst-case update time.
\item We reduce the space complexity of dimensionality reduction in a stream
with respect to the $L_2$ norm by replacing the use of Nisan's PRG in Indyk's 
algorithm with an improved PRG built by efficiently combining an extractor of 
Guruswami, Umans,
and Vadhan with a PRG construction of Armoni. 
The new PRG stretches a seed of
$O((S/(\log(S) - \log\log(R) + O(1)))\log R)$ bits to $R$ bits fooling 
space-$S$ algorithms for any $R = 2^{O(S)}$, improving the $O(S\log R)$ 
seed length of Nisan's PRG. Many existing algorithms rely
on Nisan's PRG, and this new PRG reduces the space complexity
of these algorithms. 
\end{enumerate}
Our results immediately imply various separations between the complexity of
$L_p$-estimation in different update models, one versus multiple passes, and
$p = 0$ versus $p > 0$. 
\end{abstract}

\newpage

\section{Introduction}
Computing over massive data streams is increasingly important. 
Large data sets, such as sensor
networks, transaction data, the web, and network traffic, have grown at
a tremendous pace. It is impractical for most devices to store
even a small fraction of the data, and this necessitates the design of
extremely efficient algorithms. Such
algorithms are often only given a single pass over the
data, e.g., it may be expensive to read the contents of
an external disk multiple times, and in the case of an internet
router, it may be impossible to make multiple passes.

Even very basic statistics of a data set cannot be computed
exactly or deterministically in this model, and
so algorithms must be both approximate and probabilistic. This model
is known as the streaming model and has become
popular in the theory community, dating back to the works
of Munro and Paterson \cite{munro} and Flajolet and Martin
\cite{FM83}, and resurging with the
work of Alon, Matias, and Szegedy \cite{AMS99}. For a survey
of results, see the book by Muthukrishnan
\cite{Muthu}, or notes from Indyk's course \cite{IndykCourse}. 

A fundamental problem in this area is that of norm
estimation \cite{AMS99}. Formally, we have a vector $a =
(a_1,\ldots,a_n)$ initialized as
$a = \vec{0}$, and a stream of $m$ updates, where an update $(i,v)\in
[n] \times \{-M,\ldots,M\}$
causes the change $a_i \leftarrow a_i + v$. If the $a_i$ are
guaranteed to be non-negative at all times, this is called the {\em
  strict
  turnstile model}; else it is called the {\em turnstile model}.
Our goal is to output a $(1\pm \eps)$-approximation to the value
$L_p(a) = \left (\sum_{i=1}^n |a_i|^p \right )^{1/p}$.
Sometimes this problem is posed as estimating $F_p(a) = L_p^p(a)$,
which is called the $p$-th frequency moment of $a$. A large body of
work has been done in this area, see, e.g., the references in
\cite{IndykCourse,Muthu}.

When $p = 0$, $L_0\eqdef |\{i \mid
a_i \neq 0\}|$, and it is called the ``Hamming norm''. 
In an update-only stream,
i.e., where updates $(i,v)$ always have $v = 1$,
this coincides with the well-studied problem of 
estimating the number
of distinct elements, which is
useful for query
optimizers in the context of databases, internet routing, and
detecting Denial of Service attacks \cite{abrs}.
The Hamming norm is
also useful in
streams with deletions, for which it can be used to measure
the dissimilarity of two streams, which is useful for packet
tracing and database auditing \cite{CDIM03}.

\subsection{Results and Techniques}
We prove new upper and lower bounds on the space and time complexity
of $L_p$-estimation
for $0 \leq p \leq 2$
\footnote{When $0 < p < 1$, $L_p$ is not a norm since it does not
satisfy the triangle inequality, though it is still well-defined.}.
In many cases our results are optimal. We shall use the term update time
to refer to the per item processing time in the stream, while we use
the term reporting time to refer to the time to output the estimate at
any given point in the stream. In what follows in this section, and
throughout the rest of the paper, we
omit an implicit additive $\log\log n$ which exists in all the $L_p$
space upper and lower
bounds. In strict turnstile and turnstile streams, the additive term
increases to $\log\log(nmM)$.
Each following subsection describes an overview of our
techniques for a problem we consider, and
a discussion of previous work.  A table listing all our bounds is also
given in \Figure{our-bounds}.

\begin{figure}
\begin{center}
\small
\begin{tabular}{|l|l|l|l|l|}
\hline
Problem & upper bound & lower bound & update & reporting\\
\hline
$L_p$ & $O(\eps^{-2}\log(mM))$ &
$\Omega(\eps^{-2}\log(mM))$ & $\tO(\eps^{-2})$ &
$O(1)$\\
\hline
$L_0$ (1-pass) & $O(\eps^{-2}(\log(1/\eps) + \log\log(mM))\log N)$ &
$\Omega(\eps^{-2}\log N)$ & $O(1)$ & $O(1)$ \\
\hline
$L_0$ (2-pass) & $O(\eps^{-2}(\log(1/\eps) + \log\log(mM))+\log N)$ &
$\Omega(\eps^{-2}+\log N)^*$ & $O(1)$ & $O(1)$ \\
\hline
$F_0$ & $O(\eps^{-2}\log\log N + \log(1/\eps)\log N)$ & $\Omega(\eps^{-2} + \log
N)^{**}$ & $O(1)$ & $O(1)$ \\
\hline
$L_2\rightarrow L_2$ & $O(\eps^{-2}\log (nM/(\eps\delta)) \log (n/(\eps \delta)) \log
(1/\delta)/\log(1/\eps))$ &
$\Omega(\eps^{-2}\log(nM))$ & *** & $O(1)$\\
\hline
\end{tabular}
\caption{\small Table of our results. The 2nd and 3rd columns are
  space bounds, in bits, and the 1st row is for $0<p<2$. The last two
  columns are time.
All bounds above are ours,
  except for * \cite{AMS99,BC09} and ** \cite{AMS99, BC09,IW03,
    W04,jks07,WoodruffThesis}. $N$ denotes $\min\{n,m\}$. All lower
 bounds hold for $\eps$ larger than some threshold (e.g., they
never go above $\Omega(N)$), and all bounds are stated for a
 desired constant probability of success, except for the last row.  In
 the last row,
 $1-\delta$ success probability is desired for $\delta =
 O(1/t^2)$, where we want to do $L_2\rightarrow L_2$ dimensionality
 reduction of $t$ points in a stream, and thus need $\delta =
 O(1/t^2)$ to union bound for all pairwise distances to be
 preserved (the space shown is for one of the $t$ points). $F_0$
 denotes $L_0$ in
 update-only streams.  For ***, the time is polynomial in the
 space. Note for rows 1 and 5, the
 reporting times are $O(1)$ since we can recompute the estimator
 during updates.}\FigureName{our-bounds}
\end{center}
\end{figure}

\subsubsection{New algorithms for $L_p$-estimation, $0 < p < 2$}\SectionName{lp-intro}
Our first result is the first $1$-pass space-optimal algorithm for
$L_p$-estimation, $0 < p < 2$. Namely, we give an algorithm using $O(\eps^{-2}\log(mM))$ bits
of space to estimate $L_p$ within relative error $\epsilon$ with 
constant probability. Unlike the previous algorithms of Indyk and Li which
achieved optimal dependence on $1/\eps$, but suboptimal dependence on
$n$ and $m$ \cite{Indyk06,Li08b}, our algorithm uses only $k$-wise
independence and
{\it does not use a generic pseudorandom generator (PRG)}. 
In fact, the previous algorithms failed to achieve space-optimality
precisely because of the use of a PRG \cite{Nisan92}.
Our main technical lemma shows that $k$-wise independence
preserves the properties of sums of $p$-stable random variables in a
useful way.  This is the
first example of such a
statement outside the case $p = 2$.  PRGs
are a central tool in the design of streaming
algorithms, and Indyk's algorithm has become the canonical
example of a streaming algorithm for which
no derandomization more efficient than via a generic PRG was
known. We believe that removing this heavy hammer 
from norm estimation is an important step forward in the
derandomization of streaming algorithms, and that our
techniques may spur improved derandomizations of other streaming
algorithms. 

To see where our improvement comes from, let us recall Indyk's algorithm
\cite{Indyk06}. That algorithm maintains $r =
\Theta(1/\eps^2)$ counters $X_j = \sum_{i=1}^n a_iX_{i,j}$, where the
$X_{i,j}$ are
i.i.d. from a discretized {\em $p$-stable distribution}.  A
$p$-stable
distribution $\mathcal{D}$ is a distribution with the property that,
for all vectors $a\in\R^n$ and i.i.d. random variables
$\{X_i\}_{i=1}^n$ from $\mathcal{D}$, it holds that
$\sum_{i=1}^n a_iX_i \sim ||a||_p X$, where $X \sim \mathcal{D}$.
His algorithm then 
returns the median of the $|X_j|$.  The main issue with Indyk's
algorithm, and also a later algorithm of Li \cite{Li08b}, is that the
amount of randomness
needed to generate the $X_{i,j}$ is $\Omega(N/\eps^2)$. A
polylogarithmic-space algorithm thus cannot afford to store all the
$X_{i,j}$.
Indyk remedied this problem by using Nisan's PRG \cite{Nisan92}, but
at the cost of multiplying his space by a $\log(N/\eps)$ factor.

Our algorithm, like those of Indyk and Li, is also based on $p$-stable
distributions.  However, we do not use the median estimator of Indyk, or
the geometric mean or harmonic mean estimators of Li.  Rather, we give
a new estimator which we show can be derandomized using only $k$-wise
independence for small $k$ (specifically, $k =
O(\log(1/\eps)/\log\log\log(1/\eps))$ --- any $k = O(1/\eps^2)$ would
have given us a space-optimal algorithm, but smaller $k$ gives smaller
update time).  We first show that the median
estimator of Indyk gives a constant-factor approximation of $L_p$ with
arbitrarily large constant probability as long as $k,r$ are chosen
larger than some constant.  Even this was previously not known.
Once we have a value $A$ such that
$||a||_p/A = \Theta(1)$, we then give an estimator that can
$(1\pm\eps)$-approximate $||a||_p/A$ using only $k$-wise
independence.  Despite the two-stage nature of our algorithm (first
obtain a constant-factor approximation to $||a||_p$, then
refine to a $(1\pm\eps)$-approximation), our
algorithm is naturally implementable in one pass.

Other work on $L_p$-estimation includes \cite{CG07},
though their scheme uses $\Omega(\eps^{-2-p}\poly\log(mM))$ space.
For $p>2$, space polynomial
in $n$ is necessary and sufficient \cite{AMS99, bgks06, bjks02b, cks03, IW}.

\subsubsection{Tight space lower bounds for $L_p$-estimation}
To show optimality of our $L_p$-estimation algorithm, for $p>0$ we
improve the
space lower bound to $\Omega(\min\{N, \eps^{-2}\log(\eps^2mM)\})$
bits. For $p=0$, we show a lower bound of
$\Omega(\eps^{-2}\log(\eps^2 N))$.  Here, $1/\sqrt{N} \le \eps \le 1$,
with $N = \min\{n,m\}$.
The previous lower bound in both cases is $\Omega(\eps^{-2}+ \log
N)$,
 and is the result of a sequence of work \cite{AMS99, IW03, W04,
   BC09}.
See \cite{jks07, WoodruffThesis} for simpler proofs.
Since Thorup and Zhang \cite{ThorupZhang04} give a
time-optimal variant of the $L_2$-estimation sketch of 
Alon, Matias, and Szegedy \cite{AMS99}, our work closes the problem of
$L_2$-estimation, up to
constant factors.
Our bound holds even 
when each coordinate is updated twice, implying that the space of
Feigenbaum et al.\ \cite{FKSV02} for
$L_1$-difference estimation is optimal.  Our
lower bound is also the first to give a logarithmic dependence on $mM$
(previously only an $\Omega(\log\log(mM))$ bound was
known by a
reduction from the communication complexity of \textsc{Equality}).

Our lower bounds are based upon embedding multiple
geometrically-growing hard instances for estimating $L_p$ in an
insertion-only stream into a stream, and using the deletion property
together with the geometrically-growing property to reduce the problem
to solving a single hard instance.
More precisely, a hard instance for $L_p$ is
based on a reduction from a two-party communication game in which the
first party, Alice, receives a string $x \in \{0,1\}^{\eps^{-2}}$, and
Bob an index $i \in [\eps^{-2}]$, and Alice sends a single message to
Bob who must output $x_i$ with constant probability. This problem,
known as indexing, requires $\Omega(\eps^{-2})$ bits of space. To
reduce it to estimating $L_p$ in an insertion-only stream, there is a
reduction \cite{IW03, W04, WoodruffThesis} through the gap-Hamming
problem for which Alice creates a
stream $\mathcal{S}_x$ and Bob a stream $\mathcal{S}_i$, with the
property that either $L_p(\mathcal{S}_x \circ \mathcal{S}_i) \geq
\eps^{-2}/2 + \eps^{-1}/2$, or $L_p(\mathcal{S}_x \circ \mathcal{S}_i)
\leq \eps^{-2}/2 - \eps^{-1}/2$. 
Here, ``$\circ$'' denotes
concatenation of two streams.
Thus, any $1$-pass streaming
algorithm which $(1 \pm \eps)$-approximates $L_p$ requires space which
is at least the communication cost of indexing, namely,
$\Omega(\eps^{-2})$.

We instead consider the augmented-indexing problem. Set $t = 
\Theta(\eps^{-2}\log (\eps^2 N))$. 
We give Alice a string $x \in \{0,1\}^t$ and Bob
both an index $i \in [t]$ together with a subset of the bits $x_{i+1},
\ldots, x_t$. This
problem requires $\Omega(t)$ bits of communication if Alice
sends only a single message to Bob \cite{BJKK04, MNSW98}. Alice splits
$x$
into $b = \eps^2 t$ equal-sized blocks $X_0, \ldots, X_{b-1}$. In the
$j$-th block
she uses the $\eps^{-2}$ bits assigned to it to create a stream
$\mathcal{S}_{X_j}$ that is similar to what she would have created in
the insertion-only case, but each non-zero item is duplicated
$2^j$ times. Given $i$, Bob finds the block $j$ for which it
belongs, and creates a stream $\mathcal{S}_i$ as in the insertion-only
case, but where each non-zero item is duplicated $2^j$
times. Moreover, Bob can create all the streams $\mathcal{S}_{X_{j'}}$
for blocks $j'$ above block $j$. Bob inserts all of these latter
stream items as deletions, while Alice inserts them as
insertions. Thus, when running an $L_p$ algorithm on Alice's list of
streams followed by Bob's, all items in streams $\mathcal{S}_{X_{j'}}$
vanish. Due to the duplication of non-zero coordinates, approximating
$L_p$ well on the entire stream corresponds to approximating $L_p$
well on $\mathcal{S}_{X_{j}} \circ \mathcal{S}_i$, and thus a
$(1\pm\eps)$-approximation algorithm to $L_p$ can be used to solve
augmented-indexing. For $p > 0$, we can do better by using the universe
size to our advantage. Instead of duplicating each coordinate $2^j$
times in the $j$-th block, we scale each coordinate's frequency by
$2^{j/p}$ in
the $j$-th block. For constant $p > 0$, this has a similar effect as
duplicating coordinates.  Our technique can be viewed as showing
a direct sum property for the one-way communication complexity of the
gap-Hamming problem.

For $p \neq 1$, our lower bound holds even in
the strict turnstile model.
The assumption that $p \neq 1$ in the strict turnstile 
model is necessary, since one can easily compute $L_1$ 
exactly in this model 
by maintaining a counter. Also, as it is known
that $L_0$ can be estimated
in $\tilde{O}(\eps^{-2} + \log N)$ bits of
space\footnote{We say $f
  = \tilde{O}(g)$ if $f = O(g\cdot \mathrm{polylog}(g))$.} in the
update-only model, our lower
bound establishes the first {\em separation} of estimating $L_0$ in
these two well-studied models.  Our technique also gives the best
known lower bound for additive
approximation of the entropy in the strict turnstile
model, improving the $\Omega(\eps^{-2})$ bound
that follows\footnote{Their lower bound is
  stated against multiplicative approximation, but the
  additive lower bound easily follows from their proof.} from the
work of \cite{CGM07} to
$\Omega(\eps^{-2}\log(N)/\log(1/\eps))$.  Their lower bound though
also holds in the
update-only model.
Additive estimation of entropy can be used to additively
approximate conditional entropy
and mutual information, each of which cannot be multiplicatively
approximated in small space \cite{IndMcGreg08}. Variants of our techniques were also applied to establish tight bounds for linear algebra problems in a stream \cite{CW08}.

\subsubsection{Near-optimal algorithms for $L_0$ in turnstile and
  update-only models}\SectionName{l0-proofsketch}
In the case of $L_0$, we give a $1$-pass 
algorithm which is nearly optimal in the most general turnstile model.
Our algorithm needs only
$O(\eps^{-2}\log(\eps^2 N)(\log(1/\eps) + \log\log(mM)))$
bits of space, and has optimal $O(1)$ update and reporting time. 
Given our lower bound and a folklore $\Omega(\log\log(nmM))$ lower
bound, our space upper bound is tight up to
potentially the $\log(1/\eps)$ term, and the $\log\log(mM)$ term being
multiplicative instead of additive. Note our algorithm implies a
separation between $L_0$ estimation and $L_p$ estimation, $p>0$, since
we show a logarithmic dependence on $mM$ is necessary for the
latter. Our algorithm
improves on prior work
which either (1) both assumes the weaker strict turnstile model and uses an
extra $\log(mM)$ factor in space
\cite{Ganguly07}, or (2) has space complexity which is worse by at least a 
$\min((\log^2 N \log^2 m)/\log (mM)), 1/\eps)$ factor
\cite{CDIM03, CG07}. Also, all previous algorithms had at least a
logarithmic dependence on $mM$, and none had $O(1)$ update time.
Here we assume the word RAM model (as did previous work, except
\cite{BJKST02}, for which we later translate their update times to the
word RAM model), where standard arithmetic and bit
operations on $\Omega(\log(nmM))$-bit
words take constant time. Furthermore, we show that our algorithm has
a natural $2$-pass implementation using
$O(\eps^{-2}(\log(1/\eps)+\log\log(mM)) + \log N)$ space. Given our
$1$-pass lower bound, this implies the first known {\em separation}
for $L_0$ between $1$ and $2$ passes.  Furthermore, due to a recent
breakthrough
of Brody and Chakrabarti \cite{BC09}, our $2$-pass algorithm is
optimal up to $O(\log(1/\eps)+\log\log(mM))$ for any constant number
of passes.
Finally, we give an algorithm for estimating 
$L_0$ in the update-only model, i.e., the number of distinct elements,
with $O((\eps^{-2}+ \log N)\log \log N)$ bits of space and $O(1)$
update and reporting time. Our space is optimal up to the $\log \log
N$ \footnote{Our gap to optimality is even smaller for $\eps$ small.
  See \Figure{our-bounds}.}, while our time is optimal. This greatly improves the
time complexity of the only previous algorithms (the 2nd and 3rd
algorithms\footnote{Their 3rd algorithm has $O(\log(1/\eps) + \log\log
  N)$ amortized
  update time, but $\tO(\eps^{-2})$ worst-case update time.} of
\cite{BJKST02}) with this space complexity, from $\tO(\eps^{-2})$
to $O(1)$.

We sketch some of our techniques, and the differences with
previous work.  In both our $1$-pass $L_0$ algorithms (update-only and
turnstile),
we run in parallel a
a subroutine to obtain a value $R = \Theta(L_0)$. We also in parallel
pairwise independently subsample the universe at a rate of $1/2^j$ for
$j=1,\ldots,\log(\eps^2N)$ (note that $L_0 \le N$) to create
$\log(\eps^2 N)$ substreams. This subsampling can be done by hashing
into $[N]$ then sending item $i$ to level $\lsb(h(i))$, where $\lsb$
is the least significant bit.  At each level $j$ we feed the $j$th
substream into a subroutine which approximates $L_0$ well when promised
$L_0$ is small.  We then base our estimator on the level $j$ with
$R/2^j = \Theta(1/\eps^2)$, since the $L_0$ of that substream will
be $(1\pm\eps)L_0/2^j$ with good probability, so that we can scale
back up to get $(1\pm\eps)L_0$. The idea of subsampling the stream 
and using an estimate from some appropriate level is
not new, see, e.g., 
\cite{BJKST02,Ganguly07,GT01,PavanTir07}. For example, the
best known algorithm for $L_0$ estimation in the strict turnstile
model, due to Ganguly \cite{Ganguly07}, follows
this high-level approach. We now explain where our techniques differ.

First we discuss the turnstile model.  We develop a subroutine
using only $O(\log(N)\log\log(mM))$ space to obtain $R$.
Previously, no subroutine using $o(\log(N)\log(mM))$ space was
known.  Next, at level $j$ we play a balls-and-bins game where we
throw $A$ balls into $1/\eps^2$ bins $k$-wise independently for
$k=O(\log(1/\eps)/\log\log(1/\eps))$, then base our estimator on the
outcome of this random process. This is similar to Algorithm II of
Ganguly \cite{Ganguly07}, which itself was based on the second
algorithm of
\cite{BJKST02}. The $A$ balls are the
$L_0$-contributors mapped to level $j$, and the $1/\eps^2$ bins are
counters.
 In Ganguly's algorithm, he bases his estimator
on the number of bins receiving exactly one ball, and develops a
subroutine to use inside each bin which detects this. However, this
subroutine requires $O(\log(mM))$ bits and only works only in the strict
turnstile model. We overcome both issues by basing our estimator on the
number of bins receiving {\em at least} one ball. To detect
if a bin is hit, we cannot simply keep frequency sums
since colliding balls could have frequencies of opposite sign and
cancel each other.  Instead, each bin
maintains the dot product of frequencies with
a random vector over a suitably large finite field. This allows us to
both reduce the $mM$ dependence to doubly logarithmic, and work in the
turnstile model. Also, one time bottleneck is  
evaluating the $k$-wise independent hash
function, but we observe that this can be done in $O(1)$ time using a scheme
of Siegel \cite{Siegel04} after perfectly hashing the universe down to
$[1/\eps^4]$.
Furthermore, we
non-trivially extend the analysis
of \cite{BJKST02} to analyze throwing $A$ balls into $1/\eps^2$ bins
with $k$-wise independence when potentially $A\ll 1/\eps^2$, to deal with the
case when $L_0\ll 1/\eps^2$ since then there is no $j$ with $L_0/2^j =
\Theta(1/\eps^2)$. The algorithm of \cite{BJKST02} worked by
estimating the probability that a single bin, say bin 1, is hit.
Since their random variable had constant expectation, the variance was
constant for free.
In our case, the number of non-empty bins is non-constant (it
grows with $A$), so we need to prove a sharp bound on
the variance. Ganguly deals with small $L_0$ via a separate
subroutine, which itself requires $\Omega(\log(1/\eps))$ update time,
and uses space suboptimal
by a $\log(mM)$ factor.

Now we discuss update-only streams. By convention, $L_0$
in the update-only case is typically referred to as $F_0$.
As in our $L_0$ algorithm, we use
a balls-and-bins approach, though with a major difference. Our
key to saving space is that all
$\log(\eps^2N)$ levels {\em share the same bins}, and each bin only
records the deepest level $j$ in which it was hit. Thus, we
can maintain all bins in the algorithm using
$O(\eps^{-2}\log\log(\eps^2N))$
space as opposed to $O(\eps^{-2}\log(\eps^2 N))$.
 An obvious obstacle in our algorithm is that when counting the number
 of bins
 hit at level $j$, our count is obscured by bins that were
hit both at level $j$ and at some deeper level.  Since each bin only
keeps track of the deepest level it was hit in, we lose information
about shallow levels. Our analysis then leads us to a
more general random process, where there are $A$ ``good'' balls and
$B$ ``bad'' balls, and we want to understand the number of ``good
bins'', i.e. bins  hit by
at least one good ball and no bad balls.  We show that
the truly random
process is well-approximated even when all balls are thrown
$k$-wise
independently. The good balls are the distinct items at
level $j$, and the bad ones are those at deeper levels.  As long as
$R/2^j = \Theta(1/\eps^2)$, we have both that (1) $A/B = 1\pm O(\eps)$
with good probability (by Chebyshev's inequality), and (2)
$A = (1\pm O(\eps))F_0/2^r$ (also by Chebyshev's inequality).  Item
(1) allows us to approximate the expected number of good bins as a
function of just $A$, then invert to get $A$.
Item (2) allows us to scale
our estimate for $A$ to recover an estimate for $F_0$.
Our scheme is different from \cite{BJKST02}, which did not
subsample the universe, and based its estimator on the
fraction of hash functions in a $k$-wise independent family which
map at least one ball to bin $1$ (out of $R$ bins). To estimate this
fraction well,
\cite{BJKST02} required $\tO(1/\eps^2)$ update time.  Our update time,
however, is constant.

\subsubsection{Other results: embedding into a normed space and an
  improved PRG}\SectionName{intro-prg}
Dimensionality reduction is a useful technique for mapping a set of
high-dimensional points to a set of low-dimensional points with similar
distance properties. This technique has numerous applications in theoretical
computer science, especially the Johnson-Lindenstrauss embedding \cite{JL84} for
the $L_2$ norm. Viewing the underlying vector of the data stream as a point
in $n$-dimensional space, given two points $a, b \in [M]^n$ in two different
streams, 
one can view our sketches $S_a$, $S_b$ 
as a type of dimensionality reduction, so that $||a-b||_p$ can be
estimated from the sketches $S_a$ and $S_b$. Unfortunately, our sketches (as well
as previous sketches for estimating $L_p$), are not in a normed space, 
and this could restrict the applications of it
as a dimensionality reduction technique. This is because there are many algorithms,
such as nearest-neighbor algorithms, designed for normed spaces. Indyk \cite{Indyk06}
overcomes this for the important case of $L_2$ by doing the following. His
streaming algorithm maintains $Ta$, where $a$ is the vector in the stream, 
and $T$ is an implicitly defined sketching matrix whose entries
are pseudorandomly generated normal random variables. From
$Ta$ and $Tb$, $||Ta-Tb||_2$ gives a $(1\pm \eps)$-approximation to
$||a-b||_2$, and
this gives an embedding into a normed space. The space is
$O(\eps^{-2}\log (nM/(\eps\delta)) \log (n/(\eps \delta)) \log
(1/\delta))$ bits, where
$\delta$ is the desired failure probability.

We reduce the space complexity of this scheme by a $\log(1/\eps)$
factor by replacing
the use of Nisan's PRG \cite{Nisan92} in
Indyk's algorithm
with an improved version of Armoni's PRG
\cite{Armoni98}. 
When writing his original PRG construction, time- and space-efficient
optimal extractors were not known, so his PRG would only improve
Indyk's use of Nisan's PRG when $\eps$ was sufficiently small.  
We show that a recent optimal extractor construction of Guruswami, Umans,
and Vadhan \cite{GUV07} can be modified to be computable in
linear space and thus fed into Armoni's construction to improve his
PRG.  Specifically, the improved Armoni PRG stretches a seed of
$O((S/(\log(S) - \log\log(R) + O(1)))\log R)$ bits to $R$ bits fooling 
space-$S$ algorithms for any $R = 2^{O(S)}$, improving the $O(S\log R)$ 
seed length of Nisan's PRG. As many existing streaming algorithms rely
on Nisan's PRG, using this PRG instead reduces the space complexity
of these algorithms. 

Much of the reason the GUV extractor implementation
described in \cite{GUV07} does not use linear space is its reliance
on Shoup's algorithm \cite{Shoup90} for finding irreducible
polynomials over small finite fields, and in fact most of the
implementation modifications we make are so that the GUV
extractor can avoid all calls to Shoup's algorithm.  

\subsection{Other Previous Work}\label{sec:relatedWork}
Here we discuss other previous work not mentioned above.
$L_0$-estimation in the update-only model 
was first considered by Flajolet and Martin \cite{FM83}, who assumed
the existence of hash functions with properties that are unknown to
exist to obtain a constant-factor approximation. The ideal hash
function assumption was later removed in \cite{AMS99}.
Bar-Yossef et al.\ \cite{BJKST02} provide the best previous
algorithms, described above in \Section{l0-proofsketch}. 
Estan, Varghese, and Fisk \cite{EVF06} give an algorithm which assumes
a random oracle and a $O(1)$-approximation
to $L_0$, and seems to achieve $O(\eps^{-2}
\log N)$ space with $O(\log N)$ update time, though a formal
analysis is not given. There is a previous algorithm for $L_0$-estimation in
the turnstile model due to Cormode
et al.\ \cite{CDIM03}  which
needs to store $O(\eps^{-2})$ random
variables from a $p$-stable distribution for $p=O(\eps/\log(mM))$ and
has $O(\eps^{-2})$ update time, though the precision needed to hold
$p$-stable samples for such small $p$ is $\Omega(\eps^{-1}\log N)$,
making their overall space dependence on $1/\eps$ cubic. Work of
Cormode and Ganguly \cite{CG07} implies an algorithm
with $O(\eps^{-2} \log^2N \log^2(mM))$ space and $O(\log^2 N\log(mM))$
worst-case update time in the turnstile model.

\subsection{Notation}

For integer $z>0$, $[z]$ denotes the set $\{1,\ldots,z\}$.
For our upper bounds we let $[U]$ denote the universe.
That is, upon
receiving an update $(i,v)$ in the stream, we assume
$i\in[U]$.  We can assume $U =
\min\{n,O(m^2)\}$ with at most an additive $O(\log\log n)$ in all
our $L_p$ space upper bounds.  Though this is somewhat standard,
achieving an additive $O(\log\log n)$ as opposed to $O(\log n)$ is
perhaps less well-known, so we include justification
in \Section{small-universe}.  All our space upper and lower bounds are
measured in bits.

We also use $\lsb(x)$ to denote the least significant bit of an
integer $x$ when written in binary.  We note when $x$ fits in a
machine word,
$\lsb(x)$ can be computed in $O(1)$ time
\cite{Brodnik93,FredmanWillard93}.

\section{$L_p$ Estimation $(0 < p < 2)$}\SectionName{lp-est}
Here we describe our space-optimal $L_p$ estimation algorithm
mentioned in \Section{lp-intro}, as well as the approach mentioned in
\Section{intro-prg} of using an improved PRG.


\subsection{An Optimal Algorithm}

We assume $p$ is a fixed constant. 
Some constants in our
asymptotic notation are functions of $p$.  We also assume
$||a||_p > 0$; $||a||_p = 0$ is detected when $A=0$ in
\Figure{lpalg}.  Finally, we assume $\eps \ge 1/\sqrt{m}$.  Otherwise,
the trivial solution of keeping the entire stream in memory requires
$O(m\log(UM)) = O(\eps^{-2}\log(NM)) = O(\eps^{-2}\log(mM))$ space.
The main theorem of this section is the following.

\begin{theorem}\TheoremName{main-optimallp}
Let $0<p<2$ be a fixed real constant. The algorithm of \Figure{lpalg}
uses space $O(\eps^{-2}\log(mM))$ and outputs
$(1\pm\eps)||a||_p$ with probability at
least $2/3$.
\end{theorem}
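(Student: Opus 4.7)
My plan is to analyze the two-stage algorithm suggested in Section~\ref{sec:lp-intro}: we maintain $r = \Theta(1/\eps^2)$ linear sketches $X_j = \sum_{i=1}^n a_i X_{i,j}$, where the $X_{i,j}$ are (discretizations of) i.i.d.\ $p$-stable samples drawn from a $k$-wise independent family with $k = O(\log(1/\eps)/\log\log\log(1/\eps))$. By $p$-stability, if the $X_{i,j}$ were fully independent then each $X_j$ would be distributed as $\|a\|_p \cdot D_j$ for an i.i.d.\ $p$-stable $D_j$. The proof therefore has three ingredients: (i) a constant-factor approximation stage, (ii) a refinement stage that squeezes the estimate to $(1\pm\eps)$, and (iii) bookkeeping to show that storing all counters and the random seeds fits in $O(\eps^{-2}\log(mM))$ bits.

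For (i), I would run the median-of-$|X_j|$ estimator of Indyk on a separate, smaller set of sketches. The task is to show that, with arbitrarily large constant probability, the median of $\Theta(1)$ (or at most $\Theta(1/\eps^2)$) sketches is within a constant factor of $\|a\|_p$ even when the $X_{i,j}$ are only $k$-wise independent. This will follow by computing the first two moments of the empirical CDF of the $|X_j|$'s at two fixed thresholds (one slightly below and one slightly above $\|a\|_p$), noting that these moments depend only on pairs of coordinates and hence require only $k\ge 2$; Chebyshev then pins the median between the two thresholds. Discretizing a $p$-stable variable to $O(\log(mM))$ bits causes only an additive perturbation that is dominated by the constant slack, so the total space for this stage is $O(\eps^{-2}\log(mM))$.

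For (ii), let $A$ be the constant-factor approximation from stage (i), so that $\|a\|_p/A \in [c_1,c_2]$ for absolute constants. I would define an estimator of the form $\hat{L} = (1/r)\sum_j \Psi(X_j/A)$ where $\Psi$ is a carefully chosen bounded, smooth (ideally compactly-supported or rapidly decaying) function whose expectation under a $p$-stable distribution is a known strictly increasing function of its argument; inverting this function and scaling by $A$ then recovers $\|a\|_p$. The main technical lemma---the heart of the paper---is that replacing full independence by $k$-wise independence changes $\expect{\Psi(X_j/A)}$ by at most $O(\eps)$, and similarly controls $\var{\Psi(X_j/A)}$ up to a constant factor. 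I would prove this by approximating $\Psi$ by a low-degree polynomial in the $a_i X_{i,j}$'s on a typical event (using smoothness and boundedness), and then showing that all mixed moments of total degree $\le k$ match those of the truly independent $p$-stable case exactly, up to truncation errors that are polynomially small in $\eps$ for the chosen $k$. Given this moment-matching lemma, a Chebyshev bound over the $r = \Theta(1/\eps^2)$ sketches gives $\hat{L} = (1\pm\eps)\|a\|_p$ with constant probability, which inverts to the desired $(1\pm\eps)$-approximation.

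The main obstacle is exactly this moment-matching/polynomial-approximation lemma for $\Psi$ applied to sums of $p$-stable random variables, since $p$-stables have heavy tails and no finite variance for $p<2$: the smooth truncation of $\Psi$ has to be chosen so that its Taylor or Fourier approximation is accurate on the bulk of the distribution while contributing only an $O(\eps)$ tail, and the degree $k$ of the approximation must fit within the affordable independence. Once this lemma is in place, the space accounting is routine: the seed for the $k$-wise family is $O(k\log(mM/\eps)) = o(\eps^{-2}\log(mM))$, each of the $O(1/\eps^2)$ counters is stored to $O(\log(mM))$ bits (enough to represent sums of integer updates in $[-M,M]$ times discretized $p$-stables), and the final arithmetic to compute $\hat{L}$ from the counters is offline. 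Combining the two stages by a union bound with constant failure probability each yields the $2/3$ success probability claimed in Theorem~\ref{thm:main-optimallp}.
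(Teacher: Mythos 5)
Your stages (ii) and (iii) track the paper's actual proof closely: the paper's refinement estimator is exactly of your form $\frac 1r\sum_j \Psi(A_j/A)$ with $\Psi=\cos$ (whose expectation under a $p$-stable variable is $e^{-|B|^p}$ by the characteristic function), the key lemma is precisely the moment-matching statement you flag as the heart of the argument (proved by truncating the heavy-tailed summands, an inclusion--exclusion correction over the coordinates with $|a_iX_i|>1$, and Taylor expansion using derivative bounds from Cauchy's integral formula), and the space accounting is as you describe. One cosmetic remark: a compactly supported $\Psi$ is incompatible with the holomorphy the paper's lemma needs for its Cauchy-integral derivative bounds; what is actually required is that $\Psi$ be entire, bounded on $\R$, and of growth $e^{O(1+|\Im(z)|)}$.

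The genuine gap is in stage (i). You assert that the constant-factor median estimator can be analyzed because "these moments depend only on pairs of coordinates and hence require only $k\ge 2$." This conflates two different sources of randomness. Pairwise independence \emph{across} the sketches $j$ does control the variance of the empirical CDF $\frac 1r\sum_j \indicate[|X_j|\le t]$, but the \emph{first} moment $\prob{|X_j|\le t}$ for a single sketch depends jointly on \emph{all} the within-sketch variables $X_{1,j},\ldots,X_{n,j}$, and under merely pairwise (or low-order) independence of those variables there is no reason this probability is close to its value under full independence --- that is exactly the problem the whole paper is built to solve. Moreover, the indicator $\indicate[|x|\le t]$ is not smooth, so it cannot be fed into the polynomial/moment-matching machinery directly. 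The paper's \Lemma{median-lp} instead applies \Corollary{mcor} (the same main lemma as in stage (ii), with $k$ a sufficiently large constant, not $2$) to a carefully constructed smooth surrogate $f(x)=-\int_{-\infty}^x \sin^4(y)/y^3\,dy$ that is even, strictly positive, and strictly decreasing away from $0$; sandwiching $\E[f(CZ)]$ at two scales $C,C'$ and applying Chebyshev across the $j$'s then pins the median of the $|A_j|$ between constant multiples of $\|a\|_p$. Without this (or some substitute for controlling the per-sketch distribution under limited independence), your stage (i) does not go through, and stage (ii) cannot start since it presupposes the constant-factor estimate $A$.
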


\begin{figure*}
\begin{center}
\fbox{
\parbox{6.375in} {
\small
\begin{enumerate}
\addtolength{\itemsep}{-1mm}
\item Maintain $A_j = \sum_{i=1}^n a_iX_{i,j}$ for $j\in[r]$, $r =
  \Theta(1/\eps^2)$. Each $X_{i,j}$ is distributed according to
$\mathcal{D}_p$.  For fixed $j$, the $X_{i,j}$ are $k$-wise
  independent with $k = \Theta(\log(1/\eps)/\log\log\log(1/\eps))$.
  For $j\neq j'$, the seeds used to generate the $\{X_{i,j}\}_{i=1}^n$
  and $\{X_{i,j'}\}_{i=1}^n$ are pairwise independent.
\item Let $A = \mathrm{median}\{|A_j|\}_{j=1}^r$. Output
$A\cdot \left(-\ln\left(\frac{1}{r}\sum_{j=1}^r
    \cos\left(\frac{A_j}{A}\right) \right) \right)^{1/p}$.
\end{enumerate}
}}
\end{center}
\vspace{-.2in}
\caption{$L_p$ estimation algorithm pseudocode, $0<p<2$}\FigureName{lpalg}
\end{figure*}

To understand the first step of \Figure{lpalg}, we recall
the definition of a $p$-stable distribution.

\begin{definition}[Zolotarev \cite{Zolotarev86}]
For $0<p<2$, there exists a probability distribution $\mathcal{D}_p$
called the {\em $p$-stable distribution} with $\E[e^{itX}] = e^{-|t|^p}$
for $X\sim \mathcal{D}_p$.
For any integer $n>0$ and vector $a\in\mathbb{R}^n$, if
$X_1,\ldots,X_n \sim \mathcal{D}_p$ are independent, then
$\sum_{i=1}^n a_i X_i \sim ||a||_p\mathcal{D}_p$.
\end{definition}


To prove \Lemma{mainprop}, which is at the heart of the correctness of
our algorithm, we use the following lemma.

\begin{lemma}[Nolan {\cite[Theorem 1.12]{Nolan09}}]
For fixed $0<p<2$, the probability density function of the $p$-stable
distribution is $\Theta(|x|^{-p-1})$.
\end{lemma}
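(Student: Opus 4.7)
Since $f_p$ is continuous with $f_p(0) = \Gamma(1/p)/(p\pi) > 0$ while $|x|^{-p-1}$ diverges at the origin, I read the lemma as a statement about the tail behavior $|x|\to\infty$ (for bounded $|x|$ the $\Theta$ bound is immediate from continuity and positivity of $f_p$). My starting point is Fourier inversion: since the characteristic function $\varphi(t) = e^{-|t|^p}$ is real and even, the density is real and even with
$$f_p(x) \;=\; \frac{1}{\pi}\int_0^\infty e^{-t^p}\cos(tx)\,dt,$$
so by symmetry I may restrict to $x > 0$.

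The plan is to extract the leading tail term. Substituting $u = tx$ gives
$$f_p(x) \;=\; \frac{1}{\pi x}\int_0^\infty e^{-(u/x)^p}\cos(u)\,du,$$
and expanding $e^{-(u/x)^p} = 1 - (u/x)^p + O((u/x)^{2p})$ formally produces
$$f_p(x) \;=\; -\frac{1}{\pi x^{p+1}}\int_0^\infty u^{p}\cos(u)\,du \;+\; O(x^{-2p-1}),$$
because the constant $1$ contributes nothing to the Fourier transform away from the origin. Using the classical Mellin identity $\int_0^\infty u^{a}\cos(u)\,du = \Gamma(a+1)\cos(\pi(a+1)/2)$ (interpreted by analytic continuation in $a$) together with $\cos(\pi(p+1)/2) = -\sin(\pi p/2)$, the leading coefficient is $C_p := \Gamma(p+1)\sin(\pi p/2)/\pi$. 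Because $0 < p < 2$ forces $0 < \pi p/2 < \pi$, one has $\sin(\pi p/2) > 0$, hence $C_p > 0$; this yields both the upper and lower bounds in the $\Theta(|x|^{-p-1})$ claim.

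The main obstacle is justifying this term-by-term expansion rigorously, since the integrals $\int_0^\infty u^{pk}\cos(u)\,du$ are not absolutely convergent and the Taylor expansion of $e^{-(u/x)^p}$ is not globally valid in $u$. I would handle this in one of two ways: (i) integrate by parts $\lceil 2/p \rceil$ times in the original integral $\int_0^\infty e^{-t^p}\cos(tx)\,dt$ until all remaining integrands decay fast enough for dominated convergence, taking care that boundary terms at $t = 0$ vanish (automatic for $p > 1$; for $p \le 1$ one inserts a cutoff $\eta > 0$ near the origin and lets $\eta \to 0^+$ after extracting the leading power of $1/x$); or (ii) deform the contour $[0,\infty)$ to a ray $e^{i\theta}[0,\infty)$ for a small angle $\theta > 0$ on which $-t^p$ still has negative real part, so that the rotated integral converges absolutely and the leading $x^{-p-1}$ behavior can be extracted via a gamma-function identity, with Cauchy's theorem justifying the rotation. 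Either route yields $f_p(x) = C_p\, x^{-p-1} + O(x^{-2p-1})$, completing the tail estimate.
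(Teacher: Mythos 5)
The paper does not prove this lemma at all --- it is imported verbatim from Nolan's book (the bracketed citation is the ``proof''), so there is no internal argument to compare against; I can only assess your sketch on its own terms. On those terms it is sound and is the standard derivation (essentially Bergstr\"om's asymptotic expansion, also in Feller vol.~II): Fourier inversion of $e^{-|t|^p}$, extraction of the leading tail term, and the Mellin identity giving the coefficient $C_p = \Gamma(p+1)\sin(\pi p/2)/\pi$, which is indeed the correct (and positive) constant for $0<p<2$. You are also right on the two points that need care. First, the lemma as literally stated cannot hold near the origin (the density is finite there), and reading it as ``bounded, and $\Theta(|x|^{-p-1})$ as $|x|\to\infty$'' is exactly consistent with how the paper uses it: in the proof of \Lemma{mainprop} only the upper bound $f_p(x)=O(\min(1,|x|^{-p-1}))$ is ever invoked, in the bounds on $\E[U_i]$ and in \Equation{momentcalc}, where near $x=0$ the bound is trivial because $f_p$ is bounded. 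Second, the term-by-term expansion genuinely does need regularization, and both of your proposed repairs (iterated integration by parts with a cutoff at $0$ for $p\le 1$, or rotating the contour of the $e^{\pm itx}$ pieces into the half-planes where they decay) are standard and work; the second yields the full convergent/asymptotic series $f_p(x) = \pi^{-1}\sum_{k\ge 1}\frac{(-1)^{k+1}}{k!}\Gamma(pk+1)\sin(k\pi p/2)\,x^{-pk-1}$, whose $k=1$ term matches your $C_p x^{-p-1}$ and whose remainder is $O(x^{-2p-1})$ as you claim. The only cosmetic gap is that the two-sided bound for intermediate $x$ (say $\delta \le |x| \le x_0$) needs strict positivity of $f_p$ there, which holds since symmetric stable densities are continuous and everywhere positive; you gesture at this and it is fine.
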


Now we prove our main technical lemma.

\begin{lemma}\LemmaName{mainprop}
Let $n$ be a positive integer and $0<\eps<1$. Let $f(z)$
be a function holomorphic on the complex plane with $|f(z)| =
e^{O(1+|\Im(z)|)}$, where $\Im(z)$ denotes the imaginary part of $z$.
Let
$k=\log(1/\eps)/\log\log\log(1/\eps)$.  Let
$a_1,\ldots,a_n$ be real numbers with $||a||_p = \left(\sum_i
  |a_i|^p\right)^{1/p} = O(1)$.  Let $X_i$ be a $3Ck$-independent
family of $p$-stable random variables for $C$ a suitably large
even constant.  Let $Y_i$ be a fully independent family of $p$-stable
random variables.  Let $X=\sum_i a_i X_i$ and $Y=\sum_i a_i Y_i$.
Then $E[f(X)] = E[f(Y)]+O(\eps)$.
\end{lemma}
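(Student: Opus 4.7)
The plan is to use the growth condition on $f$ to reduce the claim to a comparison of characteristic functions, and then show these agree up to $O(\eps)$ using only low-order moments. Because $f$ is entire with $|f(z)| = e^{O(1+|\Im(z)|)}$, a Paley--Wiener style argument gives a representation $f(x) = \int_{-A}^A e^{itx}\,d\mu(t)$ with $A = O(1)$ for some measure $\mu$ of bounded total variation, yielding
\[
\E[f(X)] - \E[f(Y)] \,=\, \int_{-A}^A \bigl(\E[e^{itX}] - \E[e^{itY}]\bigr)\,d\mu(t).
\]
So it suffices to prove $|\E[e^{itX}] - \E[e^{itY}]| = O(\eps)$ uniformly for $|t|\le A$.

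I would then tame the heavy tails by truncation. Choose $T = \Theta(\eps^{-1/p})$, set $\bar X_i = X_i \cdot \mathbf{1}_{|a_i X_i|\le T}$, and define $\bar Y_i$ analogously. Using only the one-dimensional density tail $\Theta(|x|^{-p-1})$ together with a union bound (which needs no independence), $\Pr[\exists i: \bar X_i \ne X_i] = O(\norm{a}_p^p/T^p) = O(\eps)$. Since $|e^{itx}|\le 1$, replacing $X_i$ by $\bar X_i$ and $Y_i$ by $\bar Y_i$ changes the respective characteristic functions by only $O(\eps)$. Standard truncated-moment calculations then give $\E[|a_i \bar X_i|^q] = O(|a_i|^p T^{q-p})$ for each $q > p$.

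Next, writing $g_i = e^{ita_i\bar X_i} - 1$ and expanding, the characteristic function of the truncated sum becomes
\[
\E\Bigl[\prod_i(1+g_i)\Bigr] \,=\, \sum_{S\subseteq[n]} \E\Bigl[\prod_{i\in S}g_i\Bigr].
\]
For $|S|\le 3Ck$ each summand depends on at most $3Ck$ of the $X_i$, so $3Ck$-wise independence forces its expectation to coincide with the fully-independent value, and those terms cancel between $X$ and $Y$. The remaining error, after the triangle inequality applied to both sides, is bounded by
\[
2\sum_{|S|>3Ck}\bigl|\E[\textstyle\prod_{i\in S}g_i]\bigr|,
\]
which I would control using the two complementary bounds $|g_i|\le 2$ and $|g_i|\le|ta_i\bar X_i|$, the truncated-moment bound above, and the fact that $\sum_{|S|=m}\prod_{i\in S}|a_i|^p \le \norm{a}_p^{pm}/m!$ contributes substantial factorial savings once $m$ is large.

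The hard part is this final tail bound. The truncation $T = \Theta(\eps^{-1/p})$ is a negative power of $\eps$, so truncated moments of order $q$ scale as $T^{q-p}$ and naive bounds on individual $|S|=m$ terms blow up in $m$; one must combine the factorial saving from the symmetric-sum inequality with the smallness of $|g_i|\le|ta_i\bar X_i|$ on a ``typical'' coordinate, while keeping all manipulations valid under mere $3Ck$-wise independence on the $k$-wise side. The choice $k = \log(1/\eps)/\log\log\log(1/\eps)$ is calibrated exactly so that after this accounting the tail is genuinely $O(\eps)$; a smaller $k$ (say $\log(1/\eps)/\log\log(1/\eps)$) would not leave sufficient factorial slack to dominate the $T^{O(k)}$ blow-up from the heavy-tailed moments.
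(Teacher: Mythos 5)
Your route---reduce to characteristic functions via a Fourier representation of $f$, truncate, and expand $\E[\prod_i(1+g_i)]=\sum_S\E[\prod_{i\in S}g_i]$---is genuinely different from the paper's, but it breaks at exactly the step you flag as ``the hard part,'' and the gap is not one of calibration but of available tools. Every device you list for controlling $\sum_{|S|>3Ck}\bigl|\E[\prod_{i\in S}g_i]\bigr|$ (the truncated moment bounds $\E[|a_i\bar X_i|^q]=O(|a_i|^pT^{q-p})$, the bound $|g_i|\le|ta_i\bar X_i|$, the symmetric-function inequality $\sum_{|S|=m}\prod_{i\in S}|a_i|^p\le ||a||_p^{pm}/m!$) is only applicable after you factor $\E[\prod_{i\in S}|g_i|]$ into $\prod_{i\in S}\E[|g_i|]$. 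That factorization needs $|S|$-wise independence, and the tail consists precisely of the sets with $|S|>3Ck$, where you have no independence guarantee whatsoever. Pointwise, $\sum_S\prod_{i\in S}|g_i|=\prod_i(1+|g_i|)$ can be exponential in $n$, so there is no distribution-free fallback: under a merely $3Ck$-wise independent family the tail could in principle be enormous. (On the fully independent side the tail is fine---indeed one does not even need the truncation at $T=\eps^{-1/p}$, since $\E[\min(2,|ta_iX_i|)]=O(|t|^p|a_i|^p)$ directly; note that for $p<1$ your truncated first moments are as large as $|a_i|^p\eps^{-(1-p)/p}$, so the accounting as you set it up would not close even there.) What is missing is a mechanism that expresses the error as the expectation of a polynomial of degree at most $3Ck$ in functions of the individual $X_i$. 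This is exactly what the paper's inclusion--exclusion over pairs $(S,T)$ of truncation events accomplishes: the quantities it must estimate under limited independence---$\E\bigl[\binom{D}{Ck+s+1}\bigr]$ with $D=\sum_iU_i$ and $Ck+s+1\le 2Ck+1$, the products $\prod_{i\in S\cup T}U_i$, and the Taylor-remainder moments $\E[(X')^{Ck}]$---are all of degree at most $3Ck$, so $3Ck$-independence pins them to their fully independent values. Without some analogue of that reorganization your argument cannot be completed.

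A secondary gap is the opening reduction. A bounded entire function with $|f(z)|=e^{O(1+|\Im(z)|)}$ need not be the Fourier--Stieltjes transform of a finite measure on a compact interval: the sine integral $\mathrm{Si}(z)=\int_0^z(\sin w/w)\,dw$ satisfies your hypotheses, yet its distributional Fourier transform is a principal-value distribution supported on $[-1,1]$, not a measure, so the representation $f(x)=\int_{-A}^Ae^{itx}\,d\mu(t)$ with $\mu$ of bounded variation fails for it. The representation does happen to hold for the specific functions the paper later feeds into this lemma ($\cos$, and an antiderivative of $\sin^4(y)/y^3$ whose derivative has mean zero), but you are proving the general statement, so this step needs either a correct general argument or a restriction of the lemma's hypotheses.
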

\begin{proof}
The basic idea of the proof will be to show that the expectation can
be computed to within $O(\eps)$ just by knowing that the $X_i$'s
are $k$-independent.  Our main idea is to approximate $f$ by a Taylor
series and use the fact that we know the moments of the $X_i$.  The
problem is that the tails of the variables $X_i$ are too wide, and
hence the moments are not defined.  In order to solve this we will
need to truncate some of them in order to get finite moments.

First, we use Cauchy's integral formula to bound the high-order
derivatives of $f$.

\begin{lemma}\LemmaName{derbound}
Let $f^{(\ell)}$ denote the $\ell$th derivative of $f$.  Then,
$|f^{(\ell)}|
= e^{O(\ell)}$ on $\R$.
\end{lemma}
\begin{proof}
For $x\in R$, let $C$ be the circle of radius $\ell$
centered at $x$ in the complex
plane. By Cauchy's integral
formula,
\begin{eqnarray*}
|f^{(\ell)}(x)| &=& \left|\frac{\ell!}{2\pi i}\oint_C
\frac{f(z)}{(z-x)^{\ell + 1}}dz\right|\\
&\le& \frac{\ell!}{2\pi}\int_0^{2\pi}\left|\frac{e^{O(1 + |\ell\cdot
      \sin(t)|)}}{(\ell e^{it})^{\ell +
      1}} \ell e^{it}dt\right|\\
&\le&\frac{\ell!e^{O(\ell)}}{2\pi
  \ell^{\ell}}\int_0^{2\pi}\frac{1}{\left|e^{i\ell t}\right|}dt\\
&\le&\frac{e^{O(\ell)}}{2\pi
  }\int_0^{2\pi}dt\\
&=& e^{O(\ell)}.
\end{eqnarray*}
\end{proof}

Now, define the random variable
$$
B_i = \begin{cases} 0 \ & \textrm{if} \ |a_i X_i| >1 \\ 1 \
  &\textrm{otherwise} \end{cases}
$$
Let
$$
U_i = 1-B_i = \begin{cases} 1 \ & \textrm{if} \ |a_i X_i| >1 \\ 0 \
  &\textrm{otherwise} \end{cases}
$$
and let
$$
X_i' = B_i X_i = \begin{cases} 0 \ & \textrm{if} \ |a_i X_i| >1 \\ X_i
  \ &\textrm{otherwise} \end{cases}
$$
Lastly, define the random variable
$$
D = \sum_i U_i.
$$
We note a couple of properties of these.  In particular
$$
\E[U_i] = O\left(\int_{|a_i|^{-1}}^\infty x^{-1-p}dx \right) =
O\left(|a_i|^p \right).
$$
We would also like to bound the moments of $X_i'$.  In particular we
note that $\E[(a_i X_i')^\ell]$ is 1 for $\ell=0$, by symmetry is 0
when $\ell$ is odd, and otherwise is
\begin{equation}\EquationName{momentcalc}
O\left(\int_0^{|a_i|^{-1}} (a_i x)^\ell x^{-p-1} \right) =
O\left(|a_i|^{\ell} |a_i|^{-\ell+p} \right) = O\left( |a_i|^p \right)
\end{equation}
where the implied constant above can be chosen to hold independently
of $\ell$ (in fact we can pick a better constant if $\ell$ is large).

We will approximate $\E[f(X)]$ as
\begin{equation}\EquationName{approx}
\E\left[\sum_{S,T}\left((-1)^{|T|} \left( \prod_{i\in
        S}U_i\right)\left( \prod_{i\in T} U_i\right)f\left(\sum_{i\in
        S} a_i X_i + \sum_{i\not\in S} a_i X_i' \right)
  \right)\right],
\end{equation}
where the outer sum is over pairs of subsets $S,T\subseteq [n]$, with
$|S|,|T| \leq Ck$, and $S$ and $T$ disjoint.  Call the function inside
the expectation in \Equation{approx}
$F\left(\overrightarrow{X}\right)$.  We would like to bound the error
in approximating $f(X)$ by $F\left(\overrightarrow{X}\right)$.  Fix
values of the $X_i$, and let $O$ be the set of $i$ so that $U_i=1$.  We
note that
$$
F\left(\overrightarrow{X}\right) = \sum_{\substack{S \subseteq O \\
    |S|\leq Ck}} \sum_{\substack{T\subseteq O \backslash S\\ |T|\leq
    Ck}} (-1)^{|T|} f\left(\sum_{i\in S} a_i X_i + \sum_{i\not\in S}
  a_i X_i' \right).
$$
Notice that other than the $(-1)^{|T|}$ term, the expression inside
the sum does not depend on $T$.  This means that if $0<|O\backslash
S|\leq Ck$ then the inner sum is 0, since $O\backslash S$ will have exactly
as many even subsets as odd ones.  Hence if $|O|\leq Ck$, we have that
$$
F\left(\overrightarrow{X}\right) = \sum_{S=O} f\left(\sum_{i\in S} a_i
  X_i + \sum_{i\not\in S} a_i X_i' \right) = f\left(\sum_{i\in O} a_i
  X_i + \sum_{i\not\in O} a_i X_i' \right) = f(X).
$$
Otherwise, after fixing $O$ and $S$, we can sum over possible values
of $t=|T|$ and obtain:
$$
\sum_{\substack{T\subseteq O \backslash S\\ |T|\leq Ck}} (-1)^{|T|} =
\sum_{t=0}^{Ck} (-1)^t \binom{|O\backslash S|}{t}.
$$
In order to bound this we use the following Lemma:
\begin{lemma}\LemmaName{incexcl}
For integers $A\geq B+1>0$ we have that $\sum_{i=0}^B (-1)^i
\binom{A}{i}$ and $\sum_{i=0}^{B+1} (-1)^i \binom{A}{i}$ have
different signs, with the latter sum being 0 if $A=B+1$.
\end{lemma}
\begin{proof}
First suppose that $B < A/2$.  We note that since the terms in each
sum are increasing in $i$, each sum has the same sign as its last
term, proving our result in this case.  For $B\geq A/2$ we note that
$\sum_{i=0}^A (-1)^i \binom{A}{i}=0$, and hence letting $j=A-i$, we
can replace the sums by $(-1)^{A+1}\sum_{j=0}^{A-B-1} (-1)^j
\binom{A}{j}$ and $(-1)^{A+1}\sum_{j=0}^{A-B-2} (-1)^j \binom{A}{j}$,
reducing to the case of $B'=A-B-1 < A/2$.
\end{proof}
Using \Lemma{incexcl}, we note that $\sum_{t=0}^{Ck} (-1)^t
\binom{|O\backslash S|}{t}$ and $\sum_{t=0}^{Ck+1} (-1)^t
\binom{|O\backslash S|}{t}$ have different signs.  Therefore we have
that
$$
\left| \sum_{t=0}^{Ck} (-1)^t \binom{|O\backslash S|}{t} \right| \leq
\binom{|O\backslash S|}{Ck+1} = \binom{D-|S|}{Ck+1}.
$$
Recalling that $|f|$ is bounded, we are now ready to bound $\left|
  F\left(\overrightarrow{X}\right)-f(X)\right|$.  Recall that if
$D\leq Ck$, this is $0$, and otherwise we have that
\begin{align*}
\left|F\left(\overrightarrow{X}\right)-f(X)\right| \leq &
O\left(1+\sum_{\substack{S \subseteq O \\ |S|\leq Ck}}
  \binom{D-|S|}{Ck+1}\right) \\
= & O\left(\sum_{s=0}^{Ck} \binom{D}{s} \binom{D-s}{Ck+1} \right)\\
= & O\left(\sum_{s=0}^{Ck} \binom{D}{Ck+s+1}\binom{Ck+s+1}{s}\right)
\\
\leq & O\left(\sum_{s=0}^{Ck} 2^{Ck+s+1} \binom{D}{Ck+s+1}\right).
\end{align*}
Therefore we can bound the error as
$$
\left|\E\left[F\left(\overrightarrow{X}\right)\right] - \E[f(X)]\right| = O\left( \sum_{s=0}^{Ck} 2^{Ck+s+1} \E\left[\binom{D}{Ck+s+1}\right]\right).
$$
We note that
$$
\binom{D}{Ck+s+1} = \sum_{\substack{I\subseteq [n] \\ |I| = Ck+s+1}}
\prod_{i\in I} U_i.
$$
Hence by linearity of expectation and $2Ck+1$-independence,
\begin{align*}
\E\left[ \binom{D}{Ck+s+1} \right] & = \sum_{\substack{I\subseteq [n]
    \\ |I| = Ck+s+1}} \E\left[ \prod_{i\in I} U_i\right]\\
& = \sum_{\substack{I\subseteq [n] \\ |I| = Ck+s+1}} \prod_{i\in I}
O(|a_i|^p)\\
& = \sum_{\substack{I\subseteq [n] \\ |I| = Ck+s+1}}\left( \prod_{i\in
    I} |a_i|^p\right) e^{O(Ck)}.
\end{align*}
We note that when this sum is multiplied by $(Ck+s+1)!$, these terms
all show up in the expansion of $\left( ||a||_p^p\right)^{Ck+s+1}$.
In fact, more generally for any integer $0\leq t \leq n$
\begin{equation}\EquationName{sympoleq}
\sum_{\substack{I\subseteq [n] \\ |I| = t}} \prod_{i\in I} |a_i|^p
\leq \frac{||a||_p^{tp}}{t!}
\end{equation}
Hence
$$
\E\left[\binom{D}{Ck+s+1}\right] = \frac{e^{O(Ck)}}{(Ck+s+1)!} = e^{O(Ck)} (Ck)^{-Ck-s}.
$$
Therefore we have that
\begin{align*}
\left|\E\left[F\left(\overrightarrow{X}\right)\right] - \E[f(X)]\right| & = O\left( \sum_{s=0}^{Ck} e^{O(Ck)}(Ck)^{-Ck-s}\right)\\
& \leq e^{O(Ck)}(k)^{-Ck}\\
& = \exp\left(-Ck\log k + O(Ck) \right)\\
& = \exp\left(\frac{-C\log(1/\eps) \log\log(1/\eps) }{\log\log\log(1/\eps)} + O(k)\right)\\
& = O(\eps).
\end{align*}
Hence it suffices to approximate
$\E\left[F\left(\overrightarrow{X}\right)\right ]$.

Let
$$
F\left(\overrightarrow{X}\right) = \sum_{\substack{S,T\subseteq [n] \\ |S|,|T| \leq Ck \\ S \cap T = \emptyset}} F_{S,T}\left(\overrightarrow{X}\right),
$$
where
$$
F_{S,T}\left(\overrightarrow{X}\right) = (-1)^{|T|} \left(\prod_{i\in S\cup T} U_i\right) f\left(\sum_{i\in S} a_i X_i + \sum_{i\not\in S} a_i X_i' \right).
$$
We will attempt to compute the conditional expectation of $F_{S,T}\left(\overrightarrow{X}\right)$, conditioned on the values of $X_i$ for $i\in S \cup T$.  It should be noted that the independence on the $X_i$'s is sufficient that the values of the $X_i$ for $i\in S\cup T$ are completely independent of one another, and that even having fixed these values, the other $X_i$ are still $Ck$-independent.

We begin by making some definitions.  Let $R=[n]\backslash (S\cup T)$.  Having fixed $S$, $T$, and the values of $X_i$ for $i\in S \cup T$, we let $c=\sum_{i\in S} a_i X_i$ and let $X'=\sum_{i\in R}a_i X_i'$. We note that unless $U_i=1$ for all $i\in S\cup T$, that $F_{S,T}\left(\overrightarrow{X}\right)=0$, and otherwise that
$$
F_{S,T}\left(\overrightarrow{X}\right) = f(c+X').
$$
This is because if $U_i=1$ for some $i\in T$, then $X_i'=0$.
Let $p_c(x)$ be the Taylor series for $f(c+x)$ about $x=0$ truncated
so that its highest degree term is degree $Ck-1$.  We will attempt to
approximate $\E[f(c+X')]$ by $p_c(X')$.  By Taylor's theorem,
\Lemma{derbound}, and the fact that $C$ is even,
\begin{equation}\EquationName{taylorerror}
|p_c(x) - f(c+x)| \leq \frac{|x|^{Ck}e^{O(Ck)}}{(Ck)!} =
\frac{x^{Ck}e^{O(Ck)}}{(Ck)!}.
\end{equation}
We note that $\E[p_c(X')]$ is determined simply by the independence
properties of the $X_i$ since it is a low-degree polynomial in
functions of the $X_i$.

We now attempt to bound the error in approximating $f(x+c)$ by
$p_c(x)$.  In order to do so we will wish to bound $\E[(X')^{Ck}]$.
Let $\ell=Ck$.  We have that $\E[(X')^\ell] = \E\left[\left(\sum_{i\in
      R} a_i X_i' \right)^\ell \right]$.  Expanding this out and using
linearity of expectation, yields a sum of terms of the form
$\E\left[\prod_{i\in R} (a_i X_i')^{\ell_i}\right],$ for some
non-negative integers $\ell_i$ summing to $\ell$.  Let $L$ be the set
of $i$ so that $\ell_i>0$.  Since $|L|\leq \ell$ which is at most the
degree of independence, \Equation{momentcalc} implies that the
above expectation is $\left(\prod_{i\in L}|a_i|^p\right) e^{O(|L|)}$.
Notice that the sum of the coefficients in front of such terms with a
given $L$ is at most $|L|^\ell$.  This is because for each term in the
product, we need to select an $i\in L$.  \Equation{sympoleq}
implies that summing $\prod_{i\in L}|a_i|^p$ over all subsets $L$ of
size, $s$, gives at most $\frac{||a||_p^{ps}}{s!}$.  Putting everything
together we find that:
\begin{align*}
\E\left[(X')^\ell\right] \leq & \sum_{s=1}^\ell \frac{s^\ell
  e^{O(s)}}{s!} = \sum_{s=1}^\ell \exp\left(\ell\log(s) - s\log s
  +O(s) \right).
\end{align*}
The summand (ignoring the $O(s)$) is maximized when $\frac{\ell}{s} =
\log(s)+1$.  This happens when $s =O\left(\frac{\ell}{\log
    \ell}\right).$  Since the sum is at most $\ell$ times the biggest
term, we get that
$$
\E\left[(X')^\ell\right] \leq \exp\left( \ell \log (\ell) - \ell
  \log\log (\ell) + O(\ell) \right).
$$
Therefore we have that
\begin{align*}
|\E[f(c+X')] - \E[p_c(X')]| & \leq \E\left[\frac{(X')^\ell
    e^{O(\ell)}}{\ell!} \right]\\
& \leq \exp\left(\ell\log(\ell) - \ell\log\log(\ell) - \ell\log(\ell)
  + O(\ell) \right)\\
& = \exp\left( -\ell\log\log(\ell) +O(\ell)\right) \\
& = \exp\left(
  \frac{-C\log(1/\eps)\log\log\log(1/\eps)}{\log\log\log(1/\eps)}
  + o(\log(\eps))\right)\\
& = \exp\left( -(C+o(1))\log(1/\eps)\right) = O(\eps).
\end{align*}

So to summarize:
$$
\E[f(X)] = \E\left[F\left(\overrightarrow{X}\right)\right] +
O(\eps).
$$
Now,
\begin{align*}
\E\left[F\left(\overrightarrow{X}\right)\right] & = \sum_{\substack{S,T\subseteq [n] \\ |S|,|T| \leq Ck \\ S \cap T = \emptyset}} \E\left[F_{S,T}\left(\overrightarrow{X}\right)\right]\\
& = \sum_{\substack{S,T\subseteq [n] \\ |S|,|T| \leq Ck \\ S \cap T = \emptyset}}(-1)^{|T|} \int_{\{x_i\}_{i\in S\cup T}}\left( \prod_{i\in S\cup T} U_i\right) \E[f(c+X')] dX_i(x_i)\\
& = \sum_{\substack{S,T\subseteq [n] \\ |S|,|T| \leq Ck \\ S \cap T = \emptyset}} (-1)^{|T|}\int_{\{x_i\}_{i\in S\cup T}}\left( \prod_{i\in S\cup T} U_i\right)\left( \E[p_c(X')]+O(\eps)\right) dX_i(x_i).
\end{align*}
We recall that the term involving $\E[p_c(X')]$ is entirely determined by the $3Ck$-independence of the $X_i$'s.  We are left with an error of magnitude
\begin{align*}
& O(\eps)\cdot\left(\sum_{\substack{S,T\subseteq [n] \\ |S|,|T| \leq Ck \\ S \cap T = \emptyset}} (-1)^{|T|}\int_{\{x_i\}_{i\in S\cup T}}\left( \prod_{i\in S\cup T} U_i\right) dX_i(x_i)\right)\\
\leq & O(\eps)\cdot\left(\sum_{\substack{S,T\subseteq [n] \\ |S|,|T| \leq Ck \\ S \cap T = \emptyset}} \E\left[\prod_{i\in S\cup T} U_i\right]\right)\\
\leq & O(\eps)\cdot\left(\sum_{\substack{S,T\subseteq [n] \\ |S|,|T| \leq Ck \\ S \cap T = \emptyset}} \left(\prod_{i\in S\cup T} |a_i|^p\right) e^{O(|S|+|T|)}\right).
\end{align*}
Letting $s=|S|+|T|$, we change this into a sum over $s$.  We use
\Equation{sympoleq} to deal with the product. We also note that given
$S\cup T$, there are at most $2^s$ ways to pick $S$ and $T$.  Putting
this together we determine that the above is at most
\begin{align*}
O(\eps)\cdot \left(\sum_{s=0}^{2Ck} 2^s
  \left(\frac{||a||_p^{ps}}{s!}\right)e^{O(s)} \right) 
= & O(\eps) \cdot \left( \sum_{s=0}^{2Ck} \frac{O(1)^s}{s!}
\right) \\
= & O(\eps).
\end{align*}

Hence the value of $\E[f(X)]$ is determined up to $O(\eps)$.

\end{proof}

The following is a corollary of \Lemma{mainprop} which is more
readily applicable.

\begin{corollary}\CorollaryName{mcor}
Let $n$ be a positive integer and $0<\eps<1$.
Let $f(z)$ be a holomorphic function on $\C$ so that $|f(z)| =
e^{O(1+|\Im(z)|)}$.
Let $k=c\log(1/\eps)/\log\log\log(1/\eps)$ for a sufficiently large
constant $c>0$.
  Let $a_i$ be
real numbers for $1\leq i \leq n$.  Let $C>0$ be a real number so that
$||a||_p = O(C)$.  Let $X_i$ be a $k$-wise independent family of $p$-stable
random variables, $Z$ be a single
$p$-stable random variable, and
$X=\sum_i a_i X_i$. Then, $\E[f(X/C)] = \E[f(||a||_p Z/C)] +
O(\eps)$.
\end{corollary}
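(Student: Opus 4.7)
The plan is to derive this corollary from \Lemma{mainprop} by a straightforward rescaling, absorbing the constant in the independence parameter.

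First I would set $b_i \defeq a_i/C$ so that $\|b\|_p = \|a\|_p/C = O(1)$, matching the hypothesis of \Lemma{mainprop}. Then $X/C = \sum_i b_i X_i$, and by $p$-stability the fully independent surrogate $\sum_i b_i Y_i$ (where the $Y_i$ are i.i.d. $p$-stable) is distributed as $\|b\|_p Z = (\|a\|_p/C)\, Z$. Hence the target identity $\E[f(X/C)] = \E[f(\|a\|_p Z/C)] + O(\eps)$ is exactly the conclusion of \Lemma{mainprop} applied with coefficients $b_i$ and the same function $f$ (which still satisfies the growth bound $|f(z)| = e^{O(1+|\Im(z)|)}$, since this condition is a property of $f$ alone and does not depend on the coefficients).

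The only thing to check is that the independence hypothesis lines up. \Lemma{mainprop} requires $3C_0 k_0$-wise independence, where $k_0 = \log(1/\eps)/\log\log\log(1/\eps)$ and $C_0$ is the fixed (lemma-internal) even constant arising in its proof. The corollary hypothesis gives us $k = c\, k_0$-wise independence with $c$ a constant of our choosing. Taking $c \geq 3C_0$ (this is precisely the role of ``sufficiently large constant'' in the corollary statement) ensures the $X_i$ are $3C_0 k_0$-wise independent, so \Lemma{mainprop} applies with coefficients $b_i$ to yield
\[
\E[f(X/C)] \;=\; \E\!\left[f\!\left(\tfrac{\|a\|_p}{C} Z\right)\right] + O(\eps),
\]
which is the claim.

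There is essentially no obstacle here beyond bookkeeping: the rescaling $a_i \mapsto a_i/C$ does not touch the $k$-wise independence of the $X_i$, it normalizes $\|a\|_p$ to a constant as required by the lemma, and $p$-stability converts the fully independent sum $\sum_i b_i Y_i$ into a single scaled $p$-stable $(\|a\|_p/C)Z$. The one point worth stating carefully is the choice of $c$ relative to the constant produced inside the proof of \Lemma{mainprop}, which is why the corollary must quantify over ``sufficiently large'' $c$ rather than state a specific value.
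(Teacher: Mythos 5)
Your proposal is correct and matches the paper's (one-line) proof exactly: rescale the coefficients to $a_i/C$ so that the norm is $O(1)$, apply \Lemma{mainprop}, and use $p$-stability to identify the fully independent sum with $\|a\|_p Z/C$. The extra bookkeeping you do on the independence parameter ($c \ge 3C_0$) is implicit in the paper but correctly handled.
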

\begin{proof}
Apply \Lemma{mainprop} with the vector whose entries are
$a_i/C$ so that $||a||_p = O(1)$ and $Y=\sum_i (a_i/C)Y_i$ has the
same distribution as $||a||_p Z/C$.
\end{proof}

We now show the implications of \Corollary{mcor}.

\begin{lemma}\LemmaName{median-lp}
Let $a_i$ be real numbers for $1\leq i \leq n$.  Let $k$ and $r$ be a
suitably large constants, and let $X_{i,j}$ a $2$-wise independent family of
$k$-wise independent $p$-stable random variables ($1\leq i \leq
n$, $1\leq j \leq r$).  Then the median value across all $j$ of
$|\sum_i a_i X_{i,j}|$ is within a constant multiple of $||a||_p$ with
probability tending to 1 as $k$ and $r$ tend to infinity, independent
of $n$.
\end{lemma}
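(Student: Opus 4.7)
The plan is to reduce to a single-copy statement and then boost by Chebyshev over the outer $r$ repetitions. Normalize so that $\|a\|_p=1$, and for each $j\in[r]$ let $Y_j = \sum_i a_i X_{i,j}$. Fix constants $0<A<B$ so that the (truly $p$-stable) random variable $Z\sim\mathcal{D}_p$ satisfies $\Pr[\,|Z|\in[A,B]\,]\ge 3/4$; such $A,B$ exist because $Z$ has a bounded density near zero and power-law tails of exponent $p+1$, by the density bound quoted from Nolan.

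For each fixed $j$, the family $\{X_{i,j}\}_i$ is $k$-wise independent, so \Corollary{mcor} (applied with $C=1$ and fixed constant $\eps$) guarantees that $\E[f(Y_j)] = \E[f(Z)] + o(1)$ as $k\to\infty$ for every holomorphic $f$ with $|f(z)|=e^{O(1+|\Im(z)|)}$. The core technical step is to sandwich $\mathbf{1}_{[A,B]\cup[-B,-A]}$ between two such holomorphic test functions $f^{-}\le \mathbf{1}_{[A,B]\cup[-B,-A]}\le f^{+}$ with $\E[(f^{+}-f^{-})(Z)]\le 1/10$. I would construct these using Paley--Wiener / Beurling--Selberg extremal majorants and minorants for intervals: functions whose Fourier transform is supported on a bounded interval, and which therefore extend to entire functions of exponential type $O(1)$, so $|f^{\pm}(z)|=e^{O(|\Im(z)|)}$. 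The quality of the sandwich only needs to be a small constant, so the standard Selberg construction suffices and no quantitative fine-tuning is required. Applying \Corollary{mcor} to each of $f^{+}$ and $f^{-}$ and pinching, I conclude that for $k$ at least a suitable constant,
\[
\Pr\bigl[\,|Y_j|\in [A,B]\,\bigr]\ge q
\]
for some constant $q>1/2$ (e.g.\ $q=2/3$).

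Now let $E_j$ be the indicator of the event $|Y_j|\in[A,B]$. By hypothesis, the seeds generating the columns $\{X_{i,j}\}_i$ for different $j$ are pairwise independent, so the $E_j$ are pairwise independent indicators with $\E[E_j]\ge q>1/2$. Writing $S=\sum_{j=1}^r E_j$, Chebyshev's inequality gives
\[
\Pr\!\left[S\le r/2\right]\;\le\;\Pr\!\left[\,|S-\E[S]|\ge (q-1/2)r\,\right]\;\le\;\frac{\mathrm{Var}(S)}{(q-1/2)^2 r^2}\;\le\;\frac{q(1-q)}{(q-1/2)^2\, r}\;\longrightarrow\;0
\]
as $r\to\infty$. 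Whenever $S>r/2$, strictly more than half the $|Y_j|$ lie in $[A,B]$, and hence so does the median. Since $\|a\|_p=1$, this is the desired constant-factor statement.

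The main obstacle is the construction of the entire majorant/minorant pair $f^{\pm}$ with the required exponential-type growth on $\C$: everything else is either a direct invocation of \Corollary{mcor} or a routine second-moment calculation on pairwise-independent indicators. Fortunately, Beurling--Selberg theory supplies such majorants for any interval, so once that is cited the rest of the proof follows essentially by the outline above.
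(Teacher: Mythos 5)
Your proof is correct, but it takes a genuinely different route from the paper's. The paper works with a single explicit, even, strictly monotone test function $f(x) = -\int_{-\infty}^x \sin^4(y)/y^3\,dy$, applies \Corollary{mcor} to the empirical average of $f$ evaluated at two different scales $C\,Y_j/\|a\|_p$ and $C'\,Y_j/\|a\|_p$, and then uses the monotonicity of $f$ to translate ``the average is below $f(0)/2$'' into ``at least half the $|Y_j|$ exceed a fixed multiple of $\|a\|_p$'' (and symmetrically for the upper bound), with the relevant constants verified numerically. You instead approximate the indicator of the annulus $\{A \le |x| \le B\}$ directly by a Beurling--Selberg majorant/minorant pair; these are entire of exponential type and bounded on $\R$, hence by Phragm\'en--Lindel\"of satisfy the growth condition $|f^{\pm}(z)| = e^{O(1+|\Im(z)|)}$ needed for \Corollary{mcor}, and the sandwich quality only needs to beat a fixed constant, so a constant exponential type suffices. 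You then count the columns $j$ with $|Y_j|\in[A,B]$ and apply Chebyshev using the pairwise independence of the seeds, noting correctly that a strict majority in $[A,B]$ forces the median into $[A,B]$. Both arguments share the same skeleton---one invocation of \Corollary{mcor} per test function plus a second-moment bound over the $r$ repetitions---and your indicator-sandwich is arguably the more transparent way to control a quantile; its only cost is the reliance on the extremal-majorant machinery and the Phragm\'en--Lindel\"of step, where the paper gets by with one elementary closed-form $f$ and computer-checked constants.
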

\begin{proof}
We apply \Corollary{mcor} to a suitable function $f$ which
\begin{enumerate}
\item is strictly positive for all $x\in\R$,
\item  is an even function, and
\item  decreases strictly monotonically to $0$ as $x$ tends away from
  $0$.
\end{enumerate}
We note 
$$f(x) = -\int_{-\infty}^x \frac{\sin^4(y)}{y^3} dy$$
satisfies these properties. A thorough explanation of why $f$
satisfies the desired properties is in \Section{magic-function}.
Henceforth, for
$0<z\le f(0)$,
$f^{-1}(z)$ denotes the (unique) nonnegative inverse of $z$.
We
consider
for constants $C=\Theta(1)$ the random variable
$$ A_C = \frac 1r\left(\sum_j f\left(\left(\sum_i a_i
    X_{i,j}\right)\cdot\left(\frac{C}{||a||_p}\right)\right)\right) .$$
By \Corollary{mcor}, if $Z$ is a $p$-stable variable, then
$\E[A_C] = \E[f(CZ)] + O(1)$, where the $O(1)$ term can be made 
arbitrarily small
by choosing $k$ sufficiently large.  Furthermore since $f$ is
bounded and the terms in the sum over $j$ defining $A_C$ are
$2$-wise independent, $\Var(A_C) = O(1/r)$.  Thus by
Chebyshev's inequality, for $k,r$ sufficiently
large, $A_C$ is within any desired constant of
$\E[f(CZ)]$ with probability arbitrarily close to 1.

We apply the above for a $C>0$ large enough that $\E[f(CZ)] <
f(0)/3$, and $C'>0$ small enough that $\E[f(C'Z)] > 2f(0)/3$. 
By picking $k,r$ sufficiently large,
then with any desired constant probability we can ensure
$A_C<x<f(0)/2<y<A_{C'}$, for some
constants $x>f(0)/3$ and $y<2f(0)/3$ of our choosing --- to be
concrete, pick $x = 4f(0)/9$ and $y=5f(0)/9$. In order for
this to hold it
must be the case
that for at least half of the $j$'s that
$$f\left( \left(\sum_i a_i
    X_{i,j}\right)\cdot\left(\frac{C}{||a||_p}\right)\right) < 2x <
f(0) .$$
This bounds the median of $\left|\sum_i a_i X_i \right|$ from below by
$$\left(\frac{f^{-1}\left(8f(0)/9\right)}{C}\right)||a||_p >
\left(\frac{2}{5C}\right)||a||_p.$$
Similarly, it must also be the case that for at
least half of the $j$'s 
$$f\left(\left(\sum_i a_i
    X_{i,j}\right)\cdot\left(\frac{C'}{||a||_p}\right)\right) >
2(y-f(0)/2) > 0 .$$
This bounds the median of $\left|\sum_i a_i X_i \right|$ from above by
$$\left(\frac{f^{-1}\left(f(0)/9\right)}{C'}\right)||a||_p <
\left(\frac 2{C'}\right)||a||_p.$$
The bounds on $f^{-1}(8f(0)/9)$ and $f^{-1}(f(0)/9)$ were verified
by computer. Comments on computing $C,C'$ are
in \Section{magic-function}.
\end{proof}

\begin{lemma}\LemmaName{cosine-works}
Given $\eps>0$, $k$ as in \Corollary{mcor}, $r$ a suitably
large multiple of $\eps^{-2}$, $C=\Theta(||a||_p)$, and $X_{i,j}$
($1\leq i \leq n, 1\leq j \leq r$) a 2-independent family of
$k$-independent families of $p$-stable random variables then with
probability that can be made arbitrarily close to 1 (by increasing
$r$), it holds that 
$$\left| \left(\frac 1r\sum_{j=1}^r \cos\left(\frac{\sum_{i=1}^n a_i
        X_{i,j}}{C}\right)\right)
  - e^{-\left(\frac{||a||_p}{C}\right)^p}
\right| < \eps .$$
\end{lemma}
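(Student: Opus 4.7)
The plan is to apply Corollary \ref{cor:mcor} to the function $f(z) = \cos(z)$, and then use pairwise independence across $j$ together with Chebyshev's inequality to control the deviation of the empirical average from its expectation.

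First I would verify that $\cos(z)$ satisfies the hypotheses of Corollary \ref{cor:mcor}. Clearly $\cos$ is entire, and for $z = x + iy$ we have $\cos(z) = \cos(x)\cosh(y) - i\sin(x)\sinh(y)$, so $|\cos(z)| \le \cosh(|y|) \le e^{|\Im(z)|}$, which gives the required bound $|f(z)| = e^{O(1+|\Im(z)|)}$. Then, fixing $j$ and writing $Y_j \defeq \cos\!\left(\tfrac{1}{C}\sum_i a_i X_{i,j}\right)$, Corollary \ref{cor:mcor} yields
\[
\E[Y_j] \;=\; \E\!\left[\cos\!\left(\tfrac{||a||_p}{C} Z\right)\right] + O(\eps),
\]
where $Z \sim \mathcal{D}_p$.

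Next I would evaluate the right-hand expectation using the defining characteristic function of the $p$-stable distribution. Since $\E[e^{itZ}] = e^{-|t|^p}$ and $Z$ is symmetric (so $\E[\sin(tZ)] = 0$), taking real parts with $t = ||a||_p/C$ gives
\[
\E\!\left[\cos\!\left(\tfrac{||a||_p}{C} Z\right)\right] \;=\; e^{-\left(||a||_p/C\right)^p}.
\]
Thus for every $j$, $\E[Y_j] = e^{-(||a||_p/C)^p} + O(\eps)$, and the constant hidden in $O(\eps)$ can be made as small as desired by choosing the constant in $k = c\log(1/\eps)/\log\log\log(1/\eps)$ sufficiently large.

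Finally I would exploit the 2-wise independence across $j$ (the seeds for the families $\{X_{i,j}\}_i$ for distinct $j$ are pairwise independent) and the trivial uniform bound $|Y_j| \le 1$. These give $\Var(Y_j) \le 1$ and hence $\Var\!\left(\tfrac{1}{r}\sum_{j=1}^r Y_j\right) \le 1/r$. Taking $r$ to be a sufficiently large multiple of $\eps^{-2}$, Chebyshev's inequality gives $\left|\tfrac{1}{r}\sum_j Y_j - \E[Y_1]\right| < \eps/2$ with probability arbitrarily close to $1$. Combining this with the bias bound from Corollary \ref{cor:mcor} by the triangle inequality yields the claimed $\eps$-closeness to $e^{-(||a||_p/C)^p}$. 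No step here is really the obstacle; all the serious work has been done in Lemma \ref{lem:mainprop} and its Corollary \ref{cor:mcor}, and the only thing that needs care is checking the mild growth condition on $\cos$ so that the corollary is applicable.
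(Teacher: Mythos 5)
Your proposal is correct and follows essentially the same route as the paper's proof: apply \Corollary{mcor} to $f=\cos$, evaluate $\E[\cos(\|a\|_p Z/C)]=e^{-(\|a\|_p/C)^p}$ via the characteristic function of $\mathcal{D}_p$, and finish with pairwise independence across $j$, the bound $|\cos|\le 1$, and Chebyshev. Your explicit check that $|\cos(z)|\le e^{|\Im(z)|}$ is a detail the paper leaves implicit, but nothing differs in substance.
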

\begin{proof}
The Fourier transform of the
probability density function $q(x)$ of $\mathcal{D}_p$ is
$\hat{q}(\xi) = e^{-|\xi|^p}.$
Letting $B=\frac{||a||_p}{C}$, the expectation of $\cos(BZ)$ for
$Z\sim\mathcal{D}$ is
$$
\int_{-\infty}^{\infty} q(x)\frac{e^{iBx}+e^{-iBx}}{2}dx =
\frac{\hat{q}(B)+\hat{q}(-B)}{2} = e^{-|B|^p}.
$$
By \Corollary{mcor}, if $k$ is sufficiently
large, the expected value of
$(\sum_j \cos((\sum_i a_i X_i)/C))/r$
is within $\eps/2$ of $e^{-(||a||_p/C)^p}$.
Noting that each term in the sum is bounded by 1, and that they
form a 2-independent family of random variables, we have that the
variance of our estimator is upper bounded by $1/r$.
Hence by Chebyshev's inequality, if $r$ is chosen to by a suitably
large multiple of $\eps^{-2}$, then with the desired probability
our estimator is within $\eps/2$ of its expected value.
\end{proof}

Now we prove our main theorem.

\begin{proofof}{\Theorem{main-optimallp}}
In \Figure{lpalg}, as long as $k,r$ are chosen to be larger
than some constant, $A$ is a constant factor approximation to
$||a||_p$ by \Lemma{median-lp} with probability at least $7/8$.
Conditioned on this,
consider $C=(\sum_j \cos(A_j/A))/r$. By
\Lemma{cosine-works}, with probability at least $7/8$, $C$
is within $O(\eps)$ of
$e^{-(||a||_p/A)^p}$ from which a
$(1+O(\eps))$-approximation of
$||a||_p$ can be computed as $A\cdot (-\ln(C))^{1/p}$.
Note that our approximation is in fact a $(1+O(\eps))$-approximation
since  the function $f(x) = e^{-|x|^p}$
is bounded both from above and below by constants for $x$ in a
constant-sized interval (in our case, $x$ is $||a||_p/A$), and thus an
additive $O(\eps)$-approximation to $e^{-|x|^p}$ is also a
multiplicative $(1+O(\eps))$-approximation.

There are though still two basic problems with the algorithm of
\Figure{lpalg}.  The
first is that we cannot store the values of $X_j$ to unlimited
precision, and will at some point have rounding errors.  The second
problem is that we can only produce families of random variables with
finite entropy and hence cannot keep track of a family of continuous
random variables.

We deal with the precision problem first.  We will pick some number
$\delta=\Theta(\eps m^{-1})$.  We round each $X_{i,j}$ to the
nearest multiple of $\delta$.  This means that we only need to store
the $X_j$ to a precision of $\delta$.  This does produce an error in
the value of $X_j$ of size at most $||a||_1\delta \leq |i:a_i\neq 0|
\max(|a_i|)\delta \leq m||a||_p \delta = \Theta(\eps ||a||_p)$.  This
means that $C$ is going to be off by a factor of at most
$O(\eps)$, and hence still probably within a constant multiple of
$||a||_p$.  Hence the values of $X_j/C$ will be off by $O(\eps)$, so
the values of $A$ and our approximation for $||a||_p$ will be off by an
additional factor of $O(\eps)$.

Next we need to determine how to compute these continuous
distributions.  It was shown by \cite{CMS76} that a $p$-stable
random variable can be generated by taking $\theta$ uniform in
$[-\pi/2,\pi/2]$, $r$ uniform in $[0,1]$ and letting
$$
X = f(r,\theta) = \frac{\sin(p\theta)}{\cos^{1/p}(\theta)}\cdot
\left(\frac{\cos(\theta(1-p))}{\log(1/r)}\right)^{(1-p)/p}.
$$
We would like to know how much of an error is introduced by using
values of $r$ and $\theta$ only accurate to within $\delta'$.  This
error is at most $\delta'$ times the derivative of $f$.  This
derivative is not large except when $\theta$ or $(1-p)\theta$ is close
to $\pm \pi/2$, or when $r$ is close to $0$ or $1$.  Since we only
ever need $mr$ different values of $X_{i,j}$, we can assume that with
reasonable probability we never get an $r$ or $\theta$ closer to
these values than $O(m^{-1}\eps^2)$.  In such a case the
derivative will be bounded by $(m\eps^{-1})^{O(1)}$.  Therefore,
if we choose $r$ and $\theta$ with a precision of
$(m^{-1}\eps)^{O(1)}$, we can get the value of $X$ with
introducing an error of only $\delta$.

Lastly, we need to consider memory requirements.  Our family must be a
$2$-independent family containing $O(\eps^{-2})$ $k$-independent
families of $U$ random variables.  Each random variable requires
$O(\log(m\eps^{-1}))$ bits.  The amount of space needed to pick
out an element of this family is only $O(k(\log(U)+\log(m\eps^{-1}))) =
O(k\log(m/\eps)) = O(k\log m)$ (recall we can assume $\log(U) = O(\log
N)$, and $\eps\ge 1/\sqrt{m}$). More important is the
information needed to store the $X_j$.  We need to store them to a
precision of $\delta$.  Since there are only $mr$ values of $X_{i,j}$,
with reasonable probability, none of them is bigger than a polynomial
in $mr$.  If this is the case, the maximum value of any $X_j$ is at
most $(mM\eps^{-1})^{O(1)}$.  Hence each $X_j$ can be stored in
$O(\log(mM\eps^{-1})) = O(\log(mM))$ space, thus making the total
space requirements $O(\eps^{-2}\log(mM))$.
\end{proofof}

\subsection{Derandomizing $L_p$ Estimation via Armoni's
  PRG}\SectionName{prg}

Indyk \cite{Indyk06}, and later Li \cite{Li08b}, gave algorithms for
$L_p$ estimation which are also based on $p$-stable distributions.
Their algorithms differ from ours in \Figure{lpalg} in two ways.
First, both Indyk and Li made the variables $X_{i,j}$ in Step 1 truly
random as opposed to having limited independence.  Second, the
estimator they use in Step 2 differs.  Indyk uses a median estimator
on the $|A_j|$, and Li has two estimators: one based on the geometric
mean, and one on the harmonic mean.
The change in Step 1 at
first seems to make the algorithms
of Indyk and Li not implementable in small space, since there are
$n/\eps^2$ random variables $X_{i,j}$ to be stored.  Indyk though
observed that his algorithm could be derandomized by using a PRG against
small-space computation, and invoked Nisan's PRG to derandomize his
algorithm.  Doing so multiplied his space complexity by a
$\log(N/\eps)$ factor.  Li then similarly used Nisan's PRG to
derandomize his algorithm.

Nisan's PRG
\cite{Nisan92}
stretches a seed of $O(S\log R)$ random bits to $R$ ``pseudorandom'' bits
fooling any space-$S$ algorithm with one-way access to its
randomness.  We show
that a PRG construction of Armoni \cite{Armoni98} can be combined
with a more space-efficient implementation of a recent extractor of
Guruswami, Umans, and Vadhan (GUV) \cite{GUV07} to produce a PRG whose
seed length is only $O((S/(\log S - \log\log R + O(1)))\log R)$
for any $R = 2^{O(S)}$.  Due to the weaknesses of
extractor constructions at the time, Armoni's original PRG
only worked when $R < 2^{S^{1-\delta}}$ for constant $\delta > 0$.
In the cases of Indyk and Li, $S =
O(\eps^{-2}\log(mM))$ and $R = \poly(N)/\eps^2$.  The key here is that
although $N$ can be exponentially large in $\log(mM)$, the dependence
on $\eps$ in both $S$ and $R$ are polynomially related.  The result is
that using the improved Armoni PRG provides a more efficient
derandomization than Nisan's PRG
by a $\log(1/\eps)$ factor, giving the following.

\begin{theorem}
The $L_p$-estimation algorithms of Indyk and Li can be
implemented in space
$O(\eps^{-2}\log(mM)\log(N/\eps)/\log(1/\eps))$.
\end{theorem}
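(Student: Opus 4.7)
The plan is to substitute Nisan's PRG with the improved Armoni PRG from \Section{intro-prg} in the derandomizations of Indyk's and Li's algorithms; all other aspects of those algorithms remain unchanged. The first step is to observe, following Indyk, that his algorithm (and likewise Li's) can be viewed as a space-$S$ computation with $S = O(\eps^{-2}\log(mM))$ reading its randomness in a fixed one-way order: pretend the stream is processed coordinate by coordinate, and for each $i \in [U]$ the algorithm generates the $r = \Theta(\eps^{-2})$ pseudorandom $p$-stable samples $X_{i,1},\ldots,X_{i,r}$ needed to update the sketches $A_j = \sum_i a_i X_{i,j}$, each of which fits in $O(\log(mM))$ bits (as in the proof of \Theorem{main-optimallp}). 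Hence any PRG that fools space-$S$ computations with one-way access to randomness can be inserted in place of truly random bits without affecting the success probability.

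Next, I will count the random bits consumed. The precision analysis from the proof of \Theorem{main-optimallp} shows each $X_{i,j}$ needs only $O(\log(m/\eps))$ bits to limit the total additive error to $O(\eps)$; with $U \cdot r$ samples and $U = \poly(N)$, the total number of random bits is $R = \poly(N)/\eps^2$, so $\log R = \Theta(\log(N/\eps))$.

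I then apply the improved Armoni PRG, whose seed length is $O((S/(\log S - \log\log R + O(1)))\log R)$ for any $R = 2^{O(S)}$. With $\log S = 2\log(1/\eps) + \log\log(mM)$ and $\log\log R = O(\log\log N + \log\log(1/\eps))$, and using $\log\log(mM) \ge \log\log N$, the denominator simplifies to $\Theta(\log(1/\eps))$ in the regime where $\log(1/\eps)$ dominates $\log\log(1/\eps)$ (for larger $\eps$ the claimed bound is already trivial). The seed length therefore becomes $O(\eps^{-2}\log(mM)\cdot \log(N/\eps)/\log(1/\eps))$, which dominates the $O(S)$ working space and yields the stated bound.

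The main obstacle is that Armoni's original PRG only applied when $R < 2^{S^{1-\delta}}$ for some constant $\delta>0$, because the extractors available at the time were not space-efficient enough. Reaching $R = 2^{O(S)}$, which is exactly the regime needed above, relies on the linear-space implementation of the GUV extractor sketched in \Section{intro-prg}, and in particular on the modifications that remove all calls to Shoup's irreducible-polynomial algorithm. Once that extractor is in hand, the arithmetic above is routine; the only other thing to double-check is that the final estimators of Indyk (median) and Li (geometric and harmonic means) can be evaluated from the sketch $(A_1,\ldots,A_r)$ deterministically in space $O(S)$, so that the PRG substitution really does preserve correctness.
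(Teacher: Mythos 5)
Your proposal is correct and follows essentially the same route as the paper: view Indyk's/Li's algorithm as a space-$S$ branching program with $S = O(\eps^{-2}\log(mM))$ and $R = \poly(N)/\eps^2$ one-way random bits, replace Nisan's PRG with the improved Armoni PRG of \Theorem{final-armoni}, and observe that since $\log\log(mM)\ge \log\log N$ the denominator $\log S - \log\log R + O(1)$ is $\Omega(\log(1/\eps))$, which is exactly the paper's remark that the $\eps$-dependences of $S$ and $R$ are polynomially related. The arithmetic and the identification of the key obstacle (Armoni's original restriction $R < 2^{S^{1-\delta}}$, removed via the linear-space GUV extractor) match the paper's argument.
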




Most of our changes to the implementation of the GUV
extractor are parameter changes which guarantee that we always work
over a field for which a highly explicit family of
irreducible polynomials is known. For example, we change the
parameters of an expander construction of
  GUV based on Parvaresh-Vardy codes \cite{PV05} which feeds into
  their extractor construction. Doing so allows us to replace calls to
  Shoup's algorithm for finding irreducibles over
  $\mathbb{F}_2[x]$, which uses superlinear space, with using two
  explicit families of
  irreducibles over $\mathbb{F}_7[x]$ with a few properties. One property
  we need is that if we define extension fields using polynomials
  from one family, then the polynomials from the other 
  family remain irreducible over these extension fields.
Full details are in \Section{appendix-armoni}.

\section{Lower Bounds}\SectionName{lower-bounds}
In this section we prove our lower bounds for
$(1\pm\eps)$-multiplicative approximation of $F_p$ for any real
constant $p\ge 0$ when deletions are allowed.  When $p\ge 0$, we prove
a $\Omega(\eps^{-2}\log(\eps^2\minnm))$ lower
bound.
When $p$ is a constant strictly greater than $0$, the lower bound
improves to
$\Omega(\min\{N, \eps^{-2}(\log(\eps^2m M))\})$.  All our lower
bounds assume $\eps \ge 1/\sqrt{\minnm}$.  We also point out that
$\Omega(\log\log(nmM))$ is a folklore lower bound for all
problems we
consider in the strict turnstile model by a direct reduction from
\textsc{Equality}.  In the update-only model, there is a folklore
$\Omega(\log\log n)$ lower bound.  Both lower bounds assume $m\ge 2$.
Our lower bounds hold for all ranges of the
parameters $\eps,n,m,M$ varying independently. 

Our proof in part uses the fact that
\textsc{Augmented-Indexing} requires a linear amount of communication
in the one-way, one-round model \cite{BJKK04,MNSW98}.  We also use a known
reduction \cite{jks07,WoodruffThesis} from indexing to
\textsc{Gap-Hamdist}.  Henceforth all
communication games discussed will be one-round and two-player, with
the first
player to speak named ``Alice'', and the second ``Bob''.  We assume
that Alice and Bob have access to public randomness.

\begin{definition}
In the \textsc{Augmented-Indexing} problem, Alice receives a vector
$x\in\{0,1\}^n$, Bob receives some $i\in [n]$ as well as all $x_j$
for $j > i$, and Bob must output $x_i$.  The problem
\textsc{Indexing} is defined similarly, except Bob receives only $i\in
[n]$, without receiving $x_j$ for $j > i$.
\end{definition}

\begin{definition}
In the \textsc{Gap-Hamdist} problem, Alice receives $x\in\{0,1\}^n$
and Bob receives $y\in\{0,1\}^n$.  Bob is promised that either
$\Delta(x,y) \le n/2 - \sqrt{n}$ (\textsc{NO} instance), or
$\Delta(x,y) \ge n/2 + \sqrt{n}$ (\textsc{YES} instance) and must
decide which case holds.  Here $\Delta(\cdot,\cdot)$ denotes the
Hamming distance.
\end{definition}

The following two theorems are due to \cite{BJKK04,MNSW98} and
\cite{jks07,WoodruffThesis}.

\begin{theorem}[Miltersen {et al.} \cite{MNSW98}, Bar-Yossef {et al.}
  \cite{BJKK04}]\TheoremName{mnsw}
The randomized one-round, one-way communication complexity of
solving \textsc{Augmented-Indexing} with probability at least $2/3$ is
$\Omega(n)$.  Furthermore, this lower
bound holds even if Alice's and Bob's inputs are each chosen
independently, uniformly at random.  The lower bound also still holds
if Bob only receives a subset of the $x_j$ for $j > i$.\qedbox
\end{theorem}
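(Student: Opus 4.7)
The plan is to prove the lower bound by a direct information-theoretic compression argument in the public-coin randomized one-way model. Since the statement claims the bound already holds under the uniform product distribution, I will let $X=(X_1,\ldots,X_n)$ be uniform on $\{0,1\}^n$, let $R$ be the public random string (independent of $X$), and let $M=M(X,R)$ denote Alice's single message; the goal is to show $|M|=\Omega(n)$.

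The key step will be an application of Fano's inequality. By the worst-case $2/3$ correctness of the protocol, for every fixed $i\in[n]$ the estimator $\hat X_i$ agrees with $X_i$ with probability at least $2/3$ over the joint randomness of $X$ and $R$. Since $\hat X_i$ is a deterministic function of $M$, $R$, $i$, and $X_{i+1},\ldots,X_n$, Fano's inequality for a binary-valued target yields
$$
H(X_i \mid M, R, X_{i+1},\ldots,X_n) \;\le\; H(X_i \mid \hat X_i) \;\le\; H_2(1/3),
$$
where $H_2$ is the binary entropy function. I would then apply the chain rule for conditional entropy in reverse order, so that every summand is exactly what Fano controls:
$$
H(X \mid M, R) \;=\; \sum_{i=1}^{n} H\bigl(X_i \mid X_{i+1},\ldots,X_n, M, R\bigr) \;\le\; n\,H_2(1/3).
$$
Combining with the standard bound $H(X\mid M,R) \ge H(X) - I(X;M\mid R) \ge n-|M|$ (using $H(X)=n$ and independence of $X$ and $R$) gives $|M|\ge n\bigl(1-H_2(1/3)\bigr)=\Omega(n)$, as required.

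The extension to the variant where Bob receives only $\{X_j : j\in S\}$ for some $S\subseteq\{i+1,\ldots,n\}$ needs no extra work: Bob's predictor is now a function of $M$, $R$, $X_S$, and since $X_S$ is itself a function of $X_{i+1},\ldots,X_n$, the data-processing inequality yields $H(X_i\mid M, R, X_{i+1},\ldots,X_n)\le H(X_i\mid M, R, X_S)\le H_2(1/3)$, and the chain-rule estimate above is unchanged. I do not foresee a serious technical obstacle here; the main things to get right are the reverse chain-rule ordering (so that each summand matches what Fano bounds) and the observation that pointwise $2/3$ correctness immediately implies $2/3$ average correctness under the uniform distribution, so that no separate argument via Yao's principle is needed.
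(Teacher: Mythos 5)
The paper does not prove \Theorem{mnsw} at all---it is imported as a black-box citation to Miltersen et al.\ and Bar-Yossef et al., which is why it is stated with no proof. Your argument is therefore not an alternative to anything in the paper; it is a self-contained proof of the cited result, and it is the standard, correct one: Fano's inequality applied coordinate-by-coordinate, summed via the chain rule in the order $H(X\mid M,R)=\sum_i H(X_i\mid X_{i+1},\ldots,X_n,M,R)$ so that each summand is conditioned on exactly what Bob sees, then compared against $H(X\mid M,R)\ge n-|M|$. The subset variant is handled correctly, since conditioning on all of $X_{i+1},\ldots,X_n$ only decreases entropy relative to conditioning on $X_S$. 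The one point worth tightening is the ``furthermore'' clause: the theorem asserts a \emph{distributional} lower bound (correctness only on average over uniform inputs and, implicitly, over $i$), which is stronger than the worst-case randomized bound you derive from pointwise $2/3$ correctness. Your argument covers it with one extra line---if the error probabilities $P_{e,i}$ only average to at most $1/3$ over $i$, concavity of $H_2$ gives $\frac1n\sum_i H_2(P_{e,i})\le H_2(\frac1n\sum_i P_{e,i})\le H_2(1/3)$, and the rest is unchanged---but as written you only invoke the easy direction (pointwise implies average), which does not by itself yield the distributional statement.
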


\begin{theorem}[Jayram {et al.} \cite{jks07}, Woodruff
  {\cite[Section 4.3]{WoodruffThesis}}]\TheoremName{gaphamdist}
There is a reduction from \textsc{Indexing} to \textsc{Gap-Hamdist}
such that the uniform (i.e. hard) distribution over \textsc{Indexing}
instances is
mapped to a distribution over \textsc{Gap-Hamdist}
instances where each of Alice and Bob receive strings whose marginal
distribution is uniform, and deciding \textsc{Gap-Hamdist} over this
distribution with
probability at least $11/12$ implies a solution to \textsc{Indexing}
with probability at least $2/3$.
Also, in this reduction the vector length $n$ in \textsc{Indexing} is
the same as the
vector length in the reduced \textsc{Gap-Hamdist}
instance to within a constant factor.\qedbox
\end{theorem}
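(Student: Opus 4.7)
The plan is to exhibit a public-coin reduction that, on a uniformly random instance $(x,i)\in\{0,1\}^n\times[n]$ of \textsc{Indexing}, outputs Gap-Hamdist strings of length $N=Cn$ for a sufficiently large constant $C$. The guiding intuition is that one can use shared random hyperplanes to extract, from each party, a single bit whose agreement probability is biased by $\Theta(1/\sqrt{n})$ toward the value of $x_i$; aggregating $\Theta(n)$ such bits then yields a Hamming gap of $\Theta(\sqrt{N})$, matching the \textsc{Gap-Hamdist} promise. I would encode $x$ as $\tilde x\in\{-1,+1\}^n$ via $\tilde x_j = 1-2x_j$, and draw from public randomness $N$ i.i.d.\ uniform vectors $r^{(1)},\dots,r^{(N)}\in\{-1,+1\}^n$. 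Alice, using only $\tilde x$ and the public coins, sets $a_k=\mathbf{1}[\langle r^{(k)},\tilde x\rangle>0]$; Bob, using only $i$ and the public coins, sets $b_k=\mathbf{1}[r^{(k)}_i>0]$.

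The first thing to check is marginal uniformity: for each fixed $x\neq 0$ (which holds on a $1-2^{-n}$ fraction of \textsc{Indexing} inputs), the map $r^{(k)}\mapsto -r^{(k)}$ is a measure-preserving involution that flips $\langle r^{(k)},\tilde x\rangle$, so $a_k$ is an unbiased coin, and independence across $k$ is inherited from independence of the $r^{(k)}$; the same symmetry argument shows $b$ is marginally uniform. For the bias, I would condition on $r^{(k)}_i$ and write $\langle r^{(k)},\tilde x\rangle = \tilde x_i\,r^{(k)}_i + Y_k$, where $Y_k$ is an independent Rademacher sum of $n-1$ terms. A Berry--Esseen estimate (or a direct computation with the local central limit theorem for $\pm1$ sums) yields
\[
\Pr\bigl[\operatorname{sign}(\tilde x_i r^{(k)}_i+Y_k)=\operatorname{sign}(r^{(k)}_i)\bigr]=\tfrac12+\Theta\!\left(\tfrac{\tilde x_i}{\sqrt{n}}\right),
\]
so $\Pr[a_k\neq b_k]=\tfrac12-\Theta(\tilde x_i/\sqrt{n})$. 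Summing over $k$ gives $\mathbb{E}[\Delta(a,b)]=N/2-c\tilde x_i\sqrt{N}$ for a constant $c$ growing with $\sqrt{C}$.

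Concentration then finishes the reduction: since $\Delta(a,b)$ is a sum of independent $\{0,1\}$ variables, Hoeffding's inequality gives $|\Delta(a,b)-\mathbb{E}[\Delta(a,b)]|\le c'\sqrt{N}$ with probability at least $23/24$. Choosing $C$ large enough that $c>c'+1$ ensures the \textsc{Gap-Hamdist} promise (YES vs.\ NO according to the sign of $\tilde x_i$) is satisfied except with probability $1/24$. Feeding $(a,b)$ to any \textsc{Gap-Hamdist} solver that succeeds with probability $\ge 11/12$ on its distribution and reading off $x_i$ from the answer, a union bound yields \textsc{Indexing} success $\ge 1-1/24-1/12=7/8\ge 2/3$. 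The vector length is $N=Cn=\Theta(n)$ as required. The main obstacle I foresee is the quantitative bias bound: Berry--Esseen type arguments only give the right $\Theta(1/\sqrt{n})$ rate when $n$ is moderately large and $\tilde x$ is not too concentrated on coordinate $i$, so one has to handle separately the degenerate inputs (e.g.\ $x$ supported almost entirely on $\{i\}$), which on the uniform distribution contribute only an exponentially small mass and can be absorbed into the failure probability.
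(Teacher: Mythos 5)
The paper does not prove this theorem; it imports it from \cite{jks07} and Woodruff's thesis, so there is no in-paper argument to compare against. Your proposal is a correct reconstruction of the standard public-coin reduction (the cited works use a majority vote over coordinates, which is the same idea as your signed inner product with random Rademacher vectors), and the quantitative skeleton --- per-coordinate bias $\Theta(1/\sqrt{n})$, hence an expected gap $\Theta(\sqrt{C}\sqrt{N})$ that dominates the $O(\sqrt{N})$ Hoeffding fluctuation once $C$ is large, followed by the $1/24+1/12$ union bound --- is sound.

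Two points deserve correction. First, your symmetry argument for marginal uniformity of Alice's string is wrong as stated when $n$ is even: $\langle r^{(k)},\tilde x\rangle$ is then an even integer and equals $0$ with probability $\Theta(1/\sqrt{n})$, and the involution $r\mapsto -r$ fixes that event, so $a_k=\mathbf{1}[\langle r^{(k)},\tilde x\rangle>0]$ is biased toward $0$ by $\Theta(1/\sqrt{n})$; over $N=Cn$ independent coordinates this makes the marginal of $a$ far from uniform in total variation, violating the uniformity clause of the theorem. The fix is routine --- pad so that $n$ is odd, or break the tie at $0$ with a fresh public coin --- but it must be said. (Your caveat ``for each fixed $x\neq 0$'' is a non sequitur here: $\tilde x=1-2x$ has all entries in $\{-1,+1\}$ for every $x$, including $x=0$.) Second, your closing worry about degenerate inputs and Berry--Esseen is unnecessary for your own construction: since every $r_j\tilde x_j$ is an unbiased $\pm 1$ regardless of $x$, the quantity $Y_k$ is an exact $(n-1)$-step symmetric walk and the agreement probability can be computed exactly as $\tfrac12 + \tfrac{\tilde x_i}{2}\bigl(\Pr[Y_k=0]+\Pr[Y_k=1]\bigr)$, which is $\tfrac12+\Theta(\tilde x_i/\sqrt{n})$ uniformly in $x$; no separate treatment of inputs concentrated near coordinate $i$ is needed, and no exceptional mass has to be absorbed into the failure probability.
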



We now give our lower bounds.  We use the following observation
in the proof of \Theorem{main-lowerbound}.

\begin{observation}\ObservationName{gaptol0}
For two binary vectors $u,v$ of equal length, let $\Delta(u,v)$ denote
their Hamming distance.  Then for any $p \ge 0$, $(2^p-2)\Delta(u,v) =
2^p||u||_1+2^p||v||_1 - 2||u+v||_p^p$.
\end{observation}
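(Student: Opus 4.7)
The plan is to verify this identity coordinate-by-coordinate. Since both $u$ and $v$ have entries in $\{0,1\}$, at each coordinate $i$ there are only four possibilities for the pair $(u_i, v_i)$, and each contributes a fixed amount to every quantity appearing in the identity. So the entire statement reduces to checking a small arithmetic identity in four cases.

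Concretely, I would let $a, b, c, d$ denote the numbers of coordinates $i$ where $(u_i, v_i)$ equals $(0,0)$, $(1,0)$, $(0,1)$, $(1,1)$ respectively. Then directly from the definitions,
\begin{align*}
\Delta(u,v) &= b+c, \\
\|u\|_1 &= b+d, \\
\|v\|_1 &= c+d, \\
\|u+v\|_p^p &= b + c + 2^p d,
\end{align*}
where the last line uses $(u_i+v_i) \in \{0,1,2\}$ and $0^p = 0$, $1^p = 1$, $2^p$ (using the convention $0^0 = 0$ when $p = 0$, which is consistent with the $L_p$ norm's definition in the paper).

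Substituting these into the right-hand side gives
\[
2^p(b+d) + 2^p(c+d) - 2(b+c+2^p d) = (2^p - 2)(b+c) = (2^p - 2)\Delta(u,v),
\]
which is exactly the left-hand side. No step here is really an obstacle; the only mild subtlety is making sure the identity is interpreted correctly when $p = 0$, in which case both sides are $-\Delta(u,v)$ (since $2^0 - 2 = -1$ and $\|u+v\|_0^0 = b+c+d$), and the arithmetic still checks out in the four-case breakdown.
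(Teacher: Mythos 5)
Your proof is correct, and the paper itself states this observation without any proof, so the coordinate-by-coordinate case check you give is exactly the elementary verification the authors intend the reader to supply. Your handling of the $p=0$ convention (so that $\|x\|_0^0$ counts nonzero entries, matching the paper's definition of $L_0$) is the one point worth being explicit about, and you got it right.
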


\begin{theorem}\TheoremName{main-lowerbound}
For any real constant $p\ge 0$, any one-pass
streaming algorithm for
$(1\pm\eps)$-multiplicative approximation of $F_p$ with probability at
least $11/12$ in the strict turnstile model requires
$\Omega(|p-1|^2\eps^{-2}\log (\eps^2 \minnm / |p-1|^2))$ bits of space.
\end{theorem}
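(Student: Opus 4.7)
The plan is to reduce \textsc{Augmented-Indexing} on a string of length $t = \Theta(|p-1|^2\eps^{-2}\log(\eps^2 N/|p-1|^2))$ to $(1\pm\eps)$-approximating $F_p$ in the strict turnstile model, which by \Theorem{mnsw} yields the claimed $\Omega(t)$ space bound. Alice partitions her input $x \in \{0,1\}^t$ into $b = \Theta(\log(\eps^2 N/|p-1|^2))$ equal-size blocks of size $s = \Theta(|p-1|^2\eps^{-2})$. Within each block $X_j$ she applies the map of \Theorem{gaphamdist} to produce a length-$s$ \textsc{Gap-Hamdist} vector $u(X_j)\in\{0,1\}^s$, and inserts it onto a fresh set of $s$ universe coordinates reserved for block $j$. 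For $p>0$ every nonzero entry is scaled by the factor $2^{j/p}$; for $p=0$ each nonzero entry is replaced by $2^j$ duplicate coordinates on fresh universe indices.

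Bob, holding $i\in[t]$ (falling in some block $j^*$) together with the bits $x_j$ for $j>j^*$, uses \Theorem{gaphamdist} to form his vector $v\in\{0,1\}^s$ and inserts it into the $j^*$-th block with the same scaling. For every block $j>j^*$ he knows $X_j$, so he reconstructs Alice's scaled insertions and issues them as \emph{deletions}; because Alice's updates for block $j$ all appear before Bob's deletions, the combined stream is strict turnstile at every prefix. The final frequency vector is therefore Alice's contributions from blocks $1,\ldots,j^*-1$ together with the block-$j^*$ entries corresponding to $u(X_{j^*})+v$. A $(1\pm\eps)$-approximation of the total $F_p$ then lets Bob decide \textsc{Gap-Hamdist} inside block $j^*$, which by \Theorem{gaphamdist} recovers $x_i$, so the transcript's length (equal to the streaming algorithm's one-pass space) must be $\Omega(t)$.

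The main obstacle is calibrating $s$ so that a global $(1\pm\eps)$-approximation of $F_p$ is sensitive to the $\sqrt{s}$ Hamming gap in one block. The scaling makes block $j$ contribute $\Theta(2^j s)$ to $F_p$, so the total is $\Theta(2^{j^*}s)$. By \Observation{gaptol0}, the two \textsc{Gap-Hamdist} cases differ in block $j^*$'s $F_p$ contribution by $\Theta(|2^p-2|\cdot 2^{j^*}\sqrt{s})$, while the estimator's error is $O(\eps\cdot 2^p\cdot 2^{j^*}s)$. Requiring error below gap and using $|2^p-2|/2^p=\Theta(|p-1|)$ near $p=1$ and $\Theta(1)$ otherwise yields $s=\Theta(|p-1|^2\eps^{-2})$, exactly as needed. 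One must also check that the lower blocks $j<j^*$ do not swamp the signal: their cumulative $F_p$ is $\sum_{j<j^*}\Theta(2^j s)=O(2^{j^*}s)$ and is \emph{identical} under both \textsc{Gap-Hamdist} cases, so it contributes only to the overall scale, not to the gap.

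Finally I would verify the resource constraints. The stream uses at most $bs=O(t)$ distinct universe coordinates and a maximum frequency of $O(2^{b/p})$, so taking $b=\Theta(\log(\eps^2 N/|p-1|^2))$ keeps the construction inside $N=\min\{n,m\}$ and inside $[-M,M]$. The case $p=0$ is handled by the duplication variant, which uses $s\cdot 2^b=O(\eps^{-2}\cdot\eps^2 N)=O(N)$ coordinates. Combining \Theorem{mnsw} with the reduction and \Theorem{gaphamdist} gives the advertised lower bound of $\Omega(|p-1|^2\eps^{-2}\log(\eps^2 N/|p-1|^2))$ bits; the $|p-1|^2$ factor is unavoidable because the reduction degenerates as $p\to 1$, consistent with $F_1$ being computable exactly in the strict turnstile model.
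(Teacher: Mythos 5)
Your overall architecture is the paper's: reduce from \textsc{Augmented-Indexing} via geometrically weighted \textsc{Gap-Hamdist} blocks, have Bob delete the upper blocks, and use \Observation{gaptol0} to convert an $F_p$ estimate into a Hamming-distance decision; the calibration $s=\Theta(|p-1|^2\eps^{-2})$ is also correct. But there are two genuine gaps. First, for $p>0$ you replace duplication by scaling each coordinate's frequency to $2^{j/p}$, and assert that $b=\Theta(\log(\eps^2N/|p-1|^2))$ blocks "keeps the construction inside $[-M,M]$." It does not: the top block needs frequencies $(\eps^2N/|p-1|^2)^{1/p}$, which can vastly exceed $M$ (the theorem must hold for all $n,m,M,\eps$ varying independently, e.g.\ $M=1$). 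The scaling trick is what the paper uses for \Theorem{strictlowerbound-M}, whose bound is capped by $\log(\eps^2 mM)$ for exactly this reason. To prove \Theorem{main-lowerbound} as stated you must use duplication onto $2^j$ fresh unit-frequency coordinates for \emph{all} $p\ge 0$, as the paper does; this keeps every update magnitude at $1$ and costs only $O(N)$ coordinates and stream length.

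Second, your claim that the lower blocks "contribute only to the overall scale, not to the gap" does not by itself let Bob decide. Writing the stream's moment as $L''=\eta+2^{j^*}\bigl(2^{p-1}(w(y_{j^*})+w(v))-(2^{p-1}-1)\Delta\bigr)$ with $\eta=\sum_{j<j^*}2^jw(y_j)$, the quantities $\eta$ and $w(y_{j^*})$ are unknown to Bob and of order $2^{j^*}s$, which swamps the gap of order $|2^p-2|\,2^{j^*}\sqrt{s}$. The reduction only works because Alice additionally transmits all block Hamming weights $w(y_j)$; this costs $O(\log(1/\eps)\log(\eps^2N))=o(t)$ bits and so does not affect the bound, but it must be part of the protocol and accounted for in the communication. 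With those two repairs your argument matches the paper's proof.
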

\begin{proof}
Given an algorithm $A$ providing a $(1\pm d|p-1|\eps)$-multiplicative
approximation of $F_p$ with probability at least $11/12$, where $d>0$
is some constant to be fixed later, we devise a protocol to decide
\textsc{Augmented-Indexing} on strings of length $\eps^{-2}\log(\eps^2
N)$.

Let Alice receive
$x\in\{0,1\}^{\eps^{-2}(\log(\eps^2 N))}$, and Bob
receive $z\in [\eps^{-2}(\log(\eps^2 N))]$.
Alice divides $x$ into
$\log(\eps^2 N)$ contiguous blocks where the $i$th block
$b_i$ is of size $1/\eps^2$.  Bob's index $z$ lies in some
$b_{i(z)}$, and Bob receives bits $x_j$ that lie in a block
$b_i$ with $i > i(z)$. Alice applies the
\textsc{Gap-Hamdist}
reduction of \Theorem{gaphamdist} to each $b_i$ separately to obtain
new vectors $y_i$ each of length at most $c/\eps^2$ for some constant
$c$ for all $0 \le i < \log(\eps^2 N)$.
Alice then creates a stream
from the set of $y_i$ by, for each $i$ and each bit $(y_i)_j$ of
$y_i$, imagining universe elements $(i,j,1),\ldots,(i,j,2^i)$
and inserting them all into
the stream if $(y_i)_j = 1$, and not inserting them otherwise.  Alice
processes this stream with $A$ then sends the state of $A$ to Bob
along with the Hamming
weight $w(y_i)$ of $y_i$ for all $i$. Note the size of the
universe in the stream is at most $c\eps^{-2}\sum_{i=0}^{\log(\eps^2
  N) - 1}2^i = O(N) = O(n)$.

Now, since Bob knows the bits in $b_i$ for $i > i(z)$ and shares
randomness with
Alice, he can run the same \textsc{Gap-Hamdist} reduction as Alice to
obtain the $y_i$ for $i > i(z)$ then delete all the insertions Alice
made for these $y_i$.
Bob then
performs his part of the reduction from \textsc{Indexing} on strings of length
$1/\eps^2$ to \textsc{Gap-Hamdist} within the block $b_{i(z)}$ to obtain a
vector $y(B)$ such that deciding whether
$\Delta(y(B),y_{i(z)}) > \eps^{-2}/2 + \eps^{-1}$ or
$\Delta(y(B),y_{i(z)}) < \eps^{-2}/2 + \eps^{-1}$ with probability at
least $11/12$ allows one to decide the \textsc{Indexing} instance with
probability at least $2/3$.  Here
$\Delta(\cdot, \cdot)$ denotes Hamming distance.
For each $j$ such that
$y(B)_j = 1$, Bob inserts universe elements
$(i(z),j,1),\ldots,(i(z),j,2^{i(z)})$ into the stream being processed
by $A$.  We have so far described all stream updates, and thus the
number of updates is at most $2c\eps^{-2}\sum_{i=0}^{\log(\eps^2
  N) - 1}2^i = O(N) = O(m)$.
By \Observation{gaptol0} with $u=y_{i(z)}$ and $v=y(B)$, the $p$th moment $L''$
of the stream now exactly satisfies
$L'' = 2^{i(z)}((1-2^{p-1})\Delta(y(B),y_{i(z)}) + 2^{p-1}w(y_{i(z)})
+ 2^{p-1}w(y(B))) + \sum_{i<i(z)}w(y_i)2^i$.  Setting $\eta =
\sum_{i<i(z)} w(y_i)2^i$ and rearranging terms,
\vspace{-.1in}
\[ \Delta(y(B),y_{i(z)}) = \frac{2^{p-1}}{2^{p-1}-1}w(y_{i(z)}) +
\frac{2^{p-1}}{2^{p-1}-1}w(y(B)) + \frac{2^{-i(z)}(\eta -
  L'')}{2^{p-1}-1}\vspace{-.1in}\]
Recall that in this \textsc{Gap-Hamdist} instance, Bob must decide
whether $\Delta(y(B),y_{i(z)}) < 1/2\eps^2 - 1/\eps$ or
$\Delta(y(B),y_{i(z)}) > 1/2\eps^2 + 1/\eps$.  Bob knows
$\eta$, $w(y_{i(z)})$, and $w(y(B))$ exactly.
To decide \textsc{Gap-Hamdist} it thus suffices to obtain
a $((2^{p-1}-1)/(4\eps))$-additive approximation to
$2^{-i(z)}L''$. Since $2^{-i(z)}L''$ is upper-bounded in
absolute value by $(1+2^p)/\eps^2$, our desired
additive approximation is guaranteed by obtaining a $(1\pm
((2^{p-1}-1)\eps/(4\cdot(1+2^p))))$-multiplicative approximation
to $L''$.
Since $p\neq 1$ is a constant and $|2^x - 1| = \Theta(|x|)$
as $x\rightarrow 0$, this is a $(1\pm
O(|p-1|\eps))$-multiplicative approximation, which we can obtain from
$A$ by setting $d$ to be a sufficiently large constant.
Recalling that $A$ provides this $(1\pm O(|p-1|\eps))$-approximation with
probability at least $11/12$, we solve
\textsc{Gap-Hamdist} in the block $i(z)$ with probability at least
$11/12$, and thus \textsc{Indexing} in block $i(z)$ with
probability
at least $2/3$ by \Theorem{gaphamdist}.  Note this is equivalent to
solving the original \textsc{Augmented-Indexing} instance.

The only
bits communicated other than the state of $A$ are the
transmissions of $w(y_i)$ for $0 \le i\le \log(\eps^2 N)$.  Since
$w(y_i) \le 1/\eps^2$, all Hamming weights can be communicated in
$O(\log(1/\eps)\log(\eps^2 N)) = o(\eps^{-2}\log(\eps^2 N))$ bits.  By
the lower bound on
\textsc{Augmented-Indexing} from \Theorem{mnsw}, we thus have that
$(1\pm d|p-1|\eps)$-approximation requires
$\Omega(\eps^{-2}\log(\eps^2 N))$ bits of space for some constant $d >
0$.  In other words,
setting $\eps' = d'|p-1|\eps$ we have that a
$(1\pm\eps')$-approximation requires
$\Omega(|p-1|^2\eps'^{-2}\log(\eps'^2 N/|p-1|^2)$ bits of space.
\end{proof}

When $p$ is strictly positive, we can improve our lower bound
by gaining a dependence on $mM$ rather than $N$, obtaining the
following lower bound.

\begin{theorem}\TheoremName{strictlowerbound-M}
For any real constant $p > 0$, any one-pass streaming algorithm for
$(1\pm\eps)$-multiplicative approximation of $F_p$ with probability at
least $11/12$ in the strict turnstile model requires
$\Omega(\min\{N, |p-1|^2\eps^{-2}(\log (\eps^2 mM / |p-1|^2))\})$
bits of space.
\end{theorem}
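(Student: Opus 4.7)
The plan is to follow \Theorem{main-lowerbound} essentially verbatim, but to substitute the ``duplicate each $1$-bit of block $i$ into $2^i$ distinct universe elements of value $1$'' trick with ``give each $1$-bit of block $i$ a single coordinate of value $\approx 2^{i/p}$,'' so that the per-bit contribution to $F_p$ remains $2^i$ while we trade universe/stream consumption against update magnitude. Concretely, reduce from \textsc{Augmented-Indexing} on a string of length $t = \Theta(\eps^{-2}\log(\eps^2 mM))$, divided into $b = \Theta(\log(\eps^2 mM))$ blocks of $1/\eps^2$ bits each, and in block $i$ assign each $1$-bit a single fresh universe element which Alice (resp.\ Bob, in block $i(z)$) drives to integer value $v_i = \lfloor 2^{i/p}\rfloor$ using $\lceil v_i/M\rceil$ updates of magnitude at most $M$. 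For $i \le p\log M$ this is a single update of value $v_i$; for $i > p\log M$ it is $\lceil 2^{i/p}/M\rceil$ updates of value $M$.

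Under this encoding the entire $F_p$ identity of \Theorem{main-lowerbound} transfers unchanged: a single-sided coordinate in block $i(z)$ contributes $v_{i(z)}^p \approx 2^{i(z)}$, while a coordinate where both $y_{i(z)}$ and $y(B)$ place a $1$ accumulates to value $2v_{i(z)}$ and contributes $2^p\cdot 2^{i(z)}$, yielding
\[ L'' = 2^{i(z)}\bigl((1-2^{p-1})\Delta(y(B),y_{i(z)}) + 2^{p-1}w(y_{i(z)}) + 2^{p-1}w(y(B))\bigr) + \sum_{i<i(z)} 2^i w(y_i), \]
from which Bob decodes $\Delta(y(B),y_{i(z)})$ using a $(1\pm O(|p-1|\eps))$-approximation to $L''$ exactly as in the previous theorem. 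Strict turnstile is preserved because all of Alice's updates are positive and Bob's block-$i>i(z)$ deletions only cancel updates Alice has already performed (Bob acts only after receiving Alice's state). The rounding of $2^{i/p}$ to a nearby integer costs an $O(\eps)$ relative error in $L''$, absorbed into the slack of the approximation guarantee.

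The resource accounting explains both the improved dependence on $mM$ and the outer $\min\{N,\cdot\}$ in the statement: the universe used is $O(b/\eps^2)$, which must be at most $n$, while the total number of updates is $O(\eps^{-2}\sum_{i\le b}\lceil 2^{i/p}/M\rceil) = O(\eps^{-2}(\log M + 2^{b/p}/M))$, which must be at most $m$. The second constraint saturates at $b = p\log(\eps^2 mM) = \Theta(\log(\eps^2 mM))$ for constant $p$, while the first is precisely what forces the outer $\min\{N,\cdot\}$. The main obstacle, and essentially the only new content beyond \Theorem{main-lowerbound}, is the bookkeeping in block $i(z)$: one must verify that Bob uses exactly the same multiset of updates as Alice so that ``both-$1$'' collisions sum to precisely $2v_{i(z)}$, producing the clean $2^p$ factor that drives the Hamming-distance identity. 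Once this is in place, the rest is a transcription of the previous proof: the $|p-1|^2$ rescaling of $\eps$, the $o(\eps^{-2}\log(\eps^2 mM))$-bit side transmission of the Hamming weights $w(y_i)$, and the invocation of \Theorem{mnsw} all go through unchanged, giving the claimed $\Omega(\min\{N,|p-1|^2\eps^{-2}\log(\eps^2 mM/|p-1|^2)\})$ lower bound.
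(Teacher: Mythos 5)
Your proposal is correct and follows essentially the same route as the paper: replace the $2^i$-fold duplication of \Theorem{main-lowerbound} by giving each coordinate in block $i$ frequency $2^{i/p}$, so that the $F_p$ identity, the \textsc{Augmented-Indexing} reduction, and the $|p-1|^2$ rescaling all carry over verbatim. The only cosmetic difference is that the paper obtains the $\log M$ and $\log(\eps^2 m)$ dependences via two separate instantiations (a single large update per coordinate, versus many unit updates) and takes the better of the two bounds, whereas you fold both into one reduction by realizing each value $2^{i/p}$ with $\lceil 2^{i/p}/M\rceil$ updates of magnitude at most $M$; the two are equivalent up to constant factors.
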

\begin{proof}
In the proof of \Theorem{main-lowerbound}, Alice divided her input $x$
into $\log(\eps^2 N)$ blocks each of equal size and used the $i$th
block to create an instance of \textsc{Gap-Hamdist}.  However, in
order to have the weight of each block's contribution to the stream
increase geometrically, Alice had to replicate each coordinate in the
$i$th block $2^i$ times.  Now, instead, round $M$ to the nearest power
of $2^{1/p}$ and let Alice's input be a string $x$ of length
$\eps^{-2}\min\{\log_{2^{1/p}} M,\eps^2 N\}$.  Dividing her input into
$\min\{\log_{2^{1/p}} M, \eps^2 N\}$ blocks, Alice does not replicate
any coordinate in a block $i$ but rather gives each coordinate
frequency $2^{i/p}$.  By choice of the number of blocks, no item's
frequency will be larger than $M$, and the number of universe elements
and the stream length will each be at most $N$.
These frequencies $f_1,f_2,\ldots$ are chosen
so that $f_i^p = 2^i$.  Similarly to \Observation{gaptol0}, for two
vectors $u,v$ of equal length where each coordinate is either $t$ or
$0$ (in \Observation{gaptol0} the vectors were binary),
for any $p \ge 0$ we have $t^p(2^p-2)\Delta(u,v) =
t^p2^p||u||_1+t^p2^p||v||_1 - 2||u+v||_p^p$ where $\Delta(u,v)$ is the
Hamming distance of $u,v$.  

Following the same steps as in \Theorem{main-lowerbound} with the same
notation, one arrives at
\[ \Delta(y(B),y_{i(z)}) = \frac{2^{p-1}}{2^{p-1}-1}w(y_{i(z)}) +
\frac{2^{p-1}}{2^{p-1}-1}w(y(B)) + \frac{(\eta -
  L'')}{2^{i(z)}(2^{p-1}-1)}\]
since $f_{i(z)}^p = 2^{i(z)}$.
For deciding \textsc{Gap-Hamdist} in block $i(z)$ it suffices to
obtain an additive $2^{i(z)}(2^{p-1}-1)/(4\eps)$-additive approximation
to $L''$.  Since $L'' \le 2^{i(z)}(2^p+1)/\eps^2$, the desired
additive approximation can be obtained by a
$(1\pm((2^{p-1}-1)\eps/(4\cdot(2^p+1)))$-multiplicative approximation,
just as in \Theorem{main-lowerbound}.  The rest of the proof
is identical as in \Theorem{main-lowerbound}.

The above argument yields the lower bound
$\Omega(\min\{N,\eps^2\log(M))$.  We can similarly obtain the lower
bound $\Omega(\min\{N,\eps^2\log(\eps^2 m))$ by, rather than updating an
item in the stream by $f_i = 2^{i/p}$ in one update, we
update the same item $f_i$ times by $1$.  The number of total updates
in the $i$th block is then $2^{i/p}/\eps^2$, and thus the maximum
number of blocks we can give Alice to ensure that both the stream
length and number of used universe elements is at most $N$ is
$\min\{\eps^2 N, O(\log(\eps^2 m))\}$.
\end{proof}

The decay of our lower bounds as $p\rightarrow 1$ is necessary in the
strict turnstile model since Li gave an algorithm in this model whose
dependence on $\eps$ becomes subquadratic as $p\rightarrow 1$
\cite{Li08a}.  Furthermore, when $p=1$ there is a $O(\log(mM))$-space
deterministic algorithm for computing $F_1$: maintain a
counter. In
the turnstile model, for $p>0$ we give a lower bound
matching \Theorem{strictlowerbound-M} but without any decay as
$p\rightarrow 1$.

\begin{theorem}\TheoremName{lowerbound-M}
For any real constant $p> 0$, any one-pass streaming algorithm for
$(1\pm\eps)$-multiplicative approximation of $F_p$ in the turnstile
model with probability at least $11/12$ requires
$\Omega(\min\{N, \eps^{-2}(\log(\eps^2 mM))\})$
bits of space.
\end{theorem}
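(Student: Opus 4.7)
The plan is to follow the structure of \Theorem{strictlowerbound-M} but to exploit the fact that the turnstile model allows signed updates: rather than encoding each bit by an insertion of a positive value (which forces the identity $t^p(2^p-2)\Delta(u,v)=\cdots$ whose leading constant vanishes as $p\to 1$), I will let Alice and Bob use updates of opposite sign so that the $p$-th power of each coordinate's frequency is exactly $2^i$ times the indicator that Alice's and Bob's bits in block $i$ differ, with no $|p-1|$ degeneracy.

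Concretely, given a $(1\pm\eps)$-approximation streaming algorithm $A$ for $F_p$, I reduce from \textsc{Augmented-Indexing} on strings of length $\eps^{-2}B$ with $B=\Theta(\min\{\eps^2 N,\log(\eps^2 mM)\})$. Alice partitions her input into $B$ blocks of size $1/\eps^2$, applies the \textsc{Gap-Hamdist} reduction of \Theorem{gaphamdist} within each block to obtain $y_0,\ldots,y_{B-1}$, and for each $j$ with $(y_i)_j=1$ feeds $A$ updates that place frequency $+f_i$ at universe position $(i,j)$, where $f_i$ is chosen so that $f_i^p=2^i$. Exactly as in the two sub-cases of \Theorem{strictlowerbound-M}, the target frequency $f_i$ is realized either by a single update of value $f_i$ (using one stream position and value $\le M$) or by $f_i$ updates of unit value (using $f_i$ stream positions and value $1$); mixing these modes across blocks lets $B$ reach $\Theta(\log(\eps^2 mM))$ while keeping both the universe and stream length at most $O(N)$. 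Alice then sends Bob the state of $A$ together with the Hamming weights $w(y_i)$, using only $O(B\log(1/\eps))=o(\eps^{-2}B)$ extra bits. Bob uses the suffix of $x$ that he knows to reconstruct $y_i$ for $i>i(z)$ and, for every such $i$ and $j$ with $(y_i)_j=1$, inserts $-f_i$ at $(i,j)$ to cancel Alice's contribution in those blocks. Finally, Bob applies his half of the \textsc{Gap-Hamdist} reduction within the block $i(z)$ containing $z$ to obtain $y(B)$ and inserts $-f_{i(z)}$ at $(i(z),j)$ for every $j$ with $y(B)_j=1$.

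The sign cancellation in block $i(z)$ makes position $(i(z),j)$ have frequency $0$ when $(y_{i(z)})_j=y(B)_j$ and $\pm f_{i(z)}$ when they differ, so its contribution to $F_p$ is exactly $2^{i(z)}\cdot\mathbf{1}[(y_{i(z)})_j\neq y(B)_j]$. Combined with the observations that blocks above $i(z)$ contribute $0$ (by cancellation) and blocks below $i(z)$ contribute $2^i w(y_i)$ exactly, the final vector satisfies
\[
F_p \;=\; \sum_{i<i(z)} 2^i\, w(y_i)\; +\; 2^{i(z)}\,\Delta(y_{i(z)},y(B)).
\]
Bob knows the first sum exactly from Alice's transmitted weights, and since $F_p \le O(2^{i(z)}/\eps^2)$, a $(1\pm\eps)$-multiplicative approximation to $F_p$ yields an additive $O(2^{i(z)}/\eps)$-approximation to $2^{i(z)}\Delta(y_{i(z)},y(B))$, enough to decide \textsc{Gap-Hamdist} in block $i(z)$ with probability $\ge 11/12$ and hence the original \textsc{Augmented-Indexing} instance with probability $\ge 2/3$ via \Theorem{gaphamdist}. \Theorem{mnsw} then forces the space of $A$ to be $\Omega(\eps^{-2}B)=\Omega(\min\{N,\eps^{-2}\log(\eps^2 mM)\})$.

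The main obstacle is the bookkeeping needed to realize $f_i^p=2^i$ for $i$ up to $B=\Omega(\log(\eps^2 mM))$ while simultaneously respecting the universe bound, the stream-length bound $m$, and the per-update magnitude bound $M$; this is resolved by the same dual encoding (large-magnitude updates until $f_i$ reaches $M$, then repeated unit updates) used in the two sub-cases of \Theorem{strictlowerbound-M}. Aside from this parameter juggling and the sign-flipping trick, every other step — the reduction from \textsc{Augmented-Indexing} through \textsc{Gap-Hamdist}, the use of shared randomness, and the additive-to-multiplicative conversion for $F_p$ — mirrors the strict-turnstile argument verbatim.
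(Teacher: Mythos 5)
Your proposal is correct and follows essentially the same route as the paper: Alice encodes geometrically-weighted \textsc{Gap-Hamdist} blocks with frequencies $f_i$ satisfying $f_i^p=2^i$, Bob cancels the higher blocks and \emph{deletes} (rather than inserts) his block-$i(z)$ items so that signed cancellation makes each coordinate's contribution exactly $2^{i(z)}\cdot\mathbf{1}[(y_{i(z)})_j\neq y(B)_j]$, removing the $|p-1|$ degeneracy; the $m$ versus $M$ trade-off is handled by the same single-large-update versus repeated-unit-update encoding as in the strict-turnstile bound. The only cosmetic difference is that you mix the two encoding modes across blocks in one reduction, whereas the paper runs them as two separate sub-arguments and takes the better of the resulting bounds.
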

\begin{proof}
 As in
\Theorem{strictlowerbound-M},
Alice receives an input string $x$ of length $\eps^{-2}\min\{\log M,
\eps^2 N\}$ as opposed to the string of length $\eps^{-2}\log(\eps^2N)$
in \Theorem{main-lowerbound}.  Also, Alice carries out her part of
the protocol just as in \Theorem{strictlowerbound-M}.
However, for each $j$ such that
$y(B)_j=1$, rather than inserting a universe element with frequency
$2^{i(z)/p}$, Bob {\em deletes} it with that frequency.
Now we have
$L''$, the $p$th moment of the stream, exactly equals
$2^{i(z)}\Delta(y(B),y_{i(z)}) + \sum_{i<i(z)}w(y_i)2^{i(z)}$, and
thus $\Delta(y(B), y_{i(z)}) = 2^{-i(z)}(\eta - L'')$.
As in \Theorem{main-lowerbound}, Bob knows $\eta$ exactly and thus
only needs a $(1/4\eps)$-additive approximation to $L''2^{-i(z)}$ to
decide the \textsc{Gap-Hamdist} instance (and thus the original
\textsc{Augmented-Indexing} instance),
which he can obtain via a $(1\pm (\eps/8))$-approximation to $L''$
since $L''2^{-i(z)} \le 2/\eps^2$.
\end{proof}

Our technique also improves the known lower
bound for additively estimating the entropy of a stream in the
strict turnstile model. The
proof combines ideas of \cite{CGM07} with our technique of
embedding geometrically-growing hard instances. By entropy of the
stream, we mean the
empirical probability distribution on $[n]$ obtained by setting $p_i =
a_i/||a||_1$.

\begin{theorem}\TheoremName{entropy-lb}
Any algorithm for $\eps$-additive approximation of $H$, the
entropy of a stream, in the strict turnstile model with probability at
least $11/12$ requires space $\Omega(\eps^{-2}\log
(\minnm)/\log(1/\eps))$.
\end{theorem}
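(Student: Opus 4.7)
The plan is to follow the embedding strategy of \Theorem{main-lowerbound}, replacing the $F_p$-based gap analysis inside each block by the Gap-Hamdist-based entropy analysis of \cite{CGM07}. I reduce from \textsc{Augmented-Indexing} on strings of length $t = \Theta(\eps^{-2}\log(\minnm)/\log(1/\eps))$; by \Theorem{mnsw} any protocol requires $\Omega(t)$ bits of communication, which will translate to the claimed space lower bound.

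Alice splits her input $x \in \{0,1\}^t$ into $K = \Theta(\log(\minnm)/\log(1/\eps))$ contiguous blocks of size $1/\eps^2$, and applies the \textsc{Indexing}-to-\textsc{Gap-Hamdist} reduction of \Theorem{gaphamdist} inside each block to obtain vectors $y_i \in \{0,1\}^{\Theta(1/\eps^2)}$. Fixing a growth factor $\alpha = 1/\eps^2$ and setting $f_i = \alpha^i$, for every $i$ and every $j$ with $(y_i)_j = 1$ she inserts a fresh universe element $(i,j)$ with frequency $f_i$. Since $\sum_i f_i\cdot w(y_i) = O(f_{K-1}/\eps^2) = O(\minnm)$, both the stream length and the universe size stay within $m,n$. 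Alice then sends the state of the entropy estimator together with the Hamming weights $w(y_i)$ to Bob. Given his input $z \in B_{i(z)}$, Bob uses his knowledge of the bits of $x$ above index $z$ together with shared randomness to reconstruct $y_i$ for $i > i(z)$ and delete all of Alice's insertions for those blocks (valid in the strict turnstile model since all counts remain nonnegative). He then runs his half of the \textsc{Gap-Hamdist} reduction inside block $i(z)$ to obtain $y(B) \in \{0,1\}^{\Theta(1/\eps^2)}$ and, for each set coordinate, inserts $(i(z),j)$ with frequency $f_{i(z)}$. Writing $w_1 = w(y_{i(z)})$, $w_2 = w(y(B))$, and $\Delta = \Delta(y_{i(z)},y(B))$, block $i(z)$ ends up with $\Delta$ coordinates at frequency $f_{i(z)}$ and $(w_1+w_2-\Delta)/2$ at frequency $2f_{i(z)}$, while each earlier block $i<i(z)$ contributes $w(y_i)$ coordinates at frequency $f_i$.

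The main technical step, and the source of the $1/\log(1/\eps)$ factor, is verifying that an $O(\eps)$-additive estimate of the stream's entropy $H$ lets Bob distinguish $\Delta \lessgtr 1/(2\eps^2) \pm 1/\eps$. By the chain rule for entropy, decomposing the distribution by block membership,
\[
H \;=\; H(P_0, \ldots, P_{K-1}) \;+\; \sum_i P_i\, H_i^*,
\]
where $P_i$ is the fraction of total mass in block $i$ and $H_i^*$ is the within-block entropy. With $\alpha = 1/\eps^2$ one gets $P_{i(z)} = 1 - O(\eps^2)$ and $P_i = O(\eps^{2(i(z)-i)})$ for $i<i(z)$; since each $H_i^* = O(\log(1/\eps))$ and $H(P_0,\ldots,P_{K-1}) = O(\eps^2 \log(1/\eps))$ by the binary-entropy bound on highly concentrated distributions, a geometric sum gives $|H - H_{i(z)}^*| = O(\eps^2 \log(1/\eps)) = o(\eps)$. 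A direct computation on block $i(z)$ yields
\[
H_{i(z)}^* \;=\; \log(w_1+w_2) \;-\; \log 2 \;+\; \frac{\log 2}{w_1+w_2}\,\Delta,
\]
so since $w_1+w_2 = \Theta(1/\eps^2)$ is known to Bob, an additive $(\eps/C)$-approximation of $H$ for a sufficiently large constant $C$ determines $\Delta$ to within $o(1/\eps)$, enough to decide the \textsc{Gap-Hamdist} instance with probability $11/12$, and hence \textsc{Augmented-Indexing} in block $i(z)$ with probability $2/3$ via \Theorem{gaphamdist}.

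Running the estimator with error $\eps/C$ only costs a constant factor in space, and transmitting the side information $\{w(y_i)\}$ costs $O(K\log(1/\eps)) = O(\log\minnm)$ extra bits, so the total communication is $O(S)+O(\log\minnm)$ and \Theorem{mnsw} gives $S = \Omega(\eps^{-2}\log(\minnm)/\log(1/\eps))$. The main obstacle is the entropy-perturbation accounting: taking $\alpha$ too small lets the lower blocks contribute $\Omega(\eps)$ to $H$ and swamps the Gap-Hamdist signal, while taking $\alpha$ too large shrinks $K$; the balance point $\alpha = \Theta(1/\eps^c)$ for any constant $c>1$ is precisely what produces the $1/\log(1/\eps)$ factor.
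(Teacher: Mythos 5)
Your reduction is correct and follows the same route as the paper: reduce from \textsc{Augmented-Indexing}, split into $\Theta(\log(\minnm)/\log(1/\eps))$ blocks of size $\eps^{-2}$, run the \textsc{Gap-Hamdist} reduction per block with geometrically growing weight $\eps^{-2i}$, have Bob cancel the higher blocks, and recover $\Delta$ from an additive $\Theta(\eps)$-estimate of $H$ because the block-$i(z)$ entropy is affine in $\Delta$ with slope $\Theta(\eps^2)$. The one substantive difference is the encoding: the paper duplicates each coordinate of block $i$ into $\eps^{-2i}$ distinct frequency-one universe elements (inserting elements for both bit values, with the bit baked into the element's identity), and reads $\Delta$ off the count of frequency-$2$ versus frequency-$1$ elements in a single global entropy formula; you instead give each $1$-coordinate a single element of frequency $\eps^{-2i}$ and work through the chain-rule decomposition $H = H(P) + \sum_i P_i H_i^*$. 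Your variant is the entropy analogue of how the paper's \Theorem{strictlowerbound-M} modifies \Theorem{main-lowerbound} (frequency scaling in place of duplication), and your within-block formula and the $O(\eps^2\log(1/\eps))$ bound on the contamination from lower blocks both check out (indeed Bob even knows that contamination exactly from the transmitted weights $w(y_i)$, so the $o(\eps)$ bound is more than you need). One detail to make explicit: your frequencies reach $\eps^{-2(K-1)} = \minnm^{\Omega(1)}$, so to keep the stated bound free of any dependence on $M$ you must realize "insert with frequency $f_i$" as $f_i$ unit insertions; the stream length is then the total mass $O(\minnm)$, exactly as in the paper, rather than a single update of magnitude $f_i$, which would require $M$ to be large.
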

\begin{proof}
We reduce from \textsc{Augmented-Indexing}, as in
\Theorem{main-lowerbound}. Alice receives a string of
length $s = \log N / (2\eps^2\log(1/\eps))$, and Bob receives an index
$z\in[s]$.  Alice conceptually divides her input into $b = \eps^2 s$
blocks, each of size $1/\eps^2$, and reduces each
block using the \textsc{Indexing}$\rightarrow$\textsc{Gap-Hamdist}
reduction of \Theorem{gaphamdist} to obtain $b$ \textsc{Gap-Hamdist}
instances with strings
$y_1,\ldots,y_b$, each of length $\ell = \Theta(1/\eps^2)$.  For each
$1\le i\le b$, and $1\le j\le \ell$ Alice inserts universe elements
$(i,j,1,(y_i)_j),\ldots,(i,j,\eps^{-2i},(y_i)_j)$ into the stream and
sends the state of a streaming algorithm to Bob.

Bob identifies the block $i(z)$ in which $z$ lands and deletes all
stream elements associated with blocks with index $i > i(z)$.  He then
does his part in the
\textsc{Indexing}$\rightarrow$\textsc{Gap-Hamdist} reduction to obtain a
vector $y(\hbox{Bob})$ of length $\ell$.  For all $1\le j\le \ell$, he inserts
the universe elements $(i(z),j,1,y(\hbox{Bob})_j), \ldots,
(i(z),j,\eps^{-2i(z)},y(\hbox{Bob})_j)$ into the stream.

The number of stream tokens from block indices $i < i(z)$ is $A =
\eps^{-2}\sum_{i=0}^{i(z)-1}\eps^{-2i} = \Theta(\eps^{-2i(z)})$.  The
number of tokens
in block $i(z)$ from Alice and Bob combined is $2\eps^{-(2i(z)+2)}$.
Define $B = \eps^{-2i(z)}$ and $C = \eps^{-2}$.  The
$L_1$ weight of the stream is $R = A+2BC$.  Let $\Delta$
denote the Hamming distance between $y_{i(z)}$ and $y(\hbox{Bob})$ and $H$
denote the entropy of the stream.

We have:
\begin{eqnarray*}
H &=& \frac{A}{R}\log(R) + \frac{2B(C-\Delta)}{R}\log\left(\frac{R}{2}\right) +
\frac{2B\Delta}{R}\log(R)\\
&=& \frac{A}{R}\log(R) + \frac{2BC}{R}\log(R) - \frac{2BC}{R} +
\frac{2B\Delta}{R}
\end{eqnarray*}

Rearranging terms gives

\begin{equation}\EquationName{ham-relation}
\Delta = \frac{HR}{2B} + C - C\log(R) - \frac{A}{2B}\log(R)
\end{equation}

To decide the \textsc{Gap-Hamdist} instance, we must decide whether
$\Delta < 1/2\eps^2 - 1/\eps$ or
$\Delta > 1/2\eps^2 + 1/\eps$.  By \Equation{ham-relation} and the
fact that Bob knows $A$, $B$, $C$, and $R$,
it suffices to obtain a $1/\eps$-additive approximation to $HR/(2B)$
to accomplish this goal.  In other
words, we need a $2B/(\eps R)$-additive approximation to $H$.  Since
$B/R = \Theta(\eps^2)$, it suffices to obtain an additive
$\Theta(\eps)$-approximation to $H$.  Let $\mathcal{A}$ be a streaming
algorithm which can provide an additive $\Theta(\eps)$-approximation
with probability at least $11/12$.  Recalling that correctly
deciding the \textsc{Gap-Hamdist} instance with probability $11/12$
allows one to correctly decide the original
\textsc{Augmented-Indexing} instance with probability $2/3$ by
\Theorem{gaphamdist}, and given \Theorem{mnsw}, $\mathcal{A}$ must use
at least $\log(N)/(\eps^2\log(1/\eps))$ bits of space.  As required,
the length of
the vector being updated in the stream is at most $\sum_{i=1}^s
\eps^{-2i} = O(N) = O(n)$, and the length of the stream is exactly
twice the vector length, and thus $O(N) = O(m)$.
\end{proof}

\section{$L_0$ in turnstile streams}\SectionName{l0-est}

We describe our algorithm for multiplicatively approximating
$L_0$ in the turnstile model using $O(\eps^{-2}\log(\eps^2
N)(\log(1/\eps) + \log\log(mM)))$ space with $O(1)$ update and
reporting time. Without loss of generality,
we assume (1) $N$ is a power of $2$, and (2) $\eps \ge
1/(3\cdot\minnm)$.  We can assume (2) 
since otherwise
one could compute $L_0$ exactly since $L_0 \le \minnm$ is an
integer. In both this algorithm and our $F_0$ algorithm, we make use
of a few lemmas analyzing a balls-and-bins random process where $A$
good balls and $B$ bad balls are thrown into $K$ bins with limited
independence (in the case of our $L_0$ algorithm, $B$ is $0$).  These
lemmas we occasionally refer to are in \Section{balls-and-bins}. 

\subsection{A Promise Version}

We give an algorithm \textsc{LogEstimator} for estimating $L_0$ when
promised that $L_0 \le 1/(20\eps^2)$ which works as follows. 
First, we assume that the universe size is $O(1/\eps^4)$ since we can
pairwise independently hash the universe down to
$[b/\eps^4]$ for some constant $b>0$ via some hash function $h_3$.  In
doing so
we can assure that the indices contributing to $L_0$ are perfectly
hashed with constant probability arbitrarily close to $1$ by choosing
$b$ large enough.  Henceforth in this subsection we assume updates
$(i,v)$ have $i\in[U']$ for $U' = O(1/\eps^4)$.
Let $\eps' =
\eps/\max\{200,f\}$ for a constant $f$ appearing in the analysis.
We pick hash functions
$h_1: [U']\rightarrow [1/(\eps')^2]$ from a
$c_1\log(1/\eps)/\log\log(1/\eps))$-wise
independent hash family and $h_2: [U']\rightarrow [1/(\eps')^2]$ from a
pairwise independent family. The value $c_1$ is a positive constant
to be chosen later, and $h_1$ is chosen from a hash family of Siegel
\cite{Siegel04} to have constant evaluation time. The function $h_1$
should be thought of
as the function that assigns the $L_0$ items to their appropriate bins,
while $h_2$ is chosen as part of a technical solution to prevent two 
items with non-zero frequency that hash to the same bin from 
canceling each other out.

We also choose a prime $p$ randomly in $[D, D^2]$ for $D =
\log(mM)/\eps^2$. Notice that for $mM$ larger than
some constant, by standard results on the density of primes,
there are at least $\log(mM)/(400\eps^2)$ primes in
the interval $[D, D^2]$. This implies non-zero frequencies remain
non-zero modulo $p$ with good probability. Next, we randomly pick a
vector $\mathbf{u}\in \mathbb{F}_p^{1/(\eps')^2}$.

We maintain $1/(\eps')^2$ counters $C_1,C_2,\ldots,C_{1/(\eps')^2}$
modulo $p$,
each initialized to zero.  Upon receiving an update $(i,v)$, we
do
$$C_{h_1(i)} \leftarrow (C_{h_1(i)} + v\cdot\mathbf{u}_{h_2(i)})\mod p.$$
Let $I = \{i : C_i\neq 0\}$.  If
$|I| \le 100$, our estimate of $L_0$ is $|I|$.  Else, 
our estimate is $\tilde{L}_0 = \ln(1 -
(\eps')^2|I|)/\ln(1-(\eps')^2)$.

Before we analyze our algorithm, we need a few lemmas and
facts.

\begin{lemma}\LemmaName{logging-works}
Let $\mathcal{H}$ be a family of $c \cdot
\log(1/\eps)/\log\log(1/\eps)$-wise
independent hash functions $h:[U] \rightarrow [1/\eps^2]$ for
a sufficiently large constant $c > 0$. Let $S \subset [U]$ be an
arbitrary subset of $100 \leq L_0 \leq 1/(20\eps^2)$ distinct
items. Suppose we choose a random $h \in \mathcal{H}$. For $i \in
[1/\eps^2]$, let $X_i'$ be an indicator variable which is $1$ if and
only if there
is an $x \in S$ for which $h(x) = i$. Let $X' =
\sum_{i=1}^{1/\eps^2}X_i'$ and
let $Y = \ln(1 - \eps^2 X')/\ln(1-\eps^2)$.  Then there is a constant
$f>0$ so that $\Pr_h[|Y - L_0| \ge \eps f L_0] \le 1/4$.
Moreover, for any $x = (1\pm c\eps)\mu$, 
$|\ln(1 - \eps^2 x)/\ln(1-\eps^2) - L_0|\le \eps f L_0$ for a constant
$f = f(c)$, where $\mu = \eps^{-2}(1-(1-\eps^2)^{L_0})$.
\end{lemma}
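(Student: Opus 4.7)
The key observation is that $Y$ is precisely the compositional inverse of the deterministic map $L\mapsto \eps^{-2}(1-(1-\eps^2)^L)$ that sends $L_0$ to $\mu$: plugging $x=\mu$ into the formula for $Y$ yields $\ln((1-\eps^2)^{L_0})/\ln(1-\eps^2)=L_0$ exactly. Thus the proof splits into (i) a deterministic sensitivity claim that this inverse has derivative of size $O(1)$ in a small relative neighborhood of $\mu$, which already yields the ``moreover'' clause, and (ii) a Chebyshev-type concentration argument that $X'=(1\pm c\eps)\mu$ with probability at least $3/4$.

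\textbf{Step (i): Lipschitz estimate.} Differentiating, $dY/dx = -\eps^2/[\ln(1-\eps^2)(1-\eps^2 x)]$. Since $L_0\le 1/(20\eps^2)$, we have $1-\eps^2\mu=(1-\eps^2)^{L_0}\in [e^{-1/20}(1-o(1)),1]$, and $-\eps^2/\ln(1-\eps^2)=1+O(\eps^2)$, so $|dY/dx|=\Theta(1)$ throughout $x\in[(1-c\eps)\mu,(1+c\eps)\mu]$. The mean value theorem then gives $|Y(x)-L_0|=O(|x-\mu|)=O(\eps\mu)=O(\eps L_0)$, where the last step uses $\mu\le L_0$ (itself from $1-(1-\eps^2)^{L_0}\le\eps^2 L_0$). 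This yields the moreover claim with an appropriate constant $f=f(c)$.

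\textbf{Step (ii): Concentration of $X'$.} I would compute $\E[X']$ and $\Var[X']$ up to negligible error via truncated inclusion--exclusion and then apply Chebyshev. Writing $Z_i=\prod_{s\in S}(1-\mathbf{1}[h(s)=i])$ for the indicator that bin $i$ is empty, expansion gives $Z_i=\sum_{T\subseteq S}(-1)^{|T|}\mathbf{1}[h(s)=i\ \forall s\in T]$. Truncating this alternating sum at level $\ell=\Theta(\log(1/\eps)/\log\log(1/\eps))$ and invoking Bonferroni's inequalities, the $k$-wise independence of $h$ (with $k\ge \ell+1$) lets us evaluate every surviving term exactly as $\eps^{2|T|}$, while the truncation error is bounded by $\binom{L_0}{\ell+1}\eps^{2(\ell+1)}\le (L_0/K)^{\ell+1}/(\ell+1)!\le (1/20)^{\ell+1}/(\ell+1)!$, which by Stirling is $\eps^{\Omega(c)}$ and hence $o(\eps)$ for $c$ a sufficiently large constant. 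Summing, $\E[X']=\mu(1+o(\eps))$. An analogous truncated inclusion--exclusion applied to $Z_iZ_j=\prod_s(1-\mathbf{1}[h(s)\in\{i,j\}])$ evaluates the joint probability to within negligible error of its fully-independent value, so $\mathrm{Cov}(X_i',X_j')$ is approximated by $(1-2\eps^2)^{L_0}-(1-\eps^2)^{2L_0}$. Taylor-expanding in $L_0/K\le 1/20$, the standard balls-and-bins computation then gives $\Var[X']=O(L_0^2/K)=O(\eps^2\mu^2)$, and Chebyshev yields $\Pr[|X'-\mu|>c\eps\mu]\le 1/4$ for $c$ a suitable constant. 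Composing with Step~(i) completes the proof.

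\textbf{Main obstacle.} The delicate bookkeeping lies in certifying that the truncation errors in both $\E[X']$ and $\Var[X']$ are genuinely $o(\eps)\mu$ and $o(\eps^2)\mu^2$. The promise $L_0\le 1/(20\eps^2)$ is used essentially: it keeps $L_0/K\le 1/20$ bounded strictly below $1$, which is what allows $(L_0/K)^{\ell+1}/(\ell+1)!$ to decay faster than any fixed power of $\eps$ once $\ell=\Theta(\log(1/\eps)/\log\log(1/\eps))$, matching precisely the independence level afforded by Siegel's hash family. Much of this analysis is likely already packaged in the balls-and-bins lemmas of \Section{balls-and-bins}, so the proof should reduce to invoking those lemmas with $A=L_0$, $B=0$, $K=1/\eps^2$, composed with the deterministic Lipschitz argument of Step~(i).
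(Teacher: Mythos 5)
Your proposal is correct and follows essentially the same route as the paper: the ``moreover'' clause is proved by the same bounded-derivative/mean-value argument on $\ln(1-\eps^2 x)/\ln(1-\eps^2)$ (using that $(1-\eps^2)^{L_0}$ is bounded away from $0$ and $\ln(1-\eps^2)=\Theta(\eps^2)$), and the concentration of $X'$ is exactly what the appendix lemmas deliver — \Lemma{ind} is your truncated inclusion--exclusion under $k$-wise independence (via the partial sums $f_k(n)=\sum_{i\le k}(-1)^i\binom{n}{i}$), \Lemma{yuck} is your Taylor-expansion variance bound $\Var[X]=O(A^2/K)$, and \Lemma{ind-consequences} is the Chebyshev step giving $|X'-\mu|\le 8\eps\mu$ with probability $3/4$. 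The paper's proof of the lemma itself simply performs your Step~(i) and then cites \Lemma{ind-consequences}, which is precisely the reduction you anticipated.
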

\begin{proof}
We first prove the second statement.
Recall $100\le L_0 \le 1/(20\eps^2)$, implying $\eps < 1/5$.
Supposing $|x - \mu| \le c\eps\mu$ for some constant $c > 0$, we have
\begin{eqnarray*}
\frac{\ln(1-\eps^2 x)}{\ln(1-\eps^2)} &=& \frac{\ln((1-\eps^2)^{L_0}
  \pm 8\eps^3\mu)}{\ln(1-\eps^2)}\\
&=& \frac{\ln((1-\eps^2)^{L_0})}{\ln(1-\eps^2)} \pm
\frac{O(\eps^3\mu)}{\ln(1-\eps^2)}\\
&=& L_0 \pm O\left(\frac{\eps^3\mu}{\eps^2}\right)\\
&=& L_0 \pm O(\eps\mu)\\
&=& (1\pm O(\eps))L_0
\end{eqnarray*}

The second equality holds since $\eps$ is bounded away from $1$,
implying $y = (1-\eps^2)^{L_0}$ is bounded away from $0$, so
the derivative of $\ln$ at $y$ is bounded by a constant.  The third
equality similarly holds since $1-\eps^2$ is
bounded away from $0$ so that $\ln(1-\eps^2) = \Theta(\eps^2)$.  The
final equality holds since $\mu \le L_0$.  The first part of the
theorem follows since
$|X-\mu|\le 8\eps\mu$ with probability at least $3/4$ by
\Lemma{ind-consequences}.
\end{proof}

\begin{fact}\FactName{finitefield-fact}
Let $\mathbb{F}_q$ be a finite field and $v\in\mathbb{F}_q^d$ be a
non-zero vector.  Then, picking a vector $w$ at random in
$\mathbb{F}_q^d$ gives $\Pr[v\cdot w = 0] = 1/q$, where $v\cdot w$ is
the inner product over $\mathbb{F}_q$.
\end{fact}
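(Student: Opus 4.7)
The plan is to use the standard fact that a nontrivial linear functional on $\mathbb{F}_q^d$ is uniformly distributed on $\mathbb{F}_q$ when evaluated at a uniform input. Since $v$ is nonzero, pick any coordinate $i$ with $v_i \neq 0$; because $\mathbb{F}_q$ is a field, $v_i$ is a unit.

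First I would condition on the values of $w_j$ for all $j \neq i$, which are chosen uniformly and independently. Under this conditioning, the equation $v \cdot w = 0$ becomes
\[
v_i w_i \;=\; -\sum_{j \neq i} v_j w_j,
\]
a single linear equation in the one remaining unknown $w_i \in \mathbb{F}_q$. Since $v_i$ is invertible, this equation has exactly one solution, namely $w_i = -v_i^{-1} \sum_{j \neq i} v_j w_j$. As $w_i$ is uniform on $\mathbb{F}_q$ and independent of the other $w_j$, the conditional probability that $w_i$ takes this particular value is exactly $1/q$.

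Finally, averaging over the choice of the remaining coordinates gives $\Pr[v \cdot w = 0] = 1/q$, as claimed. There is no real obstacle here: the only ``subtlety'' is noticing that one should condition on all coordinates other than a chosen nonzero one so as to reduce to a single-variable linear equation over a field, which has a unique solution.
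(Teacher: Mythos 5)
Your proof is correct and is essentially the paper's argument in a different guise: the paper simply notes that $\setst{w}{v\cdot w=0}$ is a $(d-1)$-dimensional subspace with $q^{d-1}$ points, while your conditioning on the coordinates $w_j$, $j\neq i$, and solving for the unique $w_i$ is just an explicit way of exhibiting that same count. Both yield $\Pr[v\cdot w=0]=q^{d-1}/q^d=1/q$.
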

\begin{proof}
The set of vectors orthogonal to $v$ is a linear subspace of
$\mathbb{F}_q^d$ of dimension $d-1$ and thus has $q^{d-1}$ points.  A
random $w\in\mathbb{F}_q^d$ thus lands in this subspace with
probability $1/q$.
\end{proof}

\begin{fact}\FactName{pairwise-fact}
Let $U$,$t$ be positive integers.  Pick a function $h:[U]\rightarrow
[t]$ from a pairwise independent family.  Then for any set $S\subset
[U]$ of size $s\le t$, $\E[\sum_{i=1}^s \binom{|h^{-1}(i)\cap S|}{2}]
\le s^2/(2t)$.
\end{fact}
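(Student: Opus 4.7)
The plan is to rewrite the sum as a count of colliding pairs and then use pairwise independence. Specifically, for any $h$ and any subset $S$, the identity
\[\sum_{i=1}^{t}\binom{|h^{-1}(i)\cap S|}{2} \;=\; \bigl|\{\{x,y\}\subset S : h(x)=h(y)\}\bigr|\]
holds pointwise, since each colliding pair $\{x,y\}$ with $h(x)=h(y)=i$ contributes exactly once to the bin-$i$ term. Since $\binom{\cdot}{2}\geq 0$, the sum $\sum_{i=1}^{s}$ in the statement is upper bounded by the full sum $\sum_{i=1}^{t}$.

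Taking expectations and using linearity over the $\binom{s}{2}$ unordered pairs, together with pairwise independence of $h$, yields
\[\E\!\left[\sum_{i=1}^{s}\binom{|h^{-1}(i)\cap S|}{2}\right] \;\leq\; \sum_{\{x,y\}\subset S}\Pr[h(x)=h(y)] \;=\; \binom{s}{2}\cdot\frac{1}{t} \;\leq\; \frac{s^{2}}{2t},\]
which is the desired bound. Here the collision probability $\Pr[h(x)=h(y)]=1/t$ for any fixed $x\neq y$ follows directly from pairwise independence (marginally $h(x)$ is uniform in $[t]$, and conditioned on $h(x)$, $h(y)$ is also uniform).

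There is no real obstacle: the proof is essentially a one-line calculation once one recognizes the combinatorial identity that $\binom{m}{2}$ counts the pairs among $m$ items. The only minor subtlety is accommodating the fact that the stated sum runs to $s$ rather than $t$, which is handled by nonnegativity of binomial coefficients.
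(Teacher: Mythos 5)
Your proof is correct and follows essentially the same route as the paper's: both reduce the sum of $\binom{|h^{-1}(i)\cap S|}{2}$ over bins to a count of colliding pairs and bound each pair's collision probability by $1/t$ via pairwise independence, yielding $\binom{s}{2}/t \le s^2/(2t)$. Your explicit handling of the sum running only to $s$ (via nonnegativity) is a small point the paper glosses over, but the argument is the same.
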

\begin{proof}
Assume $S=\{1,\ldots,s\}$.  Let $X_{i,j}$ indicate
$h(i)=j$.  By symmetry of the $X_{i,j}$, the desired expectation is
$$
t\sum_{i<i'}\E[X_{i,1}]\E[X_{i',1}]= t 
\binom{s}{2}\frac{1}{t^2} \le \frac{s^2}{2t}$$
\end{proof}

To evaluate the hash function $h_1$ in constant time, we use the
following a theorem of Siegel, in a form that was
stated more succinctly
by Dietzfelbinger and Woelfel \cite{DW03}.

\begin{theorem}[Siegel \cite{Siegel04}]\TheoremName{siegel}
Let $0<\mu<1$ and $k\ge 1$ with $\mu k<1$ be given.  Then if $\zeta <
1$ and $d\ge 1$ satisfy $\zeta \ge \frac{2k}{d} + \frac{1+\log d +
  \mu\log z}{\zeta\log z}\cdot k$ (for $z$ large enough), then there
is a way of randomly choosing a function $h:[z^k]\rightarrow [z]$ such
that the following hold:
(1) the description of $h$ comprises
$O(z^{\zeta})$ words in $[z]$,
(2) the function $h$ can be evaluated by XOR-ing
together $d^{k/\zeta}$ $k\log z$-bit words, and
(3) the class formed by all
these $h$'s is $z^{\mu}$-wise independent.
\end{theorem}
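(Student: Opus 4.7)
The plan is to follow Siegel's original construction, whose essential idea is to realize the hash function as an XOR over a small truly-random table indexed by a structured bipartite ``code'' on $[z^k]$. Concretely, I would store a table $T\colon [z^{\zeta}]\to [z]$ of uniformly random words, and fix an explicit map $\Gamma\colon[z^k]\to[z^{\zeta}]^{d^{k/\zeta}}$ assigning each input $x$ a tuple $(\Gamma_1(x),\ldots,\Gamma_{d^{k/\zeta}}(x))$ of table positions. The hash value would then be defined as $h(x)=\bigoplus_{j} T[\Gamma_j(x)]$. Items (1) and (2) in the statement are immediate from this: the description of $h$ is essentially $T$, using $O(z^{\zeta})$ words of $\lceil\log z\rceil$ bits, and evaluating $h(x)$ is a XOR of $d^{k/\zeta}$ table lookups.

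The real content is item (3): choosing $\Gamma$ so that the induced family is $z^{\mu}$-wise independent. The key reduction is the following linear-algebraic observation: if for every set $S\subseteq[z^k]$ with $|S|\le z^{\mu}$ the incidence $0/1$ matrix $M_S\in\mathbb{F}_2^{S\times[z^{\zeta}]}$ (row $x$ is the indicator of $\{\Gamma_1(x),\ldots,\Gamma_{d^{k/\zeta}}(x)\}$) has full row rank, then the vector of values $(h(x))_{x\in S}$ is a uniformly random element of $[z]^{S}$ (viewing $[z]$ as $\mathbb{F}_2^{\log z}$ and treating XOR as addition). So the goal reduces to building $\Gamma$ with the ``$z^{\mu}$-peelability'' property that every $S$ of size $\le z^{\mu}$ contains some $x\in S$ with a private table cell not used by any other $x'\in S$; iterating this peeling gives full rank.

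To construct such a $\Gamma$, I would use a layered ``tensoring'' of a small base gadget. The base gadget is a bipartite graph on $[z]\to[z^{\zeta}]$ of left-degree $d$ with the unique-neighbor property for sets of size up to $z^{\mu\zeta/\cdots}$; such a gadget is obtained by the probabilistic method, with the parameter constraint $\zeta\ge \frac{2k}{d}+\frac{1+\log d+\mu\log z}{\zeta\log z}\cdot k$ arising from a first-moment union bound over bad subsets (the term $2k/d$ handles the trivial collision obstruction, and the second term is the entropy needed to make the bad-set probability $o(1)$). I would then compose $k/\zeta$ copies of the base gadget along the $k$ coordinates of $x\in[z^k]$, giving degree $d^{k/\zeta}$ on the input side and preserving the peeling property via a standard product argument.

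The main obstacle, and the step I would spend the most care on, is verifying that the tensor composition correctly preserves the unique-neighbor / peelability property while keeping both the table size at $O(z^{\zeta})$ and the evaluation degree at $d^{k/\zeta}$; the parameter inequality in the theorem statement is exactly the bookkeeping constraint that the random base gadget exists and that tensoring does not blow these up. Once this is in place, the three claims of the theorem follow immediately from the construction (space), the XOR formula (time), and the peeling-implies-independence reduction (independence). I would also remark that derandomizing the choice of base gadget is not needed, since the gadget is part of the hash family's description and can be fixed once and for all, but a pseudorandom choice can be substituted if one wants a uniform construction.
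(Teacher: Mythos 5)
The paper does not prove this statement: Theorem~\ref{thm:siegel} is imported verbatim from Siegel \cite{Siegel04} (in the succinct form given by Dietzfelbinger and Woelfel \cite{DW03}) and is used purely as a black box to get $O(1)$-time evaluation of the hash function $h_1$. So there is no in-paper argument to compare yours against; the only fair comparison is with Siegel's original proof, and your outline does track its actual structure: a truly random table $T$ of $O(z^{\zeta})$ words, a fixed bipartite ``addressing'' graph of left-degree $d^{k/\zeta}$, evaluation by XOR of the addressed cells, and the reduction of $z^{\mu}$-wise independence to full row rank of the incidence matrix of every small left set (Siegel phrases this via perfect matchings/Hall's condition rather than unique-neighbor peeling, but either suffices and your peeling argument for full rank is sound). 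The graph-powering step that lifts a probabilistically-constructed gadget on a small domain to $[z^k]$ while raising the degree from $d$ to $d^{k/\zeta}$ is also the genuine mechanism behind the theorem.

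That said, as a proof your proposal has real gaps rather than omitted routine details. The inequality $\zeta \ge \frac{2k}{d} + \frac{1+\log d + \mu\log z}{\zeta\log z}\cdot k$ is precisely the quantitative content of the theorem, and you assert it ``arises from a first-moment union bound'' without performing the count of bad subsets against the degrees of freedom of the random gadget; without that computation one cannot confirm that the stated constraint (as opposed to some other trade-off among $\zeta, d, k, \mu$) is what the construction needs. Likewise, the claim that composing $k/\zeta$ copies of the gadget ``preserves the peeling property via a standard product argument'' is exactly the step Siegel has to work for: the product graph's expansion degrades multiplicatively across layers, and one must check it still clears the threshold $z^{\mu}$ while the right-hand side stays at $O(z^{\zeta})$. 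You correctly flag both points as the main obstacles, which is the right instinct, but flagging them is not the same as closing them. For the purposes of this paper none of this matters --- the theorem is cited, not proved --- but the proposal should not be read as a self-contained proof of the statement.
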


Finally, we need the following lemma to achieve $O(1)$ reporting time.

\begin{lemma}\LemmaName{fastlog}
Let $K=1/\eps^2$ be a positive integer with $\eps < 1/2$.  It is
possible to construct a lookup
table requiring $O(\eps^{-1}\log(1/\eps))$ bits such that 
 $\ln(1 - c/K)$ can then be computed with relative accuracy $\eps$ in
 constant time for all integers $c\in[4K/5]$.
\end{lemma}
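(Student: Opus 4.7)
The plan is to split the query range $c \in \{1, \ldots, \lfloor 4K/5\rfloor\}$ into a small regime ($c \leq \sqrt{K} = 1/\eps$) and a large regime ($c > \sqrt{K}$), and show each can be handled in $O(1)$ time, with only the large regime requiring a precomputed table.

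In the small regime, $c/K \leq \eps$, so I would simply return the first-order Taylor approximation $-c/K$. Since $|\ln(1-y) + y| \leq y^2$ for $y \in [0,1/2]$, the absolute error is at most $(c/K)^2$, which is at most $\eps \cdot (c/K)$ and hence at most a constant times $\eps \cdot |\ln(1-c/K)|$, giving the claimed relative accuracy. No lookup is needed in this case.

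In the large regime I would precompute a table $T[\alpha] = \ln(1 - \alpha/\sqrt{K})$ for each integer $\alpha \in \{0, 1, \ldots, \lceil 4\sqrt{K}/5\rceil\}$, each stored to $O(\log(1/\eps))$ bits of precision. That is $O(\sqrt{K}) = O(1/\eps)$ entries and thus $O(\eps^{-1}\log(1/\eps))$ bits in total, matching the claimed budget. For a query $c$, I would let $\alpha$ be the nearest integer to $c/\sqrt{K}$, so $|c/K - \alpha/\sqrt{K}| \leq 1/(2\sqrt{K}) = \eps/2$, and then use the identity
\[
\ln(1 - c/K) \;=\; \ln(1 - \alpha/\sqrt{K}) \;+\; \ln(1 - y), \qquad y \;=\; \frac{c/K - \alpha/\sqrt{K}}{1 - \alpha/\sqrt{K}},
\]
returning $T[\alpha] - y$ as the answer. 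Because $c \leq 4K/5$ forces $\alpha/\sqrt{K} \leq 4/5 + O(\eps)$, the denominator $1 - \alpha/\sqrt{K}$ is bounded below by a positive constant, so $|y| = O(\eps)$ and hence $|\ln(1-y) + y| = O(y^2) = O(\eps^2)$. This amounts to a constant number of word-RAM arithmetic operations on $O(\log(1/\eps))$-bit quantities plus a single table lookup.

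The absolute error of the output is therefore $O(\eps^2)$ in the large regime, and since $c > \sqrt{K}$ forces $|\ln(1 - c/K)| \geq |\ln(1-\eps)| \geq \eps$, this translates into the desired relative accuracy $\eps$. The main thing to verify carefully is that the three sources of error---the Taylor truncation of $\ln(1-y)$, the finite precision of the stored $T[\alpha]$, and any rounding in computing $c/K$ and $\alpha/\sqrt{K}$---all fit within the $O(\eps^2)$ absolute budget; since $1 - \alpha/\sqrt{K}$ is uniformly $\Omega(1)$ throughout the large-$c$ regime, this follows by storing $T[\alpha]$ and performing arithmetic to a sufficiently large $O(\log(1/\eps))$-bit precision.
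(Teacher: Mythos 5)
Your proof is correct, but it takes a genuinely different route from the paper's. The paper discretizes the interval $[1/5,\,1-\eps^2]$ of possible values of $1-c/K$ \emph{geometrically} by powers of $(1+\eps/15)$ (again $O(1/\eps)$ grid points), stores $\ln$ at each grid point to relative accuracy $\eps/3$, and answers a query by returning the stored value at the nearest grid point with no correction term; the whole argument rests on the single inequality $|\ln(1-z)|\ge z$, which converts a multiplicative-$(1\pm\eps')$ perturbation of the argument into a relative error $O(\eps)$ in the logarithm. You instead use a \emph{uniform} grid of spacing $1/\sqrt{K}=\eps$ in the argument $c/K$, recover the lost accuracy with the first-order correction $-y$ coming from the exact identity $\ln(1-c/K)=\ln(1-\alpha/\sqrt{K})+\ln(1-y)$, and peel off the regime $c\le 1/\eps$ where no table is needed at all because the raw Taylor approximation $-c/K$ already has relative error $\le c/K\le\eps$. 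The trade-off: the paper's scheme is a pure table lookup with a one-line error analysis, whereas yours requires a division and a subtraction per query plus the explicit three-way error accounting you flag at the end; on the other hand, your two-regime split handles the small-$c$ end (where $|\ln(1-c/K)|$ is as small as $\eps^2$ and nearest-point schemes are most delicate) completely transparently, and your uniform grid makes the $O(\eps^{-1}\log(1/\eps))$ space bound immediate. Two small points to tighten: taking $\alpha=\lfloor c/\sqrt{K}\rfloor$ rather than the nearest integer guarantees $1-\alpha/\sqrt{K}\ge 1/5$ outright (with rounding to nearest, the bound $1/5-\eps/2$ can degrade for $\eps$ near $1/2$, though it stays positive since $c$ and $K$ are integers), and your final guarantee is relative accuracy $O(\eps)$ rather than exactly $\eps$, so you should run the construction with $\eps$ replaced by $\eps/c_0$ for a suitable constant $c_0$, exactly as the paper does with its $\eps'=\eps/15$.
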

\begin{proof}
We set $\eps' = \eps/15$ and discretize the
interval $[1/5, 1-\eps^2]$ geometrically by powers
of $(1+\eps')$.  We precompute the natural algorithm evaluated at all
discretization points, with relative error $\eps/3$, taking space
$O(\eps^{-1}\log(1/\eps))$.  We
answer a query $\ln(1-c/K)$ by outputting the
natural logarithm of the closest discretization point in the
table. Our output is then, up to $(1\pm\eps/3)$,
$$\ln(1 - (1\pm\eps')c/K) = \ln(1 - c/K \pm \eps'c/K) = \ln(1 - c/K) \pm
5\eps'c/K = \ln(1 - c/K) \pm \eps c/(3K).$$
Using the fact that $|\ln(1-z)|\ge z/(1-z)$ for $0<z<1$, we have that
$|\ln(1- c/K)| \ge c/(K-c) \ge c/K$. Thus, 
$$(1\pm\eps/3)(\ln(1 - c/K) \pm \eps c/K) =
(1\pm\eps/3)(1\pm\eps/3)\ln(1 - c/K) = (1\pm\eps)\ln(1 - c/K).$$
\end{proof}

Now we analyze \textsc{LogEstimator}.

\begin{theorem}\TheoremName{main-logestimator}
Ignoring the space to store $h_3$, \textsc{LogEstimator}
uses space
$O(\eps^{-2}(\log(1/\eps) + \log\log(mM)))$.  The
update and reporting times are $O(1)$.  If $L_0 \le 1/(20\eps^2)$ then
\textsc{LogEstimator} outputs a value
$\tilde{L}_0 = (1\pm \eps)L_0$ with probability at least $3/5$.
\end{theorem}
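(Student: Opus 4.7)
The plan is to verify the three claims of the theorem separately: space, time, and correctness. For the space bound, I tally the contributions: the $1/(\eps')^2$ counters each live in $\mathbb{F}_p$ and so take $O(\log p) = O(\log(1/\eps) + \log\log(mM))$ bits; the random vector $\mathbf{u}$ uses the same total space; the prime $p$ itself fits in $O(\log(1/\eps) + \log\log(mM))$ bits; a pairwise-independent $h_2$ needs $O(\log(1/\eps))$ bits; and $h_1$, instantiated via \Theorem{siegel} on the reduced universe $[U']$ with independence $k=\Theta(\log(1/\eps)/\log\log(1/\eps))$, uses $O(\eps^{-2\zeta})$ words of $O(\log(1/\eps))$ bits for any small $\zeta$, a lower-order term. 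The total matches the claim. For the time bound, $h_2,h_3$ evaluate in $O(1)$ trivially, and \Theorem{siegel} gives the same for $h_1$; the per-update work is then a modular multiply-add plus updating one counter. To achieve $O(1)$ reporting I maintain $|I|$ incrementally, noting that each update changes at most one counter's zero/non-zero status, and invoke \Lemma{fastlog} to evaluate the logarithm in $O(1)$ after checking its domain hypothesis $|I|\le 4K/5$ (which holds since $|I|\le L_0\le 1/(20\eps^2)\ll 1/(\eps')^2$).

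The heart of the proof is to show that, with constant probability, $|I|$ equals the number of $h_1$-bins receiving at least one of the (at most $L_0$) items with non-zero frequency. I will union-bound over the following bad events, each tunable to probability at most a small constant: (i) $h_3$ fails to perfectly hash the $L_0$ non-zero-frequency items into $[U']$, controlled by the birthday bound for $U'=\Theta(1/\eps^4)$ with $b$ large enough; (ii) some $f_i\in[-mM,mM]$ becomes $0\pmod p$, controlled since each fixed non-zero $f_i$ has at most $O(\log(mM)/\log D)$ prime divisors in $[D,D^2]$, out of $\Omega(D/\log D)$ primes, so the per-item failure probability is $O(\eps^2)$ and the union over at most $1/(20\eps^2)$ items is $O(1)$ (tunable by increasing the constant in $D$); (iii) some pair of non-zero-frequency items collides in both $h_1$ and $h_2$, which by the pairwise independence of each has expected count $\binom{L_0}{2}(\eps')^4\ll 1$; and (iv) conditional on (i)–(iii), some non-empty bin's value vanishes. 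For event (iv), within each non-empty bin the value is a linear combination of distinct coordinates of $\mathbf{u}$ with non-zero coefficients in $\mathbb{F}_p$, so by \Fact{finitefield-fact} it is zero with probability $1/p$, and a union bound over $1/(\eps')^2$ bins gives failure probability $O(1/(p(\eps')^2)) = o(1)$.

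Conditioned on none of (i)–(iv) occurring, $|I|$ is precisely the balls-in-bins count from the random process described in \Lemma{logging-works}. I then split on $L_0$. If $L_0 < 100$, the expected number of $h_1$-collisions $\binom{L_0}{2}(\eps')^2$ is tiny, so with high probability $|I|=L_0$, the algorithm returns $|I|$ and the answer is exact. If $L_0\ge 100$, then $L_0\le 1/(20\eps^2)\le 1/(20(\eps')^2)$ places us in the hypothesis of \Lemma{logging-works} applied with $\eps'$. The lemma guarantees that $|\tilde L_0 - L_0|\le \eps' f L_0$ with probability at least $3/4$; since $\eps'=\eps/\max\{200,f\}$ this is a $(1\pm\eps)$ relative-error estimate. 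A final union bound over all bad events, each tuned to failure probability at most $1/20$, together with the $1/4$ slack from \Lemma{logging-works}, yields overall success probability at least $3/5$.

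The main obstacle I expect is steps (iii)–(iv): turning the multiplicative-cancellation concern inside a bin into a clean probability bound. The random prime $p$ prevents single-item cancellations, the pairwise independence of $h_2$ separates colliding $h_1$-items into distinct $\mathbf{u}$-coordinates, and \Fact{finitefield-fact} then rules out adversarial $\mathbf{u}$-cancellations; simultaneously keeping the interplay between these three sources of randomness quantitative, while ensuring the counter width stays $O(\log(1/\eps)+\log\log(mM))$, is the delicate part of the argument.
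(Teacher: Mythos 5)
Your proposal is correct and follows essentially the same route as the paper: the same decomposition into bad events (non-zero frequencies vanishing mod $p$, simultaneous $h_1$/$h_2$ collisions, a non-empty bin's inner product with $\mathbf{u}$ vanishing), the same reliance on \Lemma{logging-works}, \Fact{finitefield-fact}, Siegel's family, and \Lemma{fastlog}, and the same case split at $L_0=100$. Your minor deviations (a direct expectation bound of $\binom{L_0}{2}(\eps')^4$ for the double-collision event instead of the paper's two-stage Markov argument, and a deterministic bound $|I|\le L_0$ for the lookup-table domain instead of Markov) are equally valid.
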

\begin{proof}
The vector $\mathbf{u}$ takes $O(\eps^{-2}\log p) =
O(\eps^{-2}(\log(1/\eps) + \log\log(mM)))$ bits to store. Each
counter $C_i$ takes space $O(\log p)$ and there are $O(1/\eps^2)$
counters, thus also requiring $O(\eps^{-2}(\log(1/\eps) +
\log\log(mM)))$ total space.  The hash function $h_2$ requires
$O(\log(1/\eps))$ space.

For the update time, for each stream
token we must evaluate three hash functions.
The hash functions $h_2,h_3$ each take constant time.  For $h_1$, we
can
use the hash family of \Theorem{siegel} with $z = 1/\eps^2,
k=2+o(1),\mu=1/8,\zeta=1/2,d=9$.  We then have that $h_1$ is
$1/\eps^{1/4}$-wise independent, which is
$c_1\log(1/\eps)/\log\log(1/\eps)$-wise
independent for $\eps$ smaller than some constant.  Also, $h_1$
can be evaluated in constant time,
and it requires $O(\eps^{-1}\log(1/\eps))$ bits of storage. This
storage is dominated by the amount of storage required just to hold
the counters $C_i$.
We must
also multiply by a coordinate of $\mathbf{u}$ fitting in a word,
taking constant time.

For the reporting time, we can precompute $\ln(1 - (\eps')^2)$ during
preprocessing.  To compute $\ln(1 - (\eps')^2|I|)$, first note that we
can maintain $|I|$ in constant time during updates using an
$O(\log(1/\eps))$-bit counter.  Also note that
$$\E[|I|] \le (1\pm\eps)\frac{1}{(\eps')^2}\left(1 - \left(1 -
    (\eps')^2\right)^{L_0}\right) \le \frac{2}{(\eps')^2}\left(1 -
  \left(1 - \frac 1{800000}\right)\right) \le
\frac{1}{400000(\eps')^2} .$$
Thus, by Markov's inequality, $|I|\le 1/(4(\eps')^2)$ with probability
at least $99/100$, and we can use a lookup table as in
\Lemma{fastlog} compute the natural logarithm.  The space required to
store the lookup table is dominated by the space used in other parts
of the algorithm.

We now prove correctness.  First, we handle the case $100 \le L_0 <
1/(20\eps^2)$.

Let $S$ be the set of $L_0$ indices
$j\in[U']$ with $x_j\neq 0$ at the end of the stream. 

Let $\mathcal{Q}$ be the
event that $p$ does not divide any $|x_j|$. 

Let $\mathcal{Q}'$ be
the event that $h_2(j) \neq h_2(j')$ for distinct
indices $j,j'\in S$ with $h_1(j) = h_1(j')$.  

Henceforth, we condition on
both $\mathcal{Q}$ and $\mathcal{Q}'$ occurring, which we later
show holds with good probability. Define $I
\subseteq [1/(\eps')^2]$ by $I = \{i : h_1^{-1}(i)\cap S \neq
\emptyset\}$, that is, $I$ is the image of $S$ under $h_1$.
 For each $i\in I$, $C_i$ can be viewed as
maintaining the dot product of a non-zero vector $\mathbf{v}$ in
$\mathbb{F}_p^{L_0}$, the frequency vector $x$ restricted to
coordinates in $S$, with a random vector $\mathbf{w}$, namely, 
the vector obtained
by restricting $\mathbf{u}$ to coordinates in $S$.  The
vector $\mathbf{v}$ is non-zero since we condition on 
$\mathcal{Q}$, and $\mathbf{w}$ is
random since we condition on $\mathcal{Q}'$.  

Let $\mathcal{Q}''$ be
the event that no $C_i$ is zero for $i\in I$.

Conditioned on $\mathcal{Q}$, $\mathcal{Q}'$, and
$\mathcal{Q}''$, we can apply \Lemma{logging-works}, and since
$\eps' \leq \eps/f$, our estimate $\tilde{L}_0$ of $L_0$ will satisfy
$|\tilde{L}_0 - L_0| \le \eps L_0$ with
probability at least $3/4$.  

Now we analyze the probability that
$\mathcal{Q}$, $\mathcal{Q'}$, and
$\mathcal{Q''}$ all occur.  Each $|x_j|$ is at most $mM$ and thus has
at most $\log(mM)$ prime factors.  Thus, there are at most $L_0 \log(mM)
\le
\log(mM)/(20\eps^2)$ prime divisors that divide some $|x_j|$, $j\in
S$.  By our choice of $p$, we pick such a prime with probability at
most $1/20$, and thus $\Pr[\mathcal{Q}] \ge 19/20$.  

Now, let
$X_{i,j}$ be a random variable indicating that $h_1(j) = h_1(j')$ for
distinct $j,j'\in S$. Let $X = \sum_{j< j'} X_{j,j'}$.  By
\Fact{pairwise-fact} with $U=U'$, $t = 1/(\eps')^2 \ge 1/\eps^2$, and $s = L_0 <
1/(20\eps^2)$, we have that $\E[X] \le 1/(800\eps^2)$. Let $J =
\{(j,j')\in\binom{S}{2} : h_1(j)=h_1(j')\}$.  For $(j,j')\in J$
let $Y_{j,j'}$ be a random variable indicating $h_2(j) = h_2(j')$, and
let $Y = \sum_{(j,j')\in J}Y_{j,j'}$.
Then by pairwise independence of $h_2$, $\E[Y] = \sum_{(j,j')\in
  J}\Pr[h_2(j)=h_2(j')] = |J|(\eps')^2 \leq |J|\eps^2$.  
Note $|J| = X$.  Conditioned on
$X \le 20\E[X] \le 1/(40\eps^2)$, which happens with probability at
least $19/20$ by Markov's inequality, we have that $\E[Y] \leq |J|\eps^2
\le 1/40$, so that $\Pr[Y \ge 1] \le 1/40$.  Thus, $\mathcal{Q'}$
holds with probability at least $(19/20)\cdot(39/40)>7/8$.  

Finally,
by \Fact{finitefield-fact} with $q = p$, and union bounding over all
$1/\eps^2$ counters $C_i$, $\mathcal{Q''}$
holds with probability at least $1 - 1/(\eps^2p) \ge 99/100$.  Thus,
$\Pr[\mathcal{Q}\wedge\mathcal{Q'}\wedge\mathcal{Q''}] =
\Pr[\mathcal{Q}\wedge\mathcal{Q'}]\Pr[\mathcal{Q''} |
\mathcal{Q}\wedge\mathcal{Q'}] >
(19/20)\cdot(7/8)\cdot(99/100) > 4/5$ (notice that $\mathcal{Q}$ and
$\mathcal{Q'}$
are independent).  The algorithm thus succeeds
with probability at least $(4/5)\cdot(3/4) = 3/5$ in this case.

Now we consider the case $L_0 \le 100$.  If the elements of $S$ are
perfectly hashed and $\mathcal{Q}$ holds, we output $L_0$ exactly.  By
choice of $\eps'$, $1/(\eps'^2) \ge (200)^2$.  Thus, all elements of
$S$ are perfectly hashed with probability at least $7/8$ by pairwise
independence of $h_1$.  We already saw that $\Pr[\mathcal{Q}] \ge
19/20$, so we output $L_0$ exactly with probability 
$\geq (7/8)\cdot(19/20) > 3/5$.
\end{proof}

\subsection{A Rough Estimator}
For our full algorithm to function, we need
to run in parallel a subroutine giving a constant-factor approximation
to $L_0$. We describe here a subroutine \textsc{RoughEstimator} which
does exactly this.
First, we need the following lemma which states that when $L_0$ is at
most some constant $c$, it can be computed exactly in small space.
The lemma follows by picking a random prime $p =
\Theta(\log(mM)\log\log(mM))$ and pairwise independently hashing the
universe into
$[\Theta(c^2)]$ buckets.  Each bucket is a counter which tracks of the
sum of frequencies modulo $p$ of updates to universe items landing in
that bucket.  The estimate of $L_0$ is then the total number of
non-zero counters, and the maximum estimate after $O(\log(1/\eta))$
trials is finally output. This gives the following.

\begin{lemma}\LemmaName{simple-exact-parity}
There is an algorithm which, when given the promise that $L_0 \le c$,
outputs $L_0$ exactly with probability at least
$1-\eta$ using
$O(c^2\log\log(mM))$ space, in addition to
needing to store $O(\log(1/\eta))$ independently chosen pairwise
independent hash
functions mapping $[U]$ onto $[c^2]$.  The
update and reporting times are $O(1)$.
\end{lemma}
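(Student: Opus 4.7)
I would follow the sketch stated just before the lemma. For a single trial: draw a prime $p$ uniformly at random from the primes in $[D, 2D]$, where $D = \Theta(c \log(mM))$ is chosen (using the prime number theorem) so that the interval contains at least $100\,c \log(mM)$ primes; draw a pairwise independent hash function $h : [U] \to [B]$ with $B = \Theta(c^2)$; and maintain counters $C_1, \dots, C_B$, each stored modulo $p$ and each initialized to $0$, with update rule $C_{h(i)} \leftarrow (C_{h(i)} + v) \bmod p$ on update $(i,v)$. The per-trial estimate is $|\{\, j : C_j \neq 0 \,\}|$, and the reported value is the \emph{maximum} of the estimates across $T = O(\log(1/\eta))$ independent trials, each using its own fresh hash function.

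The correctness argument has one key observation and two tail bounds. The observation is that at the end of the stream $C_j \equiv \sum_{i:h(i)=j} x_i \pmod p$, so any bucket $j$ whose $h$-preimage is disjoint from the support of $x$ satisfies $C_j = 0$; hence the per-trial estimate is always at most $L_0$, and maximizing over trials can only undershoot $L_0$. The per-trial estimate equals $L_0$ exactly whenever (a) the at most $c$ nonzero-frequency coordinates are perfectly hashed by $h$, and (b) $p$ divides none of the $|x_i|$. By pairwise independence and $B = \Omega(c^2)$, event (a) fails with probability at most $\binom{L_0}{2}/B$, which can be made as small as $1/12$ by choosing the hidden constant in $B$ large enough. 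For (b), each $|x_i| \le mM$ has at most $\log_2(mM)$ distinct prime factors, so at most $c \log_2(mM)$ primes are ``bad''; by the choice of $D$, our sample $p$ is drawn uniformly from a pool of $\Omega(c \log(mM))$ primes, so a bad prime is drawn with probability at most $1/100$. A union bound gives per-trial success probability at least $5/6$; since per-trial estimates never exceed $L_0$, the maximum over $T = O(\log(1/\eta))$ independent trials equals $L_0$ with probability at least $1 - (1/6)^T \ge 1-\eta$.

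For the space and time accounting: each counter stores a residue modulo $p$ in $\log p = O(\log\log(mM))$ bits (treating $c$ as at most $\mathrm{polylog}(mM)$, as is the case in our applications), giving $O(c^2 \log\log(mM))$ bits per trial for the counters plus $O(\log\log(mM))$ for describing $p$; the per-trial hash function is exactly the contribution charged separately in the statement. Each update does one hash evaluation, one modular addition, and a $\pm 1$ adjustment of a side register tracking $|\{j : C_j \neq 0\}|$ (updated precisely when some $C_j$ crosses between zero and nonzero), all in $O(1)$; at query time this register is read directly for $O(1)$ reporting per trial. The only subtlety worth care in this plan is calibrating the range $[D,2D]$ so that the prime pool is large enough to beat the $c \log(mM)$ bad primes while keeping $\log p = O(\log\log(mM))$, but this is immediate from PNT.
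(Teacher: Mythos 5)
Your proposal is correct and follows essentially the same route as the paper's own (sketch of a) proof: pairwise-independent hashing into $\Theta(c^2)$ buckets, counters maintained modulo a random prime of $O(\log\log(mM))$ bits, counting nonzero buckets, and taking the maximum over $O(\log(1/\eta))$ independent trials using the one-sided fact that each trial's estimate never exceeds $L_0$. The one quantitative slip is your claim that $[D,2D]$ with $D=\Theta(c\log(mM))$ contains $100\,c\log(mM)$ primes — by the prime number theorem that interval holds only $\Theta(D/\log D)$ primes, so you need $D=\Theta(c\log(mM)\log\log(mM))$ (matching the paper's choice of $p$); as you note, this does not affect the $O(\log\log(mM))$ bits-per-counter bound.
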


Now we describe \textsc{RoughEstimator}.
We pick a function $h:[U]\rightarrow [N]$ at random
from a pairwise independent family.
For each $0 \le j \le \log N$ we create a substream
$\mathcal{S}^j$ consisting of those $x\in [U]$ with
$\lsb(h(x)) = j$. Let
$L_0(\mathcal{S})$
denote $L_0$ of the substream $\mathcal{S}$.
For each $\mathcal{S}^j$ we run an instantiation $B_j$ of
\Lemma{simple-exact-parity} with $c=141$ and $\eta =
1/16$.  All instantiations share the same $O(\log(1/\eta))$ hash
functions $h^1,\ldots,h^{\Theta(\log(1/\eta))}$.

To obtain our final estimate of $L_0$ for the entire stream, we find
the largest value of $j$ for which
$B^j$ declares $L_0(\mathcal{S}^j) > 8$.
Our estimate of $L_0$ is $\tilde{L}_0 = 2^j$.  If no such $j$
exists, we estimate $\tilde{L}_0 = 1$.  Finally, we run this entire
procedure $O(1)$ times and take the median estimate.

\begin{theorem}\TheoremName{rough-est}
With probability at least $99/100$
\textsc{RoughEstimator} outputs
a value $\tilde{L}_0$ satisfying $L_0 \le \tilde{L}_0 \le
110 L_0$.  The space used is $O(\log(N)\log\log(mM))$,
and the update and reporting times are $O(1)$.
\end{theorem}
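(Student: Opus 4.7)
The strategy is a two-sided concentration argument on the substream sizes $L_0(\mathcal{S}^j)$. Because $h : [U] \to [N]$ is drawn from a pairwise-independent family, for every $i$ with $a_i \neq 0$ we have $\Pr[\lsb(h(i)) = j] = \Theta(2^{-j})$; hence, writing $X^j = L_0(\mathcal{S}^j)$, pairwise independence of $h$ gives $\mathbb{E}[X^j] = \Theta(L_0/2^j)$ and $\mathrm{Var}[X^j] \le \mathbb{E}[X^j]$. I will show that a single run of \textsc{RoughEstimator} places the reported $\tilde L_0 = 2^{j_{\max}}$ in $[L_0, 110 L_0]$ with constant probability bounded away from $1/2$, and then boost to $99/100$ by taking the median of $O(1)$ independent copies via a Chernoff bound on the majority of independent Bernoullis.

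\textbf{Upper bound $\tilde L_0 \le 110 L_0$.} Let $J_{\mathrm{hi}} = \{j : 2^j > 110 L_0\}$. Linearity and the geometric decay of $2^{-j}$ give $\sum_{j \in J_{\mathrm{hi}}} \mathbb{E}[X^j] = O(1/110)$, so by Markov's inequality, with probability at least $99/100$ after absorbing the constants, every $X^j$ with $j \in J_{\mathrm{hi}}$ equals $0$. On this event, the guarantee of \Lemma{simple-exact-parity} (which applies since $0 \le 141$) implies $B^j$ reports $0$ for each $j \in J_{\mathrm{hi}}$, so no such $j$ can become $j_{\max}$.

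\textbf{Lower bound $\tilde L_0 \ge L_0$.} For this direction I would exhibit a witness level $j^\star$ with $2^{j^\star} \ge L_0$ at which $B^{j^\star}$ declares $X^{j^\star} > 8$. Using pairwise independence, Chebyshev's inequality applied to $X^{j^\star}$ with $\mathrm{Var}\le\mathrm{Mean}$ shows $X^{j^\star} \in (8, 141)$ with arbitrary constant probability for a suitably tuned $j^\star$ (the algorithmic threshold $8$ and the lemma parameter $c = 141$ are chosen precisely to leave such a window, with the factor $110$ arising from the slack between them). \Lemma{simple-exact-parity} then ensures $B^{j^\star}$ reports the true value, which exceeds $8$, forcing $j_{\max} \ge j^\star$ and hence $\tilde L_0 = 2^{j_{\max}} \ge L_0$.

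\textbf{Subroutine correctness, resources, and main obstacle.} To couple the $B^j$'s through their shared $O(\log(1/\eta))$ hash functions, I would instantiate \Lemma{simple-exact-parity} with $\eta$ small enough that a union bound over the $O(\log N)$ relevant levels keeps the total $B^j$-failure probability below $1/200$; this uses $O(\log\log N)$ hash functions, contributing $O(\log N\log\log N)$ bits. Each level's $O(1)$ counters live mod a prime of magnitude $O(\log(mM)\log\log(mM))$, occupying $O(\log\log(mM))$ bits apiece, for $O(\log N\log\log(mM))$ bits total across all $\log N$ levels, matching the claimed space. Every update evaluates $h$, $\lsb$, and $O(1)$ universal hashes, all $O(1)$ on the word RAM, and touches a single counter; $j_{\max}$ is maintained incrementally, so reporting is $O(1)$. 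The main obstacle is the joint calibration of the constants $8$, $141$, and $110$ against the per-level failure $\eta$: the Markov tail from the upper-bound direction, the Chebyshev tail from the lower-bound direction, and the union bound over levels for $B^j$-correctness must all fit simultaneously inside the $1/100$ failure budget after the median boost, which is precisely what the gap between $8$ and $141$ (far wider than $110$) is engineered to afford.
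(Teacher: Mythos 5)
Your overall architecture matches the paper's: a first-moment bound to kill the high levels, Chebyshev at a single witness level whose expected occupancy sits in the window between the reporting threshold $8$ and the cap $c=141$, and a median of $O(1)$ independent runs boosted by Chernoff. Your upper-bound step (all levels with $2^j>110L_0$ are simultaneously empty with probability at least $1-1/110$, and an empty substream deterministically yields report $0$) is a slightly cleaner variant of the paper's per-level Markov bound at threshold $8$. Two steps, however, would fail as written.

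First, the placement of the witness level. A level $j^\star$ with $2^{j^\star}\ge L_0$ has $\E[L_0(\mathcal{S}^{j^\star})]\le 1/2$, so $\Pr[L_0(\mathcal{S}^{j^\star})>8]\le 1/16$ by Markov; no tuning puts $L_0(\mathcal{S}^{j^\star})$ in $(8,141)$ with constant probability at such a level. The witness must sit where the expected occupancy is a constant comfortably inside the window: the paper takes the largest $j^{**}$ with $\E[L_0(\mathcal{S}^{j^{**}})]\ge 55$, i.e. $2^{j^{**}}=\Theta(L_0/110)$, and Chebyshev with $\Var\le\E$ then gives $32<L_0(\mathcal{S}^{j^{**}})<142$ with probability $8/9$. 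The resulting guarantee is that $2^{j_{\max}}$ is a constant-factor \emph{under}estimate of $L_0$; the paper's own concluding inequality shares the direction confusion of the theorem statement, but your Chebyshev argument is only salvageable with the witness at $\Theta(L_0/110)$, not at $2^{j^\star}\ge L_0$.

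Second, the union bound over levels for the subroutines is both unnecessary and harmful. Only the witness level's $B^{j^{**}}$ needs to succeed probabilistically: for every other level, the estimate of \Lemma{simple-exact-parity} is a count of non-zero counters and hence is deterministically at most the true $L_0$ of that substream, so a level whose substream holds at most $8$ items can never falsely report more than $8$. The paper therefore keeps $\eta=1/16$ fixed. Taking $\eta=O(1/\log N)$ as you propose forces $O(\log\log N)$ shared hash functions (costing $O(\log N\log\log N)$ bits) and $O(\log\log N)$ repetitions of the counters at every level (costing $O(\log N\log\log N\log\log(mM))$ bits), which exceeds the claimed $O(\log N\log\log(mM))$ whenever $\log\log N\gg\log\log(mM)$.
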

\begin{proof}
We first analyze one instantiation of \textsc{RoughEstimator}.
The space to store $h$ is $O(\log N)$.
The $\Theta(\log(1/\eta))$ hash functions $h^i$ in total require
$O(\log(1/\eta)\log U) = O(\log N)$ bits to store since
$1/\eta = O(1)$. The remaining space to store a
single $B^j$ for a level is
$O(\log\log(mM))$ by \Lemma{simple-exact-parity}, and thus storing all
$B^j$ across all levels requires space $O(\log(N)\log\log(mM))$.

As for running time, upon receiving a stream update $(x,v)$, we first
hash $x$ using $h$, taking time $O(1)$.
Then, we compute $\lsb(h(x))$, also in constant time
\cite{Brodnik93,FredmanWillard93}.
Now,
given our choice of $\eta$ for $B^j$, we can update $B^j$
in $O(1)$ time by \Lemma{simple-exact-parity}.

To obtain $O(1)$ reporting time, we again use the fact that we can
compute the least significant bit of a machine word in constant time.
We maintain a
single machine word $z$ of at least $\log N$ bits and treat it as a bit
vector.  We maintain that the $j$th bit of $z$ is $1$ iff
$L_0(\mathcal{S}^j)$ is reported to be at least $8$ by $B^j$.  This
property can be maintained in
constant time during updates.  Constant reporting time then follows
since finding the deepest level $j$ with at least $8$ reported
elements is equivalent to computing $\lsb(z)$.

Now we prove correctness.  Observe that
$\E[L_0(\mathcal{S}^j)] = L_0/2^{j+1}$ when $j < \log N$ and
$\E[L_0(\mathcal{S}^j)] = L_0/2^j = L_0/N$ when $j = \log N$.
Let $j^*$ be the largest $j$ satisfying
$\E[L_0(\mathcal{S}^j)] \ge
1$ and note that $1 \le \E[L_0(\mathcal{S}^{j^*})]\le 2$.  For any
$j>j^*$, $\Pr[L_0(\mathcal{S}^j) > 8] \le 1/(8\cdot 2^{j-j^*-1})$ by
Markov's inequality.  Thus, by a union bound, the probability that any
$j>j^*$ has $L_0(\mathcal{S}^j) > 8$ is at most
$(1/8)\cdot\sum_{j-j^*=1}^{\infty} 2^{-(j-j^*-1)}
= 1/4$.  Now, let $j^{**}<j^*$ be the largest $j$ such that
$\E[L_0(\mathcal{S}^j)] \ge 55$, if such a $j$
exists.  Since we increase the $j$ by powers of $2$, we have
$55\le \E[L_0(\mathcal{S}^{j^{**}})]< 110$. Note that $h$ is pairwise
independent, so $\Var[L_0(\mathcal{S}^{j^{**}})] \le
\E[L_0(\mathcal{S}^{j^{**}})]$.
For this range of $\E[L_0(\mathcal{S}^{j^{**}})]$, we then
have by Chebyshev's inequality that
$$\Pr\left[|L_0(\mathcal{S}^{j^{**}} )- \E[L_0(\mathcal{S}^{j^{**}})]| \ge
3\sqrt{\E[L_0(\mathcal{S}^{j^{**}})]}\right] \le 1/9$$

If $|L_0(\mathcal{S}^{j^{**}})-\E[L_0(\mathcal{S}^{j^{**}})]| <
3\sqrt{\E[L_0(\mathcal{S}^{j^{**}})]}$, then 
$$32<55-3\sqrt{55}<L_0(\mathcal{S}^{j^{**}}) < 110+3\sqrt{110}<142$$ 
since $55\le \E[L_0(\mathcal{S}^{j^{**}})] < 110$.

So far we have shown that with probability at least $3/4$,
$L_0(\mathcal{S}^j)
\le 8$ for all $j > j^*$.  Thus, for these $j$ the $B^j$ will estimate
$L_0$ of the corresponding substreams to be at most $8$, and we
will not output $\tilde{L}_0 = 2^j$ for $j > j^*$.  On the other
hand, we know for $j^{**}$ (if it exists) that with probability at
least $8/9$, $\mathcal{S}^{j^{**}}$ will have $32 <
L_0(\mathcal{S}_i^{j^{**}})< 142$.  By our choice of $c =
141$ and $\eta = 1/16$ in the $B^j$, $B^{j^{**}}$ will output a value
$\tilde{L}_0(\mathcal{S}_i^{j^{**}}) \ge L_0(\mathcal{S}_i^{j^{**}})/4
> 8$ with
probability at least $1 - (1/9 + 1/16) > 13/16$ by
\Lemma{simple-exact-parity}.  Thus, with
probability at least $1 - (3/16 + 1/4) = 9/16$, we output
$\tilde{L}_0 = 2^j$ for some $j^{**} \le j \le
j^*$, which satisfies $110\cdot 2^{j} < L_0 \le 2^{j}$. If such
a $j^{**}$ does not exist, then $L_0 <
55$, and thus
$1$ serves as a $55$-approximation in this case.

Since one instantiation of \textsc{RoughEstimator} gives the desired
approximation with constant probability strictly greater than $1/2$
(i.e. $9/16$), the theorem follows by taking the median of a constant
number of independent instantiations and applying a Chernoff bound.
\end{proof}

\subsection{Putting the Final Algorithm Together}\SectionName{full-l0}
Our full algorithm \textsc{FullAlg} for estimating $L_0$ works as
follows.  Set $\eps' = \eps/420$.
Choose a
$c_1\log(1/\eps')/\log\log(1/\eps')$-wise independent hash function
$h_1$,
pairwise independent hash functions $h_2,h_3$, and random prime
$p\in[D,D^2]$ for $D = \log(mM)/\eps^2$, as is
required by \textsc{LogEstimator}.  We run
an instantiation $\LE$ of \textsc{LogEstimator} with desired error
$\eps'$,
an instantiation $\RE$ of \textsc{RoughEstimator}, and $\log N
- \log(1/(\eps')^2) = \log((\eps')^2 N)$
instantiations $\LE_{0},\ldots,\LE_{\log((\eps')^2 N)}$
of
\textsc{LogEstimator} in parallel with the promise
$L_0 \le 1/(20(\eps')^2)$ and desired error
$\eps'$. All instantiations of
\textsc{LogEstimator} share the same $h_1$, $h_2$, $h_3$, and
prime $p$. We pick a
hash function $h:[U]\rightarrow [N]$ at random from pairwise
independent family of hash functions.  For each update
$(i,v)$ in the stream, we feed the update to both $\LE$ and $\RE$.
Also, if the length $j$ of the longest suffix of zeroes in $h(i)$ is
at most $\log(1/(\eps')^2)$, we feed the update $(i,v)$ to
$\LE_j$.

Let $R$ be the estimate of $L_0$ provided by $\RE$.  If
$R<1/(20(\eps')^2)$, we output the estimate provided by
\textsc{LE}. Otherwise, we output the estimate of $\tilde{L}_0$
provided by $\LE_{\ceil{\log (R/(4400(\eps')^2))}}$. To analyze our
algorithm, we first prove the following lemma.

\begin{lemma}\LemmaName{subsamplingworks}
Let $j$ be a level such that $20/\eps^2 \le
\E[L_0(\mathcal{S}_i^j)]$.
Then
$|2^{j'} L_0(\mathcal{S}_i^j) - L_0| \le 2\eps
L_0/3$
with probability at least $7/8$ for $j' = j$ when $j = \log N$, and
$j' = j+1$ otherwise.
\end{lemma}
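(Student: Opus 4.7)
The plan is to apply Chebyshev's inequality directly. The random variable $L_0(\mathcal{S}_i^j)$ can be written as $\sum_{x \in T} \mathbf{1}[\lsb(h(x)) = j]$, where $T \subseteq [U]$ is the set of items with nonzero frequency at the end of the stream (so $|T| = L_0$). Since $h$ is drawn from a pairwise independent family mapping into $[N]$, these $L_0$ indicators are pairwise independent $\{0,1\}$ variables, and hence $\Var[L_0(\mathcal{S}_i^j)] \le \E[L_0(\mathcal{S}_i^j)]$.

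First I would identify the subsampling probability for each case. For $j < \log N$, the event $\lsb(h(x)) = j$ has probability $1/2^{j+1}$, while for $j = \log N$ it corresponds to $h(x) = 0$ (all lower bits zero with no higher bit set within $[N]$), which has probability $1/N = 1/2^{\log N}$. Matching these against the lemma's case split, we get $\E[L_0(\mathcal{S}_i^j)] = L_0/2^{j'}$ in both cases, so the claimed bound $|2^{j'} L_0(\mathcal{S}_i^j) - L_0| \le 2\eps L_0/3$ is exactly the statement that $L_0(\mathcal{S}_i^j)$ deviates from its mean by a relative error of at most $2\eps/3$.

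Second, I would plug into Chebyshev: the failure probability is at most $\Var / ((2\eps/3)^2 \E^2) \le 9/(4\eps^2 \E[L_0(\mathcal{S}_i^j)])$. Using the hypothesis $\E[L_0(\mathcal{S}_i^j)] \ge 20/\eps^2$ gives an upper bound of $9/80$, which is comfortably below the target $1/8$.

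I do not expect any real obstacle here; this is a routine second-moment calculation. The only thing worth handling carefully is the boundary case $j = \log N$, since the subsampling probability there is $1/2^{\log N}$ rather than $1/2^{\log N+1}$, which is precisely why $j'$ differs between the two cases in the statement.
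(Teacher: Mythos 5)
Your proposal is correct and matches the paper's proof essentially step for step: both write $L_0(\mathcal{S}^j)$ as a sum of pairwise independent indicators, bound the variance by the expectation, and apply Chebyshev with the hypothesis $\E[L_0(\mathcal{S}^j)]\ge 20/\eps^2$ to get failure probability at most $9/80<1/8$. Your explicit treatment of the $j=\log N$ boundary (subsampling probability $1/N$ rather than $1/2^{j+1}$) is, if anything, slightly more careful than the paper, which simply notes that case is "nearly identical."
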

\begin{proof}
Let $S = \{i:x_i\neq 0\hbox{ at the end of the stream}\}$ and for
$i\in S$ let $X_{i,j}$ be a random variable indicating that $i$ is
hashed to the substream at level $j$, and let $X_j = \sum_{i\in S}
X_{i,j}$.  We assume here $j<\log N$ since
the proof is nearly identical for $j=\log N$.  Then we have
$\E[X_j] = L_0/2^{j+1}$, and by pairwise independence of $H$,
$\Var[X_j] \le \E[X_j]$.  Thus by Chebyshev's inequality, 
$$\Pr[|2^{j+1}X - L_0| \ge 2\eps L_0/3] \le \frac{9\E[X]}{4\eps^2 \E^2[X]}
< \frac{1}{8}$$
\end{proof}

Now we prove our main theorem for $L_0$ estimation.

\begin{theorem}\TheoremName{main-thm}
\textsc{FullAlg} uses space $O(\eps^{-2}\log(\eps^2
N)(\log(1/\eps) + \log\log(mM)))$,
has $O(1)$ update and reporting times, and
$(1\pm\eps)$-approximates $L_0$ with probability at least $3/4$.
\end{theorem}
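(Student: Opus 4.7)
The plan is to handle space, update/reporting time, and correctness separately, since the algorithm has a clean layered structure: one \textsc{RoughEstimator} $\RE$, one ``catch--all'' instance $\LE$ of \textsc{LogEstimator}, and a collection of $\log((\eps')^2N)$ subsampled instances $\LE_j$.

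For the space bound I would simply add up the components. Each $\LE_j$ uses $O(\eps^{-2}(\log(1/\eps)+\log\log(mM)))$ bits by \Theorem{main-logestimator}, and there are $O(\log(\eps^2N))$ of them; all share the same hash functions $h_1,h_2,h_3$ and the same prime $p$, so there is no duplication in the most expensive objects (the $\mathbf{u}$ vector and its counters). $\RE$ costs only $O(\log(N)\log\log(mM))$ by \Theorem{rough-est}, the pairwise hash $h:[U]\to[N]$ adds $O(\log N)$, and all of these are dominated. For the time bound, each update $(i,v)$ is passed to $\LE$ and $\RE$ in $O(1)$, and then fed to at most one $\LE_j$ determined by $\lsb(h(i))$, computable in $O(1)$ on a word RAM. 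Reporting reads off $R$, performs a single comparison with $1/(20(\eps')^2)$, and returns either $\LE$'s estimate or $2^{j+1}$ times $\LE_j$'s estimate; all in $O(1)$.

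For correctness I would first condition on $\RE$'s guarantee $L_0 \le R \le 110L_0$, which holds with probability $\ge 99/100$ by \Theorem{rough-est}. If $R < 1/(20(\eps')^2)$, then $L_0\le R$ satisfies the promise of $\LE$, so by \Theorem{main-logestimator} it returns $(1\pm\eps')L_0 = (1\pm\eps)L_0$. If $R \ge 1/(20(\eps')^2)$, I would use the chosen level $j$ to argue two things simultaneously: (a) $\E[L_0(\mathcal{S}^j)] \ge 20/\eps^2$, which via pairwise independence of $h$ and Chebyshev (i.e.\ \Lemma{subsamplingworks}) gives $2^{j+1}L_0(\mathcal{S}^j) = (1\pm 2\eps/3)L_0$ with probability $\ge 7/8$; and (b) $L_0(\mathcal{S}^j) \le 1/(20(\eps')^2)$ so that $\LE_j$'s promise is met and it returns $(1\pm\eps')L_0(\mathcal{S}^j)$. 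Composing these bounds via $\eps' = \eps/420$ yields a $(1\pm\eps)$-multiplicative approximation of $L_0$ after scaling by $2^{j+1}$.

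Finally I would handle the probability budget. A naive union bound over ``$\RE$ correct'' ($\ge 99/100$), ``Chebyshev holds at level $j$'' ($\ge 7/8$), and ``the used LogEstimator correct'' ($\ge 3/5$) gives only a constant success probability bounded away from $1/2$, not $3/4$. The standard fix is to run a constant number of independent copies of each of $\LE$, $\RE$, and each $\LE_j$ and take medians/majorities, boosting each component to success probability $\ge 1-1/100$; this only multiplies the space by an $O(1)$ factor and keeps update/report times $O(1)$. A final union bound then gives total success $\ge 3/4$. The main obstacle I expect is the verification in case (b): one must check that the specific level $j=\lceil\log(R/(4400(\eps')^2))\rceil$, for every possible $L_0$ with $R\ge 1/(20(\eps')^2)$ and $L_0 \le R \le 110L_0$, lands in the narrow window where both the Chebyshev hypothesis $\E[L_0(\mathcal{S}^j)] \ge 20/\eps^2$ and the LogEstimator promise $L_0(\mathcal{S}^j)\le 1/(20(\eps')^2)$ hold together. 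The constants $420$, $4400$, and $20$ were engineered precisely so that the ratio between these two thresholds exceeds the factor-of-$110$ slack from $\RE$, and reconciling all of these is the one delicate calculation of the proof.
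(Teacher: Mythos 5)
Your proposal follows the paper's proof essentially step for step: the same case split on $R$ versus $1/(20(\eps')^2)$, the same invocation of \Theorem{main-logestimator}, \Theorem{rough-est}, and \Lemma{subsamplingworks} at the level $j=\lceil\log(R/(4400(\eps')^2))\rceil$, and the same constant-factor repetition to lift the single-run success probability (which the paper computes as $(99/100)\cdot(33/64)>1/2$ and then boosts by taking a median over whole instantiations of \textsc{FullAlg}, rather than boosting each component separately as you suggest --- both work). One small correction to your space accounting: the counters of the $\LE_j$ are \emph{not} shared across levels --- each level keeps its own $1/(\eps')^2$ counters modulo $p$, which is precisely why the $\log(\eps^2 N)$ factor multiplies the per-instance cost in one pass (and why the two-pass variant avoids it); only $h_1,h_2,h_3$, the prime $p$, and $\mathbf{u}$ are shared, so your final bound is right but the parenthetical claim of ``no duplication'' in the counters is not.
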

\begin{proof}
We analyze one instantiation of \textsc{FullAlg}.
The space and time requirements follow from \Theorem{main-logestimator}
and \Theorem{rough-est}, and the
fact that the hash functions $h,h_3$ can be stored in $O(\log U) =
O(\log N)$ bits
and can be evaluated in constant time.

As for correctness, with probability at least $99/100$, the value $R$
returned by $\RE$ satisfies $L_0 \le R \le 110 L_0$ by
\Theorem{rough-est}.  We henceforth condition on this occurring.  If
$R<1/(20(\eps')^2)$ then $L_0<1/(20(\eps')^2)$,
so $\LE$ outputs $(1\pm\eps')L_0 = (1\pm\eps)L_0$ with probability at
least $3/5$ by \Theorem{main-logestimator}.  Otherwise, we output the
estimate of $\tilde{L}_0$
provided by $\LE_j$ for $j = \ceil{\log
  (R/(4400(\eps')^2))}$.  Let $L_0^j$ denote the expected value $L_0$
of the
substream at level $j$.  For our choice of $j$,
$L_0/(8800(\eps')^2) \le \E[L_0^j] \le L_0/(40(\eps')^2)$.  By
\Lemma{subsamplingworks} and choice of $\eps'$,
$(1\pm(2\eps/3))L_0/(8800(\eps')^2) \le L_0 \le
(1\pm(2\eps/3))L_0/(40(\eps')^2) \le L_0/(20(\eps')^2)$
with probability at least $7/8$.  By \Theorem{main-logestimator},
conditioned on $L_0^j \le L_0/(20\eps')^2$ and by choice of $\eps'$, we
have that
$\LE_j$ outputs $(1\pm \eps')L_0^j = (1\pm(\eps/420))L_0^j$ with
probability at least $3/5$.  Again by
\Lemma{subsamplingworks}, using that $ 20/\eps^2\le 1/(8800(\eps')^2)$
by choice of $\eps'\le\eps/420$, we have that $2^{j+1}L_0^j$ serves as a
$(1\pm(\eps/420))(1\pm(2\eps/3))$-approximation to $L_0$ in this case,
which is at most $(1\pm\eps)$ for $\eps$ smaller than some constant.
Thus, in the case $R\ge
1/(20(\eps')^2)$, \textsc{FullAlg} outputs a valid approximation with
probability at least $(3/5)\cdot(7/8) > 33/64$.  Thus, in total, the
algorithm outputs a valid approximation with probability at least
$(99/100)\cdot(33/64)$ (since we conditioned on $R$ being a valid
approximation), which is strictly bigger than $1/2$.  The theorem
follows by repeating a constant number of instantiations of
\textsc{FullAlg} in parallel and returning the median result.
\end{proof}

When given $2$ passes, in the first pass we can obtain $R$, then in
the second pass we need only instantiate $\LE_j$ for the appropriate
level $j$, thus avoiding the $\log(\eps^2N)$ factor blowup in space
from maintaining $\log(\eps^2N)$ different $\LE_j$. Thus we have the
following theorem.

\begin{theorem}\TheoremName{2pass-thm}
There is an algorithm $(1\pm\eps)$-approximating $L_0$ in $2$ passes
with probability $3/4$,
using space $O(\eps^{-2}(\log(1/\eps) +
\log\log(mM)) + \log N)$, with $O(1)$ update and reporting times.
\end{theorem}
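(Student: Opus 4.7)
The plan is to exploit the extra pass to eliminate the $\log(\eps^2 N)$ factor in the space of \textsc{FullAlg}, which arose entirely from running one \textsc{LogEstimator} per subsampling level in parallel because the correct level was not known in advance. With two passes, pass~1 identifies the level and pass~2 runs only a single \textsc{LogEstimator}.

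In pass~1 I would run only \textsc{RoughEstimator} to obtain $R$ satisfying $L_0 \le R \le 110 L_0$ (with probability at least $99/100$ by \Theorem{rough-est}), while also sampling upfront the random bits that pass~2 will consume: $h_1, h_2, h_3$, the prime $p \in [D,D^2]$, the vector $\mathbf{u}\in\mathbb{F}_p^{1/(\eps')^2}$, and the pairwise independent subsampling hash $h:[U]\to[N]$. Between passes, if $R < 1/(20(\eps')^2)$ I record that pass~2 should run a single unsubsampled \textsc{LogEstimator}; otherwise I compute $j^* = \lceil \log(R/(4400(\eps')^2))\rceil$. In pass~2 I then either feed every update to a single \textsc{LogEstimator}, or feed to a single $\LE_{j^*}$ only the updates $(i,v)$ with $\lsb(h(i)) = j^*$, outputting $2^{j^*+1}\tilde{L}_0$ (or $2^{j^*}\tilde{L}_0$ when $j^* = \log N$) as the final estimate.

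Correctness is a direct adaptation of the proof of \Theorem{main-thm}: conditioned on $R$ being valid, the substream at level $j^*$ has expected $L_0$ in $[L_0/(8800(\eps')^2),\, L_0/(40(\eps')^2)]$, so \Lemma{subsamplingworks} gives that its actual $L_0$ is $(1\pm 2\eps/3)$-close to $L_0/2^{j^*+1}$ with probability $7/8$ and lies within the promise range of \Theorem{main-logestimator}, which then yields a $(1\pm\eps')$-approximation with probability $3/5$; a constant-factor median boosts overall success to $3/4$. For space, pass~2 retains only the pre-sampled randomness (dominated by $\mathbf{u}$ at $O(\eps^{-2}(\log(1/\eps)+\log\log(mM)))$) plus $h$ at $O(\log N)$, since the \textsc{RoughEstimator} state is discarded between passes. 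Update and reporting times remain $O(1)$ by inheritance from the single-pass primitives. The main obstacle is absorbing the pass-1 \textsc{RoughEstimator} overhead of $O(\log N\log\log(mM))$ into the target bound, which holds cleanly in the regime $\eps^{-2} \gtrsim \log N$ and in the complementary regime can be handled by replacing \textsc{RoughEstimator} with a doubling-guess invocation of \Lemma{simple-exact-parity}; a second subtlety is that the pass-2 randomness must be fixed \emph{before} pass~1, so that $R$ is independent of it and the adaptive choice of $j^*$ does not bias the downstream subroutines.
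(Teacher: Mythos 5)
Your proposal is correct and is essentially identical to the paper's own argument, which is just the one-line observation preceding the theorem: obtain $R$ via \textsc{RoughEstimator} in pass one, then instantiate only the single $\LE_{j}$ for $j=\ceil{\log(R/(4400(\eps')^2))}$ in pass two, inheriting correctness from \Theorem{main-thm}. Your two added subtleties (fixing the pass-2 randomness before pass 1, and absorbing the $O(\log N\log\log(mM))$ first-pass overhead of \textsc{RoughEstimator} in the regime $\eps^{-2}\ll\log N$) are in fact more careful than the paper, which does not address either point explicitly.
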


Note that when combined with \Theorem{main-lowerbound},
\Theorem{2pass-thm} shows a separation between the space complexity of
$1$ and $2$ passes for
$L_0$ for a large range of settings of $\eps$ and $mM$.



\section{$L_0$ in update-only streams}\SectionName{f0}
Here we describe an algorithm for estimating $F_0$, the
number of distinct items in an update-only stream. 
Our main result is the following. The space
bound is never more than a $O(\log\log N)$ factor away from optimal,
for any $\eps$.

\begin{theorem}\TheoremName{optimal-f0}
There is an algorithm for $(1\pm\eps)$-approximating
$F_0$ with probability $2/3$ in space
$O(\eps^{-2}\log\log(\eps^2N)+ \log(1/\eps)\log(N))$.
The update and reporting times are both $O(1)$.
\end{theorem}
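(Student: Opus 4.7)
The plan is to combine a constant-factor rough estimator with the shared-bins balls-and-bins construction previewed in Section~\ref{sec:l0-proofsketch}. First, I would run in parallel an insertion-only variant of \textsc{RoughEstimator} producing a value $R$ with $F_0 \le R \le CF_0$ for an absolute constant $C$, with probability at least $99/100$. An adaptation of Theorem~\ref{thm:rough-est} to insertion-only streams using $O(\log(1/\eps))$ independent repetitions of a standard $\lsb$-based FM-style sketch fits in $O(\log(1/\eps)\log N)$ bits with $O(1)$ update and reporting time (via the $\lsb$-trick used in the proof of Theorem~\ref{thm:rough-est}), contributing exactly the second term of the target space bound.

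Second, pick a pairwise independent $h\colon[U]\to[N]$ and set $\ell(i)=\lsb(h(i))$, and a $k$-wise independent $h_1\colon[U]\to[K]$ with $K=\Theta(1/\eps^2)$ and $k=\Theta(\log(1/\eps)/\log\log(1/\eps))$, realised via Siegel's construction (Theorem~\ref{thm:siegel}) after perfectly hashing $[U]$ down to $[\mathrm{poly}(1/\eps)]$ so that $h_1$ evaluates in $O(1)$ time. Maintain an array $B[1{\ldots}K]$ in which $B[b]=\max\{\ell(i):h_1(i)=b,\,i\text{ seen in the stream}\}$, truncated at level $\log(\eps^2N)+O(1)$; each cell then fits in $O(\log\log(\eps^2N))$ bits and the whole array contributes the first term $O(\eps^{-2}\log\log(\eps^2N))$ of the target. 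Alongside $B$ maintain small auxiliary counters $N[\ell]$ recording the number of bins at each level, updated in $O(1)$ per stream token (decrement the old level, increment the new).

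Third, at reporting time let $j^\star$ be the unique level with $R/2^{j^\star}\in[c_1/\eps^2,c_2/\eps^2]$. Treating items with $\ell(i)=j^\star-1$ as \emph{good balls} (count $A$) and items with $\ell(i)\ge j^\star$ as \emph{bad balls} (count $B$), the number of \emph{good bins}, i.e.\ bins hit by at least one good ball and no bad ball, is exactly $N[j^\star-1]$, read off in $O(1)$. Two Chebyshev applications to the pairwise independent $h$ yield, with probability at least $7/8$, both $A=(1\pm O(\eps))F_0/2^{j^\star-1}$ and $A/B=1\pm O(\eps)$. The balls-and-bins lemmas of Section~\ref{sec:balls-and-bins}, specialised to this two-coloured process, then show that under the $k$-wise independent $h_1$ the number of good bins concentrates around its fully-random expectation $K(1-(1-1/K)^A)(1-1/K)^B$ to within $(1\pm O(\eps))$ with constant probability. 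Using $A/B=1\pm O(\eps)$ collapses this expectation into a monotone invertible function of $A$ alone up to a $(1\pm O(\eps))$ factor; a precomputed table analogous to Lemma~\ref{lem:fastlog} inverts in $O(1)$ time, and rescaling $\widehat A$ by $2^{j^\star-1}$ returns $(1\pm\eps)F_0$.

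The principal obstacle will be the balls-and-bins estimate under limited independence for the two-coloured process. Unlike in \cite{BJKST02}, where the random variable being estimated has constant expectation and hence constant variance essentially for free, the number of good bins here grows with $A$, and one must prove a sharp moment bound that simultaneously handles $A$ being a constant fraction of $K$ and the coupling between good and bad balls through the shared cells. This is precisely what the lemmas of Section~\ref{sec:balls-and-bins} are designed to deliver; once invoked, the remaining Chebyshev estimates on the subsampling hash, the $O(1)$-time updates via $\lsb$ and Siegel, and the table-based inversion are essentially routine.
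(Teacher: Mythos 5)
Your proposal is essentially the paper's own construction: shared bins that record only the deepest $\lsb$-level at which they are hit, per-level counts maintained incrementally, a constant-factor rough estimate to select the working level, the good-ball/bad-ball analysis of Section~\ref{sec:balls-and-bins} under $k$-wise independence via Siegel's family after a perfect hash to $\mathrm{poly}(1/\eps)$, the collapse of $K(1-(1-1/K)^A)(1-1/K)^B$ to a function of $A$ alone using $A/B=1\pm O(\eps)$, and table-based inversion for $O(1)$ reporting. (The paper obtains $R$ from Algorithm I of \cite{BJKST02} in $O(\log N)$ bits rather than a repeated FM-style sketch, and charges the $\log(1/\eps)\log N$ term to the level counters $Y_r$, but these are interchangeable within the stated space bound.) The one concrete omission is the regime $F_0=O(1/\eps^2)$: your reporting rule requires a level $j^\star\ge 1$ with $R/2^{j^\star}=\Theta(1/\eps^2)$, which does not exist when $R<c_1/\eps^2$, so you need the two extra branches the paper uses --- output the exact count of non-empty bins when $R=O(1)$, and the logarithmic estimator of Lemma~\ref{lem:logging-works} when $100<R\le K/40$ --- before the theorem covers all inputs.
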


The algorithm works as follows.  We allocate $K = 1/\eps^2$ counters
$C_1,\ldots,C_K$ initialized to \texttt{null}, each capable of holding
an integer in $[\log(\eps^2
N)+1]$, and we
pick an $O(\log(1/\eps)/\log\log(1/\eps))$-wise independent
hash function $h_1:[1/\eps^4]\rightarrow[K]$.  We also
pick pairwise independent hash
functions $h_2:[U]\rightarrow[1/\eps^4]$ and
$h_3:[U]\rightarrow[N]$. We run Algorithm I of \cite{BJKST02} to
obtain a
value $F_0/2 \le R\le F_0$ with probability $99/100$, taking
$O(\log N + \log\log n)$ space and has constant update
and reporting time\footnote{The
  space and time bounds are not listed this way in \cite{BJKST02}
  because (1) they do not assume the word RAM model, and (2) they do
  not ensure $U = O(\log N)$ but rather just use a universe of size
  $n$.}.  Upon seeing an item $i\in[U]$ in the stream, we set
$$ C_{h_1(h_2(i))} \leftarrow \max\{C_{h_1(h_2(i))}, \min\{
\log(\eps^2 N)+1, \lsb(h_3(i))\}\} .$$
We also maintain $\log(\eps^2 N)$ counters $Y_1,\ldots Y_{\log(\eps^2
  N)}$, where $Y_r$ tracks $|\{j : C_j = r\}|$. To
estimate $F_0$ there are three cases. If $R\le 100$, we output $|\{j :
C_j\neq\texttt{null}\}|$.  Else, if $100<R\le K/40$, we output
$\ln(1 - |\{j : C_j\neq\texttt{null}\}|/K)/\ln(1-1/K)$. Otherwise,
let $r$ be the smallest positive integer such that $R/2^r \le K/40$.
We define
$f(A) = K((1 - 1/K)^A - (1 - 1/K)^{2A})$ and output
$2^rA$ for the smallest $A$ with $f(A)= Y_r$. For time efficiency, $h_1$
is chosen from a hash
family of Siegel \cite{Siegel04} to have $O(1)$ evaluation time.

We now analyze our algorithm.  First, we need the following two
lemmas, whose proofs are in \Section{f0-proofs}.

\begin{lemma}\LemmaName{bounded-deriv}
Fix $x\ge 2$. Consider the function
$$ f(y) = x\left(\left(1 - \frac 1x\right)^y - \left(1 - \frac
    1x\right)^{2y}\right).$$
If $y\le x/3$, then $f'(y)\ge 1/9$.
\end{lemma}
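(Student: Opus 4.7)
The plan is to differentiate $f$ explicitly and reduce the desired bound to a one-variable inequality. A direct computation gives
$$f'(y) = -x\ln(1-1/x)\cdot (1-1/x)^y\cdot\bigl(2(1-1/x)^y - 1\bigr),$$
so the proof splits naturally into bounding three factors.

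First I would handle the prefactor $-x\ln(1-1/x)$. Using the Taylor expansion $-\ln(1-t) = t + t^2/2 + t^3/3 + \cdots$ with $t = 1/x$, one gets $-x\ln(1-1/x) = 1 + 1/(2x) + 1/(3x^2) + \cdots \ge 1$ for all $x \ge 2$.

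Next, set $u := (1-1/x)^y$. The hypothesis $y \le x/3$ gives $u \ge (1-1/x)^{x/3}$. I would then invoke the standard fact that $(1-1/x)^x$ is monotonically increasing in $x \ge 1$ with limit $1/e$; taking cube roots, $(1-1/x)^{x/3}$ is also increasing in $x$, so for all $x \ge 2$,
$$u \ge (1/2)^{2/3} > 0.63.$$
Finally, the remaining factor $g(u) := u(2u-1)$ has derivative $g'(u) = 4u-1 > 0$ on our range, so $g$ is minimized over admissible $u$ at the endpoint $u = (1/2)^{2/3}$, giving $g(u) > 0.63 \cdot 0.26 > 0.16$. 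Combining the three bounds yields $f'(y) \ge 1 \cdot 0.16 > 1/9$.

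The calculation is essentially routine; the only step that requires any care is the monotonicity of $(1-1/x)^x$ on $[2,\infty)$, used to reduce the bound on $u$ to its worst case at $x = 2$. This is a classical fact, but if one wants to avoid appealing to it one can instead differentiate $h(x) := (x/3)\ln(1-1/x)$ and check directly that $h'(x) \ge 0$ for $x \ge 2$, which shows $(1-1/x)^{x/3}$ is increasing on the relevant range.
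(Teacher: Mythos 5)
Your proof is correct and follows essentially the same route as the paper: differentiate, factor $f'(y)$ into the three terms $-x\ln(1-1/x)$, $(1-1/x)^y$, and $2(1-1/x)^y-1$, and bound each separately. The only difference is in the elementary estimates: the paper uses Bernoulli's inequality to get $(1-1/x)^y \ge 1 - y/x \ge 2/3$ directly (sidestepping the monotonicity of $(1-1/x)^{x/3}$ that you invoke) and bounds the prefactor below by $1/2$ via $\ln(1+t)\ge t/2$, yielding exactly $(1/2)(2/3)(1/3)=1/9$.
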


\begin{lemma}\LemmaName{AequalsB}
Let $0\le \eps < 1/2$ and suppose $(1 - \eps) B \le B' \le (1 +
\eps)B$ with $0\le B\le K$ for integers $B,B'$. If $A,K\ge 0$ then
$$ K\left(1 - \left(1 - \frac 1K\right)^A\right)\left(1 - \frac
  1K\right)^B = (1 \pm 2\eps)K\left(1 - \left(1 - \frac
  1K\right)^A\right)\left(1 - \frac 1K\right)^{B'}$$
\end{lemma}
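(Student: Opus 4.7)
The plan is to cancel the common factor $K(1 - (1-1/K)^A)$ from both sides (handling the trivial case where this factor vanishes separately), reducing the lemma to the purely analytic estimate $(1-1/K)^{B-B'} \in [1-2\eps,\, 1+2\eps]$. Since $B, B'$ are integers with $|B-B'| \le \eps B$ and $\eps < 1/2$, if $\eps B < 1$ then integrality forces $B = B'$ and the identity is immediate. So I may assume $\eps B \ge 1$, which forces $K \ge B \ge 1/\eps \ge 2$. Set $d := B - B'$, so $|d| \le \eps B \le \eps K$.

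I would then split on the sign of $d$. In the case $d \ge 0$, the upper bound $(1-1/K)^d \le 1$ is free, so only the lower bound is substantive. I would use the standard inequality $-\ln(1-1/K) \le 1/(K-1)$, which combined with $d \le \eps K$ and $K/(K-1) \le 2$ for $K \ge 2$ yields
\[(1-1/K)^d \ge \exp\!\bigl(-d/(K-1)\bigr) \ge \exp(-2\eps) \ge 1 - 2\eps.\]

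In the case $d < 0$, the lower bound $(1-1/K)^d \ge 1 \ge 1-2\eps$ is immediate. The upper bound, which I expect to be the main obstacle, requires a finer argument: the naive estimate $(1-1/K)^{-|d|} \le \exp(|d|/(K-1)) \le e^{2\eps}$ exceeds $1+2\eps$ and is therefore too weak. Instead, I would apply Bernoulli's inequality directly to the denominator, exploiting that $|d|$ is a nonnegative integer: $(1-1/K)^{|d|} \ge 1 - |d|/K \ge 1 - \eps B/K \ge 1 - \eps$, where the last step crucially uses the hypothesis $B \le K$. Inverting gives $(1-1/K)^{-|d|} \le 1/(1-\eps)$, and $1/(1-\eps) \le 1+2\eps$ rearranges to $\eps(1-2\eps) \ge 0$, which holds for $\eps \le 1/2$. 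Thus both bounds are established, completing the proof.
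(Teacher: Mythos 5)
Your proposal is correct and follows essentially the same route as the paper: both reduce the claim to bounding $(1-1/K)^{B-B'}$ and handle the substantive direction via Bernoulli's inequality $(1-1/K)^{|B-B'|}\ge 1-|B-B'|/K\ge 1-\eps$ (using $B\le K$) followed by $1/(1-\eps)\le 1+2\eps$ for $\eps<1/2$. Your extra care with the degenerate cases and the integrality of $B-B'$ is a welcome tightening of the paper's slightly informal use of the exponent $\pm\eps B$.
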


We now prove correctness of our $F_0$ algorithm and analyze the
update and reporting times and space usage.

\begin{proofof}{\Theorem{optimal-f0}}
 Our use of an algorithm of \cite{BJKST02} to obtain
a $R$ requires $O(\log N)$ space and adds $O(1)$ to both
the update and reporting time.  Now we analyze the rest of our
algorithm.

First we analyze space requirements.
We maintain $1/\eps^2$ counters $C_j$, each holding an integer in
$[\log(\eps^2 N) + 1]$ (or \texttt{null}), taking
$O(\eps^{-2}\log\log(\eps^2 N))$ bits.  Storing $h_1$ from Siegel's family
takes $O(\eps^{-1}\log(1/\eps)) = o(\eps^{-2})$ bits, as in the
proof of \Theorem{main-logestimator}. The functions $h_2,h_3$ combined
require $O(\log N + \log(1/\eps))$ bits.  Finally, the last bits of
required storage come from storing the $Y_r$, which in total take
$O(\log(K)\log N) = O(\log(1/\eps)\log(N))$ bits.

Now we analyze update time. Each update requires evaluating
each of $h_1,h_2,h_3$ once, taking $O(1)$ time.  We also compute the
$\lsb$ of an integer fitting in a word, taking $O(1)$ time
using \cite{Brodnik93,FredmanWillard93}.  Finally, we have to maintain
the $Y_r$.  During an update, we change the value of at most one
$C_j$, from, say, $r$ to $r'$.  This just requires decrementing $Y_r$ and
incrementing $Y_{r'}$.

Before analyze reporting time, we prove correctness.  We condition on
the event $\mathcal{Q}$ that $F_0/2 \le R \le F_0$.  If $R\le
100$, then $F_0\le 200$ and
the distinct elements are perfectly hashed with $7/8$ probability (for
$\eps$
sufficiently small), and we estimate $F_0$ exactly in this case.  If
$100 < R\le K/20$, correctness follows from
\Lemma{logging-works}.  We now consider $R > K/20$.
We consider the level $r$ with 
$$\frac{1}{80\eps^2} < \frac{R}{2^r} \le \frac{1}{40\eps^2}$$
and thus
$$\frac{1}{80\eps^2} < \frac{F_0}{2^r} \le \frac{1}{20\eps^2}$$
Letting $F_0'$ be the number of distinct elements mapped to level $r$,
we condition on the event $\mathcal{Q}'$ that $F_0' = (1\pm
50\eps)F_0/2^r$. For $\eps$ sufficiently small, this implies
$$\frac 1{160\eps^2} < \frac{F_0}{2^{r+1}} \le F_0' \le
\frac{F_0}{2^{r-1}} \le \frac 1{10\eps^2}.$$
We also let $F_0''$ be the number of distinct elements mapped to
levels $r'>r$ and condition on the event $\mathcal{Q}''$ that $F_0'' =
(1\pm 50\eps)F_0/2^r$.  This similarly implies
$$\frac 1{160\eps^2} \le F_0'' \le \frac 1{10\eps^2}.$$

Next, we condition on the event $\mathcal{Q}'''$ that the $F_0'+F_0''
\le 1/(5\eps^2)$
items at levels $r$ and greater are perfectly hashed under $h_2$.
Now we use our analysis of  the balls and bins random process
described 
in \Section{balls-and-bins} with $A=F_0'$ ``good balls'' and $B=F_0''$
``bad balls''. Let $X'$ be the random variable counting the number of
bins $C_j$ hit by good balls under $h_1$.  By
\Lemma{exactgoodbadballs} and \Lemma{ind}, $\E[X] = (1\pm\eps)\mu$
with
$$\mu = K\left(1 - \left(1 - \frac
    1K\right)^A\right)\left(1 -
  \frac 1K\right)^B$$
We define the event
$\mathcal{Q}''''$ that $ X' = (1\pm 4002\eps)\mu$.

Recall the definition of the function
$$ f(A') = K\left(1 - \left(1 - \frac 1K\right)^{A'}\right)\left(1 -
  \frac 1K\right)^{A'} = K\left(\left(1 - \frac 1K\right)^{A'} -
  \left(1 - \frac 1K\right)^{2A'}\right)$$
Conditioned on $\mathcal{Q},\mathcal{Q}'$, $A = (1\pm 100\eps)B$, and
thus by \Lemma{AequalsB}, $\mu = (1\pm 200\eps)f(A)$.

Conditioned on $\mathcal{Q}''''$, 
$$|X' - f(A)| \le |X' - \mu| + |\mu - f(A)| \le 4002\eps\mu + 200\eps
f(A) \le 4202\eps K,$$
in which case also $|X' - f(A)|\le K/1000$ for $\eps$ sufficiently small,
implying $K/1000\le X' \le K/9$. The lower bound holds since $f(A)\ge
K/500$ by
\Lemma{badballsvar}, and the upper bound holds since $f(A)\le A\le
K/10$.  We also note $f(K/3)
\ge K(e^{-1/3} - e^{-2/3} - 1/K)$ by \Lemma{mr95}, which is at least
$K/9$ for $K$
sufficiently large (i.e. $\eps$ sufficiently small).
Thus, there exists $A'\le K/3$ with  $f(A') = X'$. Furthermore,
by \Lemma{bounded-deriv} $A'$ is the unique inverse in this range.
Also, in the range where we invert $X'$, the derivative of $f$ is
lower bounded by $1/9$, and so
$$|f^{-1}(X') - A| \le (9\cdot 4202)\eps K \le
10^7\eps A .$$
Thus, we can compute $A$ with relative error $10^7\eps A$, and so $2^r
A = (1\pm 50\eps)(1\pm 10^7\eps)F_0$.  We can thus obtain
$(1\pm\eps)F_0$ by running our algorithm with error parameter $\eps' =
c\eps$ for $c$ a sufficiently small constant. Thus, our algorithm is
correct as long as
$\mathcal{Q},\mathcal{Q}',\mathcal{Q}'',\mathcal{Q}''',\mathcal{Q}''''$
all occur.

Now we analyze the probability that all these events occur. We already
know $\Pr[\mathcal{Q}] \ge 99/100$ by
our
choice of failure probability when running the algorithm of
\cite{BJKST02}.
By Chebyshev's inequality,
$$ \Pr[\mathcal{Q'}|\mathcal{Q}] \ge 1 - 
\frac{2^r}{50^2\eps^2 F_0} \ge 1 - \frac{80}{50^2} \ge \frac
{19}{20}$$
and the exact same computation holds for lower bounding
$\Pr[\mathcal{Q}''|\mathcal{Q}]$. 

Now we bound $\Pr[\mathcal{Q}'''|\mathcal{Q}'\wedge
\mathcal{Q}'']$.  Arbitrarily label the $z = F_0'+F_0''$ balls as
$1,2,\ldots,z$ with $z\le K/5$.  Let $Z_{i,j}$ indicate that $h_2(i) =
h_2(j)$.  Then the expected number of collisions is at most
$((K/5)^2/2)\cdot (1/K^2) = 1/50$.  Thus, by Markov's inequality,
$\Pr[\mathcal{Q}'''|\mathcal{Q}'\wedge \mathcal{Q}''] \ge 49/50$.

By \Lemma{bigballexpectation},
\Lemma{badballsvar}, and \Lemma{ind},
$$\E[X'] \ge (1-\eps)K/500,\ \Var[X'] \le 7K + \eps^2$$
and thus by Chebyshev's inequality,
$$ \Pr[|X' - \E[X']| \le
4000\eps\E[X']|\mathcal{Q}'''] \ge 1 - 
\frac{8K}{4000^2\eps^2 (1-\eps)^2(K/500)^2} \ge \frac
{13}{16}$$
with the last inequality holding for $\eps$ sufficiently small. When
$|X' - \E[X']| \le 4000\eps\E[X']$ occurs,
then $X' = (1\pm 4002\eps)\mu$, implying $\mathcal{Q}''''$
occurs.  Thus, by the above and exploiting independence of some of the
events,
\begin{eqnarray*}
\Pr[\mathcal{Q}\wedge\mathcal{Q}'\wedge\mathcal{Q}''
\wedge\mathcal{Q}'''\wedge\mathcal{Q}'''']
&\ge& \Pr[\mathcal{Q}]\cdot(1 - \Pr[\bar{\mathcal{Q}'}|\mathcal{Q}] -
\Pr[\bar{\mathcal{Q}''}|\mathcal{Q}])\\
&& \cdot\
\Pr[\mathcal{Q}'''|\mathcal{Q} \wedge \mathcal{Q}' \wedge
\mathcal{Q}'']\\
&& \cdot\ \Pr[\mathcal{Q}''''|\mathcal{Q} \wedge
\mathcal{Q}' \wedge \mathcal{Q}'' \wedge \mathcal{Q}''']\\
&=& \Pr[\mathcal{Q}]\cdot(1 - \Pr[\bar{\mathcal{Q}'}|\mathcal{Q}] -
\Pr[\bar{\mathcal{Q}''}|\mathcal{Q}])\\
&& \cdot\
\Pr[\mathcal{Q}'''|\mathcal{Q}' \wedge
\mathcal{Q}'']\\
&& \cdot\ \Pr[\mathcal{Q}''''|\mathcal{Q}''']\\
&\ge&
\left(\frac {99}{100}\right)\cdot \left(1 - \frac{2}{20}\right)\cdot
\left(\frac{49}{50}\right) \cdot \left(\frac{13}{16}\right)\\
&>& 2/3
\end{eqnarray*}

Finally, we analyze the reporting time.  Recall we can query for $R$
in constant time.
In the case $R \le 100$, we output the number of non-\texttt{null}
bins, which we can maintain in constant time during updates using an
$O(\log(1/\eps))$-bit counter.
For $100< R\le K/40$, our
reporting time is $O(1)$ by using
\Lemma{fastlog}.  Otherwise, we need to find the smallest positive $A$
satisfying $K((1 - 1/K)^A - (1 - 1/K)^{2A}) = Y_r$. For this we can
discretize the interval $I = [f(K/1000),f(K/9)]$ into
$\Theta(1/\eps)$ evenly-spaced points $\mathcal{P}$ and precompute
$f^{-1}(p)$ for all $p\in\mathcal{P}$ during preprocessing.  We can
then compute $f^{-1}(x)$ for any $x\in I$ by table lookup, using the
nearest element of $\mathcal{P}$ to $x$, thus inverting $f$ with at
most an additive $\pm\eps K/160 = \pm\eps A$ error. Note we argued
above that $X'$ will be in $I$ conditioned on the good events.
Also, this upper bound
on the error suffices for our algorithm's correctness.
\end{proofof}

\section*{Acknowledgments}
We thank Nir Ailon, Erik Demaine, Avinatan Hassidim, Piotr Indyk,
T.S. Jayram, Swastik Kopparty, John Nolan, Mihai P\v{a}tra\c{s}cu, and
Victor Shoup for valuable discussions and references.  We also thank
Chris Umans and Salil Vadhan, both of whom shared insights that were
helpful in implementing the GUV extractor in linear space.

\bibliographystyle{plain}
\bibliography{./allpapers}
\newpage
\appendix

\section{Appendix}
\subsection{Small Universe Justification}\SectionName{small-universe}
If $n < m^2$, we can do nothing and already have a universe of size
$n$.  Otherwise, let
$\{i_1,\ldots,i_r\}$ be the set of indices appearing in the stream.
Picking a prime $q$ and treating all updates $(i,v)$ as $(i\mod q,
v)$, our estimate of $L_0$ will be unaffected as long as $i_{j_1} \neq
i_{j_2} \mod q$ for any $j_1\neq j_2$. There are at most $r^2/2$
differences $|i_{j_1}-i_{j_2}|$, and each difference is an integer
bounded by $n$, thus having at most $\log n$ prime factors.  There are
thus at most $r\log n$ prime factors dividing {\em some}
$|i_{j_1}-i_{j_2}|$.  If we pick a random prime
$q\in[r\log(n)\log(r\log(n)), c\cdot r\log(n)\log(r\log(n))]$ for a
sufficiently large constant $c$, we can ensure with constant
probability arbitrarily close to $1$ (by increasing $c$) that no
indices collide modulo $q$.  Since $r\le m$, we can pick $q =
O(\mathrm{poly}(m\log n))$.  We then pick a hash
function $h:\{0,\ldots,q-1\}\rightarrow [O(m^2)]$ at random from
pairwise independent family.  With constant probability which can be
made arbitrarily high, the mapping $i\mapsto h(i\mod q)$ perfectly
hashes the indices appearing in the stream.  Storing both $h$ and $q$
requires $O(\log q + \log m)= O(\log m + \log\log n)$ bits.  Since we
only apply this scheme when $m^2\le n$, the $O(\log m)$ term only
appears in our space bounds when $\log m = O(\log n)$.  Thus, the cost
of this scheme is $O(\log N + \log\log n)$, and the $\log N$ term is
dominated by other factors in all our space bounds.

\subsection{Notes on the Proof of \Lemma{median-lp}}\SectionName{magic-function}

In our proof of \Lemma{median-lp}, we needed a function
$f:\C\rightarrow \C$ such that the following properties hold when
restricting $f$ to $\R$:
\begin{itemize}
\item $f$ is an even function
\item $f$ decreases strictly monotonically to $0$ as $x$ tends away
  from $0$
\item $f$ is strictly positive
\end{itemize}

Also, to apply \Corollary{mcor}, we needed $f$ to be holomorphic on $\C$, and we
needed $|f(z)| = e^{O(1+\Im(z))}$ for all $z\in\C$.  We now justify
why 
$$f(x) = -\int_{-\infty}^{x}\frac{\sin^4(y)}{y^3}dy$$
has all these properties. First, note the integral exists
for all $x$ and thus $f$ is well-defined.  Now,
$f$ is even since it is
the integral of an odd function.  It decreases monotonically to $0$ as
$x$ tends away from $0$ since the sign of $f'(x)$ is the sign
of $-\sin^4(x)/x^3$, which is just the sign of $-x$. It is strictly
positive since on
the negative reals it is the integral of a strictly positive function,
also implying that $f$ is strictly positive on the positive reals 
since it is even. This also implies $f(0)>0$ since $f$ is maximized at
$0$.

Now, $f$ is holomorphic on $\C$ by construction: it is the integral of
a holomorphic function on $\C$.  To see that $f'$ is holomorphic, note $f'(z) =
\mathrm{sinc}^3(z)\sin(z)$ is the product of holomorphic functions.
Lastly, we need to show that $|f(z)| = e^{O(1+\Im(z))}$.  This can be
seen using Cauchy's integral theorem, which lets us choose a
convenient curve when
computing the line integral from $-\infty$ to $z$ of $f'$.  We choose
the curve which goes from $-\infty$ to $\Re(z)$, then goes from $\Re(z)$
to $\Re(z) + i\Im(z)$, thus integrating the real and imaginary axes
separately (here $\Re(z)$ denotes the real part of $z$).  The integral
on the real part of the curve is bounded by
a constant.  The integral on the imaginary part is bounded by
$e^{O(1+|\Im(z)|)}$ since $\sin(z) = (e^{-\Im(z) + i\Re{z}} -
e^{\Im(z) - i\Re(z)})/2$. Each term in the difference is bounded in
magnitude by $e^{|\Im(z)|}$.

We also comment on making the constants $C,C'$ explicit in the proof
of \Lemma{median-lp}.  Recall, for the function $g(c) = \E[f(cZ)]$
(where $Z\sim \mathcal{D}_p$), we picked positive constants
$C$ large enough and $C'$ small enough such that $g(C)$ and $g(C')$
landed in some desired
range. Knowing $C,C'$ is necessary to understand the quality of the
constant-factor approximation the median estimator gives.
These $C,C'$ depend on $p$, and can be found during
preprocessing in constant time
and space (as a function of constant $p$) as follows. First, note
$g(c)$ is strictly decreasing
on the positive reals with $g(0) = f(0)$ and $\lim_{c\rightarrow
  \infty} g(c) = 0$, and thus we can binary search, using the usual
trick of geometrically growing the interval size we search in since
we do not know it a priori.  The question then
becomes how to evaluate $g(c)$ at each iteration of the search.
$\E[f(cZ)]$ is defined as the integral
$$ \int_{-\infty}^{\infty} f(cx)q(x) dx$$
where $q$ is the probability density function of $\mathcal{D}_p$.
We only need to compute this integral to within constant accuracy, so
we can compute this integral numerically in constant time and
space.  We note a clumsy implementation would have to numerically
integrate in a $2$-level recursion, since $f$ and $q$ themselves are
defined as integrals for which we have no closed form.  A slicker
implementation can use Parseval's theorem, which tells us that
$$ \int_{-\infty}^{\infty} f(cx)q(x) dx = \int_{-\infty}^{\infty}
\frac 1c \hat{f}\left(\frac {\xi}{c}\right)\hat{q}(\xi) d\xi$$
We claim that the latter integral lets us avoid the recursive
integration step because {\em we do} have closed forms for
$\hat{f},\hat{p}$.  By definition of $p$-stability,
$\hat{q}(\xi)=e^{-|\xi|^P}$.  For $\hat{f}$, recall $f' =
\mathrm{sinc}^3(x)\sin(x)$.  The Fourier transform of $\mathrm{sinc}$
is the indicator function of an interval, and that of the $\sin$ function
is the difference of two shifted $\delta$ functions, scaled by an imaginary
component.  By convolution, the transform of $f'$ is thus a
piecewise-polynomial that
can be written explicitly, and thus we can compute $\hat{f}$
explicitly since integration corresponds to division by $i\xi$ in the
Fourier domain.

\subsection{Details of the Improvement to Armoni's PRG}\SectionName{appendix-armoni}
\subsubsection{GUV Extractor Preliminaries}
The following preliminary definitions and theorems will be needed
throughout \Section{appendix-armoni}.

\begin{theorem}\TheoremName{explicit-irr}
The following families of polynomials are irreducible over the given
rings:
\begin{enumerate}
\item[(1)] $x^{2\cdot 3^{\ell}} + x^{3^{\ell}} + 1 \in \mathbb{F}_2[x]$,
  $\ell\ge 0$
\item[(2)] $x^{2^{\ell}} + 2x^{2^{\ell - 1}} - 1 \in \mathbb{F}_7[x]$,
  $\ell\ge 1$
\item[(3)] $x^{3^{\ell}} + 3 \in \mathbb{F}_7[x]$, $\ell \ge 0$
\end{enumerate}
\end{theorem}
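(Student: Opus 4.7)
The plan is to treat the three families by three different classical techniques, since each has a distinct algebraic structure, and I expect family (2) to be the principal source of difficulty.

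For family (1), I would first recognize the polynomial via the identity $\Phi_{3^{\ell+1}}(x) = \Phi_3(x^{3^\ell}) = x^{2\cdot 3^\ell} + x^{3^\ell} + 1$, where $\Phi_n$ denotes the $n$-th cyclotomic polynomial; this is the standard specialization $\Phi_{p^{k+1}}(x) = \Phi_p(x^{p^k})$ to $p = 3$. Since $\Phi_n$ factors over $\mathbb{F}_q$ into $\phi(n)/d$ irreducibles each of degree $d = \mathrm{ord}_n(q)$, irreducibility reduces to showing that $2$ is a primitive root modulo $3^{\ell+1}$ for every $\ell \geq 0$. I would verify this by computing the orders of $2$ modulo $3$ and modulo $9$ directly (namely $2$ and $6$), then invoke the standard lifting lemma: for an odd prime $p$, any primitive root modulo $p^2$ remains a primitive root modulo $p^k$ for all $k \geq 2$.

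For family (3), I would invoke the classical irreducibility criterion for binomials over finite fields (e.g.\ Lidl--Niederreiter, Theorem~3.75): $x^n - a$ is irreducible over $\mathbb{F}_q$ iff every prime divisor of $n$ divides $e := \mathrm{ord}_{\mathbb{F}_q^\ast}(a)$ but does not divide $(q-1)/e$, together with the side condition that $4 \mid n$ forces $q \equiv 1 \pmod 4$. Specializing to $q = 7$, $a = -3 \equiv 4 \pmod 7$, and $n = 3^\ell$, one computes $e = 3$; the unique prime divisor of $n$ is $3$, which divides $e$ but not $(q-1)/e = 2$, and the side condition on $4$ is vacuous. Irreducibility follows immediately.

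For family (2), the natural attack is the substitution $y = x^{2^{\ell-1}}$, which reduces the problem to analyzing the quadratic $y^2 + 2y - 1 = (y+1)^2 - 2$ and then lifting iteratively through the tower $\mathbb{F}_7 \subset \mathbb{F}_{7^2} \subset \cdots \subset \mathbb{F}_{7^{2^{\ell}}}$, using the fact that $x^2 - c$ is irreducible over a field $F$ iff $c$ is a non-square in $F$. The inductive step then amounts to tracking a specific element and showing, via a norm calculation in the intermediate field $\mathbb{F}_{7^{2^i}}$, that its multiplicative order does not divide $(7^{2^i}-1)/2$, so that adjoining its square root produces a genuine quadratic extension. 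This iterated non-square verification is exactly where I expect the real work to lie, since the other two families fall out of well-known criteria once the right structural observation is made.
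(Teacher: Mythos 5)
Your arguments for families (1) and (3) are correct and complete: the identification of $x^{2\cdot 3^{\ell}}+x^{3^{\ell}}+1$ with $\Phi_{3^{\ell+1}}$, the reduction to $2$ being a primitive root modulo $3^{\ell+1}$ (verified mod $3$ and mod $9$ and lifted), and the application of the binomial criterion to $x^{3^{\ell}}-4$ over $\mathbb{F}_7$ with $e=\mathrm{ord}(4)=3$ all check out. Note that the paper does not actually prove anything here --- its ``proof'' consists entirely of citations to van~Lint (Theorem 1.1.28) for family (1) and to Examples 3.1 and 3.2 of Menezes et al.\ for families (2) and (3) --- so for (1) and (3) you have supplied a genuine self-contained argument where the paper has none.

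Family (2), however, is a real problem, and you left it as a sketch at exactly the point where the sketch collapses. Your reduction to $y^2+2y-1=(y+1)^2-2$ is correct, but the base case of your plan requires $2$ to be a non-square in $\mathbb{F}_7$, and it is not: $3^2\equiv 2\pmod 7$. Consequently $y^2+2y-1=(y-2)(y-3)$ over $\mathbb{F}_7$, and therefore
$$x^{2^{\ell}}+2x^{2^{\ell-1}}-1=\bigl(x^{2^{\ell-1}}-2\bigr)\bigl(x^{2^{\ell-1}}-3\bigr)$$
is \emph{reducible} over $\mathbb{F}_7$ for every $\ell\ge 1$ (already for $\ell=1$ it splits into linear factors). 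So no proof of (2) as stated can exist; the polynomial in the paper is presumably a mistranscription of the family in Menezes et al.\ (e.g.\ a family such as $x^{2^{\ell}}+2x^{2^{\ell-1}}-2$, for which $(y+1)^2-3$ does have a non-square, would at least pass the base case). Your method is the right one for such towers --- iterated non-square/norm checks up the chain $\mathbb{F}_7\subset\mathbb{F}_{7^2}\subset\cdots$ --- but as a proof of the stated claim it cannot be completed, and you should have run the first verification rather than deferring it.
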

\begin{proof}
Polynomials in family (1) are shown irreducible in Theorem 1.1.28
of \cite{vanLint99}.  Polynomials in families (2) and (3) are
shown irreducible in Examples 3.1 and 3.2 of \cite{Menezes93}.
\end{proof}

\begin{theorem}[\cite{LN94}, Corollary 3.47]\TheoremName{lidl}
Let $p$ be an irreducible polynomial over $\mathbb{F}_q[x]$ of degree
$d$.  Then $p$ is irreducible over $\mathbb{F}_{q^m}[x]$ if and only
if $\hbox{gcd}(m, d) = 1$.\afterproof
\end{theorem}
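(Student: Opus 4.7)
The plan is a standard finite-field argument via a root in the algebraic closure. Fix a root $\alpha$ of $p$ in $\overline{\mathbb{F}_q}$. Since $p$ is irreducible of degree $d$ over $\mathbb{F}_q$, we have $\mathbb{F}_q(\alpha) = \mathbb{F}_q[x]/(p) \cong \mathbb{F}_{q^d}$. The key observation is that $p$ is irreducible over $\mathbb{F}_{q^m}[x]$ if and only if $p$ is the minimal polynomial of $\alpha$ over $\mathbb{F}_{q^m}$, which in turn is equivalent to $[\mathbb{F}_{q^m}(\alpha) : \mathbb{F}_{q^m}] = d$. So the problem reduces to computing this degree.

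Next, I would identify $\mathbb{F}_{q^m}(\alpha)$ as the compositum $\mathbb{F}_{q^m}\cdot\mathbb{F}_{q^d}$ inside $\overline{\mathbb{F}_q}$. Using the standard fact that the compositum of $\mathbb{F}_{q^a}$ and $\mathbb{F}_{q^b}$ equals $\mathbb{F}_{q^{\operatorname{lcm}(a,b)}}$ (which follows from the characterization of finite subfields of $\overline{\mathbb{F}_q}$ via $\mathbb{F}_{q^a}\subseteq \mathbb{F}_{q^b}\iff a\mid b$), I get
\[
[\mathbb{F}_{q^m}(\alpha):\mathbb{F}_{q^m}] \;=\; \frac{[\mathbb{F}_{q^{\operatorname{lcm}(m,d)}}:\mathbb{F}_q]}{[\mathbb{F}_{q^m}:\mathbb{F}_q]} \;=\; \frac{\operatorname{lcm}(m,d)}{m} \;=\; \frac{d}{\gcd(m,d)}.
\]
Setting this equal to $d$ yields $\gcd(m,d) = 1$, and conversely $\gcd(m,d)=1$ forces the degree to be $d$. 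This gives the desired equivalence.

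The only step that requires genuine care is justifying the compositum identity, but this is immediate from Galois theory of finite fields: in $\overline{\mathbb{F}_q}$, $\mathbb{F}_{q^n}$ is precisely the fixed field of the $n$-th power of Frobenius, and a field element is fixed by both Frobenius$^{m}$ and Frobenius$^{d}$ exactly when it is fixed by Frobenius$^{\gcd(m,d)}$, yielding intersection $\mathbb{F}_{q^{\gcd(m,d)}}$ and hence compositum $\mathbb{F}_{q^{\operatorname{lcm}(m,d)}}$ by a dimension count. Since the statement is invoked here only as a quoted corollary from \cite{LN94}, no new ideas beyond this classical computation are needed; the main ``obstacle'' is just being careful that $\alpha$ remains a root of the \emph{same} polynomial $p$ when the base field is enlarged to $\mathbb{F}_{q^m}$, so that factoring $p$ over $\mathbb{F}_{q^m}$ reduces cleanly to the degree computation above.
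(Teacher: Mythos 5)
Your argument is correct and complete: the reduction of irreducibility over $\mathbb{F}_{q^m}$ to the degree $[\mathbb{F}_{q^m}(\alpha):\mathbb{F}_{q^m}]$, the identification of $\mathbb{F}_{q^m}(\alpha)$ with the compositum $\mathbb{F}_{q^{\operatorname{lcm}(m,d)}}$, and the computation $\operatorname{lcm}(m,d)/m = d/\gcd(m,d)$ together give exactly the stated equivalence. The paper itself offers no proof -- it quotes this as Corollary 3.47 of \cite{LN94} -- and your derivation is the standard one underlying that reference (it is essentially the degree computation behind Lidl--Niederreiter's Theorem 3.46 on how an irreducible polynomial factors over an extension field), so there is nothing further to reconcile.
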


The following fact is folklore.

\begin{fact}\FactName{obvious-efficiency}
Multiplication and division with remaindering of two polynomials of
degree at most $n$ in $\mathbb{F}_q[x]$ can be performed in
time $\hbox{poly}(n\log q)$ and space $O(n\log q)$.
\end{fact}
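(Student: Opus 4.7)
\bigskip

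\noindent\textbf{Proof proposal.}
The plan is to give direct, textbook algorithms for both operations and simply account for time and space carefully, relying on the fact that each element of $\mathbb{F}_q$ can be stored in $O(\log q)$ bits and that the basic field operations (addition, multiplication, and inversion in $\mathbb{F}_q$) can be carried out in time $\poly(\log q)$ and space $O(\log q)$ by standard modular arithmetic (representing $\mathbb{F}_q$ as $\mathbb{F}_p[x]/(\pi)$ for $q = p^t$, with $p$ a small prime and $\pi$ an irreducible polynomial of degree $t$, e.g.\ from \Theorem{explicit-irr}).

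For multiplication, first I would use the schoolbook algorithm: for $f = \sum_{i=0}^n a_i x^i$ and $g = \sum_{i=0}^n b_i x^i$, compute the coefficients of $h = fg$ one at a time as $c_k = \sum_{i+j=k} a_i b_j \in \mathbb{F}_q$ for $k = 0, 1, \dots, 2n$. Computing each $c_k$ takes at most $n+1$ multiplications and $n$ additions in $\mathbb{F}_q$, hence time $\poly(n \log q)$, and requires only a single $\mathbb{F}_q$ accumulator of size $O(\log q)$ beyond read-only access to $f, g$ and the partial output. Since the output itself has $2n+1$ coefficients in $\mathbb{F}_q$, the total space (including the stored inputs and the output) is $O(n \log q)$, and the total time is $O(n^2) \cdot \poly(\log q) = \poly(n \log q)$.

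For division with remaindering, write $f, g$ with $\deg g = d \le n$ and let $b$ denote the leading coefficient of $g$. I would implement the standard long-division loop: maintain a ``current remainder'' $r$ (initially $f$) and a quotient polynomial $q$ (initially $0$), and while $\deg r \ge d$, let $e = \deg r - d$ and $a = (\text{leading coef.\ of }r) \cdot b^{-1}$; then update $r \leftarrow r - a x^e g$ and $q \leftarrow q + a x^e$. Each iteration strictly decreases $\deg r$, so there are at most $n+1$ iterations; each iteration performs one inversion in $\mathbb{F}_q$ and $O(n)$ multiplications and additions in $\mathbb{F}_q$, giving $\poly(n \log q)$ time overall. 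The space is dominated by storing $r$ and $q$, each with at most $n+1$ coefficients in $\mathbb{F}_q$, yielding $O(n \log q)$ bits in total.

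There is no real obstacle here; the only thing to be careful about is that intermediate field elements never grow beyond $O(\log q)$ bits (which follows automatically since every stored value is an element of $\mathbb{F}_q$, reduced after each arithmetic step) and that we never need to simultaneously instantiate more than a constant number of polynomials of degree $O(n)$, keeping the space bound at $O(n \log q)$.
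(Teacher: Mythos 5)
Your proposal is correct; the paper gives no proof at all for this Fact (it is explicitly introduced as ``folklore''), and your schoolbook multiplication and long-division arguments, together with the observation that $\mathbb{F}_q$-arithmetic on $O(\log q)$-bit representations suffices, are exactly the standard justification intended. Your remark that one needs an explicit irreducible polynomial to represent $\mathbb{F}_q$ (available here from Theorem~\ref{thm:explicit-irr}) correctly addresses the only point that could otherwise be delicate in a space-bounded setting.
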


\begin{definition}
A $D$-regular bipartite graph $\Gamma : [N] \times [D] \rightarrow
[M]$ is a {\em $(\le K, A)$ expander} if $|\Gamma(S)| \ge A\cdot |S|$ for
all $S \subseteq [N]$ with $|S| \le K$.  $\Gamma(x,y)$ is 
the $y$th neighbor of the left vertex $x$.
\end{definition}

\begin{definition}
A probability distribution $\mathbf{X}$ on $\{0,1\}^n$ is called a
{\em $k$-source} if $\Pr[X = x] \le 2^{-k}$ for all $x\in\{0,1\}^n$.
We interchangeably use ``$\mathbf{X}$ is a
$k$-source'' and ``$\mathbf{X}$ has min-entropy $k$''.
\end{definition}

Henceforth we let $U_n$ denote the uniform distribution on
$\{0,1\}^n$.

\begin{definition}
A function $C:\{0,1\}^n\times\{0,1\}^d \rightarrow \{0,1\}^m$ is
called a $k\rightarrow_{\eps} k'$ {\em condenser} if $C(\mathbf{X},
\U_d)$ is $\eps$-close in statistical distance to some distribution of
min-entropy at least $k'$ whenever
$\mathbf{X}$ is a $k$-source.  A condenser is called {\em lossless} if
$k' = k + d$.  The statistical distance of two
probability distributions is defined to be half their $L_1$ distance.
\end{definition}

\begin{definition}
A function $E:\{0,1\}^n\times\{0,1\}^d \rightarrow \{0,1\}^m$ is
called a $(k,\eps)$ {\em extractor} if  $E(\mathbf{X}, \U_d)$ is
$\eps$-close in statistical distance to $\U_m$ whenever $\mathbf{X}$
is a $k$-source.
\end{definition}

In \Section{linear-guv} we will write write the expansion of graphs we
consider as $(1-\eps)D$, where $\eps>0$ is some parameter.
All logarithms below are base-$2$ unless otherwise stated.

\subsubsection{The GUV Extractor in Linear Space}\SectionName{linear-guv}
For a given positive integer $h$ and prime power $q$, and for a
degree-$n$ irreducible polynomial $E$ over $\mathbb{F}_q$ and positive
integer $m$, Guruswami et al.\ \cite{GUV07} consider the bipartite graph
with neighbor function
$\Gamma:\mathbb{F}_q^n\times\mathbb{F}_q\rightarrow
\mathbb{F}_q^{m+1}$ defined by
\begin{equation}\EquationName{pv-gamma}
\Gamma(f, y) = [y, f(y), (f^h\mod E)(y), (f^{h^2}\mod E)(y),\ldots,
(f^{h^{m-1}} \mod E)(y)]
\end{equation}
where $f\in\mathbb{F}_q^n$ is interpreted as a polynomial of
degree at
most $n-1$ over $\mathbb{F}_q$.  In particular, the $y$th neighbor of
$f$ in the expander is the $y$th symbol of the encoding of $f$ under
the Parvaresh-Vardy code \cite{PV05}.  The authors of \cite{GUV07}
then prove the following theorem.

\begin{theorem}[Theorem 3.3 of \cite{GUV07}]\TheoremName{gen-expander}
The bipartite graph
$\Gamma:\mathbb{F}_q^n\times\mathbb{F}_q\rightarrow
\mathbb{F}_q^{m+1}$ defined as in \Equation{pv-gamma} is a $(\le
K_{\hbox{max}}, A)$ expander with $K_{\hbox{max}} = h^m$ and $A = q -
(n-1)(h-1)m$.\afterproof
\end{theorem}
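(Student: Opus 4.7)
The plan is to run the Parvaresh-Vardy list-decoding style argument, with the auxiliary polynomial's monomial support chosen so that the bound $|\Gamma(S)| \ge A|S|$ survives for every $|S| \le h^m$ and not just for $|S| = h^m$. Assume for contradiction that $S \subseteq \mathbb{F}_q^n$ satisfies $|S| \le h^m$ and $|\Gamma(S)| < A|S|$. We may restrict to $n \ge 2$, since for $n = 1$ every $f \in \mathbb{F}_q$ is a constant polynomial, the tuples $\Gamma(f, y) = (y, f, f^h, \ldots, f^{h^{m-1}})$ for different $f$'s are disjoint (the second coordinate distinguishes them), and varying $y$ gives $q$ distinct neighbors per $f$, so $|\Gamma(S)| = q|S| = A|S|$ automatically.

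First, I would find a nonzero auxiliary polynomial by dimension counting. For $j \in \{0, 1, \ldots, |S|-1\}$, write $(j_0, \ldots, j_{m-1})$ for its base-$h$ digits, and consider polynomials
$$Q(Y, Z_0, \ldots, Z_{m-1}) = \sum_{a=0}^{A-1} \sum_{j=0}^{|S|-1} c_{a,j}\, Y^a \prod_{i=0}^{m-1} Z_i^{j_i}.$$
This is an $A|S|$-dimensional $\mathbb{F}_q$-space, while the constraints $Q(\gamma) = 0$ for $\gamma \in \Gamma(S)$ impose only $|\Gamma(S)| < A|S|$ linear equations, so a nonzero such $Q$ exists; among these I pick one of minimum $\deg_Y$. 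For each $f \in S$, the polynomial
$$R_f(Y) = Q\bigl(Y,\, f(Y),\, (f^h \bmod E)(Y),\, \ldots,\, (f^{h^{m-1}} \bmod E)(Y)\bigr)$$
has degree at most $(A-1) + (n-1)(h-1)m = q-1$ (each $Z_i$ carries degree at most $h-1$ and is replaced by a polynomial of degree less than $n$), yet it vanishes at all $q$ points of $\mathbb{F}_q$ because $\Gamma(f, y) \in \Gamma(S)$; hence $R_f \equiv 0$. Reducing mod $E$ and writing $\alpha = Y \bmod E \in \mathbb{F}_{q^n}$ yields $Q(\alpha, f(\alpha), f(\alpha)^h, \ldots, f(\alpha)^{h^{m-1}}) = 0$ in $\mathbb{F}_{q^n}$.

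Next, I would apply the Parvaresh-Vardy collapse. Set $P(Z) = Q(\alpha, Z, Z^h, Z^{h^2}, \ldots, Z^{h^{m-1}}) \in \mathbb{F}_{q^n}[Z]$. Under this substitution each monomial $Y^a \prod_i Z_i^{j_i}$ becomes $\alpha^a Z^j$ with $j \le |S|-1$, so $\deg_Z P \le |S|-1$; but $P$ vanishes on the $|S|$ distinct elements $\{f(\alpha) : f \in S\} \subseteq \mathbb{F}_{q^n}$ (distinct because $f \mapsto f(\alpha)$ is the $\mathbb{F}_q$-linear bijection $\mathbb{F}_q^n \to \mathbb{F}_{q^n}$, $E$ being the minimal polynomial of $\alpha$), forcing $P \equiv 0$. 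Each coefficient $q_j(Y) = \sum_a c_{a,j} Y^a$ therefore satisfies $q_j(\alpha) = 0$, i.e., $E \mid q_j$ in $\mathbb{F}_q[Y]$. If $\deg_Y Q < n$, each $q_j$ has degree less than $\deg E$ and must itself be zero, giving $Q = 0$, a contradiction. Otherwise $E(Y)$ divides $Q$ as a polynomial in $\mathbb{F}_q[Y, Z_0, \ldots, Z_{m-1}]$, and $Q/E$ is then a nonzero polynomial with the same $Z$-support, $\deg_Y$ smaller by $n$, and still vanishing on $\Gamma(S)$ (since $E$ is irreducible of degree $n \ge 2$, we have $E(y) \ne 0$ for every $y \in \mathbb{F}_q$), contradicting the minimality of $\deg_Y Q$.

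The step I expect to be most delicate is the choice of monomial support for $Q$: the naive version using all monomials with $\deg_{Z_i} \le h-1$ gives only $\deg_Z P \le h^m - 1$, which produces a contradiction solely when $|S| = h^m$. Restricting instead to the $|S|$-sized prefix $j \in \{0, \ldots, |S|-1\}$ under the base-$h$ encoding is exactly what is needed so that the dimension count $A|S| > |\Gamma(S)|$ translates into the tight polynomial-degree bound $\deg_Z P \le |S|-1$, covering the full range $|S| \le h^m$.
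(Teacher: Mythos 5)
The paper does not prove this statement; it imports it verbatim as Theorem 3.3 of \cite{GUV07}, and your argument is a correct reproduction of the Parvaresh--Vardy list-decoding proof given there: the base-$h$ restriction of the monomial support to a length-$|S|$ prefix (so that the collapse $Z_i \mapsto Z^{h^i}$ yields $\deg_Z P \le |S|-1$ for every $|S| \le h^m$), the degree count $(A-1)+(n-1)(h-1)m = q-1$ forcing $R_f \equiv 0$, and the elimination of the $E(Y)$-divisibility obstruction are all exactly the ingredients of the cited proof. Your handling of the latter via minimality of $\deg_Y Q$ (rather than GUV's ``assume WLOG $E \nmid Q$'') and the explicit $n=1$ base case are minor, correct variations.
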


For positive integers $N$,
$K_{\hbox{max}} \le N$, and for any $\eps > 0$,
and all $\alpha \in (0,\log x/\log\log x)$ with $x = (\log N)(\log
K_{\hbox{max}})$, \cite{GUV07} then apply \Theorem{gen-expander} to analyze
the quality of the expander obtained using the setting of
parameters in \Figure{exp-params}.  For our purposes though, we are
only concerned with $0<\alpha\le 1/2$, $\alpha = \Omega(1)$, and will
thus present bounds assuming $\alpha$ in this range.  We also assume
$N,K_{\hbox{max}} \ge 2$.

\begin{center}
\begin{figure}[!!h]
\begin{itemize}
\item $n = \log N$
\item $k = \log K_{\hbox{max}}$
\item $h = \ceil{(2nk/\eps)^{1/\alpha}}$
\item $m = \ceil{(\log K_{\hbox{max}})/(\log h)}$
\item $q$ is the unique power of $2$ in $(h^{1+\alpha}/2,h^{1+\alpha}]$
\end{itemize}
\caption{Setting of parameters in the GUV expander (see proof of
  Theorem 3.5 in \cite{GUV07}).}\FigureName{exp-params}
\end{figure}
\end{center}

\begin{theorem}[Theorem 3.5 of \cite{GUV07}]\TheoremName{expander}
The graph with parameters as stated in \Figure{exp-params} yields a
$(\le K_{\hbox{max}}, (1-\eps)D)$
expander with $N$ left vertices, left-degree $D =
O(((\log N)(\log K_{\hbox{max}})/\eps)^{1+1/\alpha})$, and $M \le
D^2\cdot K_{\hbox{max}}^{1+\alpha}$ right vertices.  Furthermore, the
neighbor function $\Gamma(f, y)$ can be computed in time
$\log^{O(1)}(ND)$, and $D$ and $M$ are each powers of
$2$.\afterproof
\end{theorem}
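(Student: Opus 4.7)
The plan is to apply Theorem~\ref{thm:gen-expander} directly with the specific parameter choices in Figure~\ref{fig:exp-params}, and then separately verify the four quantitative claims: (a) the expansion factor is at least $(1-\eps)D$, (b) the degree $D$ has the stated form, (c) $M$ is bounded by $D^2 K_{\max}^{1+\alpha}$, and (d) neighbors can be computed in $\log^{O(1)}(ND)$ time, with $D,M$ powers of two. Since the parameters $h,q$ are chosen to be powers of integers (and $q$ is specifically a power of $2$), the power-of-$2$ statements will be immediate; the substance lies in (a)--(c) and the bookkeeping of (d).

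For the expansion, Theorem~\ref{thm:gen-expander} yields a $(\le h^m, q - (n-1)(h-1)m)$ expander with left-degree $D=q$. Since the parameter $m = \lceil k/\log h\rceil$ ensures $h^m \ge 2^k = K_{\max}$, the size bound on $K_{\max}$ is satisfied. It remains to check $(n-1)(h-1)m \le \eps q$: bounding $(h-1)<h$, $(n-1)<n$, and $m \le k/\log h + 1 = O(k/\log h)$, and using $q > h^{1+\alpha}/2$, the required inequality reduces to roughly $nk/\log h \lesssim \eps h^{\alpha}/4$, which follows from the choice $h^\alpha \ge 2nk/\eps$ (as long as $\log h$ is bounded below by an absolute constant, which is automatic from $h \ge 2$ for the interesting parameter range). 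For the degree bound, $D = q = \Theta(h^{1+\alpha}) = \Theta((nk/\eps)^{(1+\alpha)/\alpha}) = \Theta(((\log N)(\log K_{\max})/\eps)^{1+1/\alpha})$ as required.

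To count right vertices, note that $M = q^{m+1}$. Writing $q^m = (q)\cdot (q^{m-1}) \le q \cdot (h^{1+\alpha})^{m-1} = q\cdot (h^{m-1})^{1+\alpha}$, and using $h^{m-1} < 2^k = K_{\max}$ (from the definition of $m$), we obtain $q^m \le q\cdot K_{\max}^{1+\alpha}$, and therefore $M = q\cdot q^m \le q^2 \cdot K_{\max}^{1+\alpha} = D^2 \cdot K_{\max}^{1+\alpha}$. For computability, evaluating $\Gamma(f,y)$ requires computing $(f^{h^i}\bmod E)(y)$ for $0\le i < m$, which reduces to $O(m)$ multiplications and modular reductions in $\mathbb{F}_q[x]/(E)$ via repeated squaring, followed by evaluation at $y\in \mathbb{F}_q$. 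By Fact~\ref{fact:obvious-efficiency}, each such operation runs in time $\poly(n\log q) = \poly(\log N + \log D)$, which is $\log^{O(1)}(ND)$ in total.

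The main obstacle I anticipate is the computability claim, not from the $\poly$-time operations themselves but from the implicit assumption that we can actually work in $\mathbb{F}_q[x]/(E)$ using only $\log^{O(1)}(ND)$ resources: this presumes access to an explicit irreducible polynomial $E$ of degree $n$ over $\mathbb{F}_q$ (with $q$ an appropriately chosen prime power). In the statement above the arithmetic is taken for granted, but, as the paper hints at elsewhere, a linear-space implementation will require carefully choosing $q$ and $E$ from the explicit families of Theorem~\ref{thm:explicit-irr} together with Theorem~\ref{thm:lidl} to avoid invoking Shoup's algorithm; this parameter tweaking is what ultimately motivates the auxiliary restrictions on $q$ and on the shape of the extension.
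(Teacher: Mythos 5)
Your proposal is correct and follows essentially the same route as the source: the paper itself imports this statement from \cite{GUV07} without reproving it, but its proof of the closely analogous \Theorem{expander-space} uses exactly your decomposition (expansion via $(n-1)(h-1)m \le nhk \le \eps q$, degree via $D = q \le h^{1+\alpha}$, right-vertex count via $q^{m-1} \le (h^{m-1})^{1+\alpha} \le K_{\hbox{max}}^{1+\alpha}$, and computability via polynomial arithmetic modulo an irreducible $E$). You also correctly identify that the only real subtlety is obtaining $E$ explicitly --- for the time bound claimed here Shoup's algorithm suffices, and it is precisely the space cost of that step that motivates the modified parameters of \Figure{exp-params-opt}.
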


While the setting of parameters in \Figure{exp-params} yields an
expander whose neighbor function is time-efficient, for our purposes
we need a neighbor function that is both time-efficient and
space-efficient.  To accomplish this goal, we use the following
setting of parameters instead. Throughout this section, we borrow much
of the notation of \cite{GUV07} for ease of noting differences in the
two implementations.

\begin{center}
\begin{figure}[!!h]
\begin{itemize}
\item $n$ chosen in $(\log(N)/\log(q),3\log(N)/\log(q)]$ so that
  $n=3^{\ell}$ for some $\ell\in\mathbb{N}$
\item $k = \log K_{\hbox{max}}$
\item $z = 3\log(N)k/\eps$
\item $\alpha'\le \alpha$ is chosen as large as possible so that
  $(z^{1+1/\alpha'})/2$ is of the form $7^{2^{\ell}}$ for
  some $\ell \in\mathbb{N}$
\item $h_0 = z^{1/\alpha'}$
\item $h = \ceil{h_0}$
\item $q = (h_0^{1+\alpha'})/2$
\item $m = \ceil{(\log K_{\hbox{max}})/(\log h)}$
\end{itemize}
\caption{New setting of parameters for the GUV
  expander}\FigureName{exp-params-opt}
\end{figure}
\end{center}

\begin{theorem}\TheoremName{expander-space}
The graph with parameters as stated in \Figure{exp-params-opt} yields
a $(\le K_{\hbox{max}}, (1-\eps)D)$
expander with $N$ left vertices, left-degree $D =
O(((\log N)(\log K_{\hbox{max}})/\eps)^{1+3/\alpha})$, and $M \le
D^2\cdot K_{\hbox{max}}^{1+\alpha}$ right vertices.  Furthermore, the
neighbor function $\Gamma(f, y)$ can be computed in time
$\log^{O(1)}(ND)$ and space $O(\log(ND))$.
\end{theorem}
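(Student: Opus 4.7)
The plan is to inherit the combinatorial analysis from \Theorem{expander} and argue two new ingredients: (a) that the modified parameter choices in \Figure{exp-params-opt} still yield the claimed expansion, vertex count, and degree up to constants, and (b) that the neighbor function $\Gamma(f,y)$ is now computable in space $O(\log(ND))$, not merely in time $\log^{O(1)}(ND)$, by carrying out every field operation inside a tower of explicit irreducibles from \Theorem{explicit-irr}.

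For (a), I would apply \Theorem{gen-expander} directly: by the choice of $n$ we have $q^n \ge N$ left vertices, by the choice of $m$ we have $h^m \ge K_{\max}$, and the degree is $D = q$. Substituting $n = O(\log N / \log q)$, $h = O(z^{1/\alpha'})$, and $m = O(k/\log h)$ into $(n-1)(h-1)m$ gives a quantity of order $O(z)$, which is $O(\eps q)$ after absorbing the constant inside $z = 3\log(N)k/\eps$, yielding $A \ge (1-\eps)D$. The right-vertex count $M = q^{m+1}$ is bounded by $D\cdot(h\cdot K_{\max})^{1+\alpha'} \le D^2 K_{\max}^{1+\alpha}$. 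To bound $D$ in terms of $\alpha$, I would observe that pushing $\alpha'$ down from $\alpha$ until $z^{1+1/\alpha'}/2$ lands on a value of the form $7^{2^\ell}$ costs at most a single squaring, since consecutive admissible values satisfy $7^{2^{\ell+1}} = (7^{2^\ell})^2$. Thus $z^{1+1/\alpha'} \le z^{2(1+1/\alpha)}$, giving $1+1/\alpha' \le 1+3/\alpha$ for $\alpha\le 1/2$, and hence $D = O((\log(N)\log(K_{\max})/\eps)^{1+3/\alpha})$.

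For (b), I would represent $\mathbb{F}_q = \mathbb{F}_{7^{2^\ell}}$ using the irreducible $t^{2^\ell}+2t^{2^{\ell-1}}-1 \in \mathbb{F}_7[t]$ from family (2) of \Theorem{explicit-irr}, so that an element of $\mathbb{F}_q$ is a $2^\ell$-tuple of $\mathbb{F}_7$-coefficients stored in $O(\log q)$ bits with arithmetic taking $O(\log q)$ space by \Fact{obvious-efficiency}. For the extension $\mathbb{F}_{q^n}$ with $n = 3^{\ell'}$, I would take the modulus $E(x) = x^{3^{\ell'}}+3$, which is irreducible over $\mathbb{F}_7$ by family (3); by \Theorem{lidl}, since $\gcd(2^\ell,3^{\ell'}) = 1$, it remains irreducible over $\mathbb{F}_q$ and so defines $\mathbb{F}_{q^n}$. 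Elements of $\mathbb{F}_{q^n}$ then fit in $n\log q = O(\log N)$ bits and one multiplication modulo $E$ takes $O(\log N)$ space by \Fact{obvious-efficiency}. To emit $\Gamma(f,y)$, I would compute $g_0 = f$ and, for $i = 1,\ldots,m-1$, set $g_i = g_{i-1}^h \bmod E$ by repeated squaring in a single $O(\log N)$-bit workspace, evaluating $g_{i-1}(y) \in \mathbb{F}_q$ by Horner's rule in space $O(\log N)$ and emitting it before the overwrite. The total working space is $O(\log N + \log q) = O(\log(ND))$, and the time bound $\log^{O(1)}(ND)$ is preserved since every step is a polynomial-time field operation.

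The hardest step is the irreducibility lift: without it, $x^{3^{\ell'}}+3$ could a priori factor over $\mathbb{F}_q$ and wreck the representation of $\mathbb{F}_{q^n}$; the coprimality $\gcd(2^\ell,3^{\ell'}) = 1$ engineered into the parameter choice is exactly the hypothesis needed by \Theorem{lidl}, and this is the sole place where the very specific structure of $q$ and $n$ in \Figure{exp-params-opt} is used. A secondary delicate point is controlling the loss from rounding $\alpha'$ down so that $q$ meets the discrete constraint $q = 7^{2^\ell}$ rather than being an arbitrary power of $2$; the observation that consecutive admissible values of $q$ differ by one squaring is what pins the penalty to a constant factor in $1/\alpha'$ and keeps the exponent at $1+3/\alpha$ instead of something worse.
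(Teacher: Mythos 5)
Your proposal is correct and follows essentially the same route as the paper's proof: the combinatorial bounds are inherited from \Theorem{gen-expander}, with the loss from rounding $q$ to the nearest value of the form $7^{2^\ell}$ controlled by noting an admissible value always lies in $[t,t^2]$ (hence $\alpha'\ge\alpha/3$ and the exponent $1+3/\alpha$), and the space bound comes from performing all field arithmetic modulo the explicit irreducibles of families (2) and (3) of \Theorem{explicit-irr}, with \Theorem{lidl} and $\gcd(2^{\ell},3^{\ell'})=1$ supplying the irreducibility of $E$ over $\mathbb{F}_q$, followed by successive squaring in an $O(\log N + \log q)$-bit workspace. (One cosmetic slip: $(n-1)(h-1)m$ is not $O(z)$ but roughly $\eps z^{1+1/\alpha'}/\log^2 h$; what matters, and what you correctly conclude via the factor $3$ built into $z$, is that it is at most $\eps q$.)
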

\begin{proof}
The proof is very similar to that of Theorem 3.5 of \cite{GUV07},
but taking the new parameters into account.  First we show
$\alpha' \ge \alpha/3$.  Note there is always an integer of the form
$7^{2^{\ell}}$ in $[t,t^2]$ whenever $t\ge 7$.  Since $\alpha\le 1/2$
and $z\ge 3$, we have
$$(z^{1+3/\alpha})/2 \ge (z^{2(1+1/\alpha)+1})/2
\ge z^{2(1+1/\alpha)}$$
Setting $t = z^{(1+1/\alpha)}$, we have $t\ge 3^3 > 7$, implying the
existence of an integer of the form $7^{2^{\ell}}$ in
$[(z^{1+1/\alpha})/2, (z^{1+1/(\alpha/3)})/2]$ so that $\alpha' \ge
\alpha/3$.

The number of left vertices of $\Gamma$ is $q^n \ge N$.  By choice of
$m$, $h^{m-1}\le K_{\hbox{max}} \le h^m$.
Thus, the number of right vertices $M$ satisfies
$$ M = q^{m+1} \le q^2h^{(1+\alpha')(m-1)} \le q^2h^{(1+\alpha)(m-1)}
\le q^2K^{1+\alpha}$$

The left-degree is
$$D = q < h^{1+\alpha'} \le (h_0+1)^{1+\alpha'} =
O((3\log(N)k/\eps)^{1 + 1/\alpha'}) = O(((\log N)(\log
K_{\hbox{max}})/\eps)^{1+3/\alpha})$$
with the penultimate equality following since $\alpha = O(1)$.

The expansion is $A = q - (n-1)(h-1)m \ge q-nhk$.  As in \cite{GUV07},
we now show $nhk
\le \eps q$ so that $q-nhk\ge q-\eps q = (1-\eps)D$.  Since
$h^{\alpha'} \ge 3\log(N)k/\eps \ge 3nk/\eps$, we have $nhk \le
(\eps/3) h^{1+\alpha'}\le \eps q$.  The final inequality holds since,
by the fact that $\alpha'\le \alpha \le 1/2$ and $h_0\ge z^2\ge 9$,
$$q=\frac{h_0^{1+\alpha'}}{2} \ge \frac{2((h_0+1)^{1+\alpha'})/3}{2} \ge
\frac{\ceil{h_0}^{1+\alpha'}}{3} = \frac{h^{1+\alpha'}}{3}$$

Calculating $\Gamma(f,y)$ requires performing arithmetic
over the finite field $\mathbb{F}_q$, which can be done by multiplying
polynomials in $\mathbb{F}_7[x]$ of degree at most $(\log_7 q) - 1$
modulo an irreducible polynomial $E'$ of degree $\log_7 q$.
By choice of $q$, $E'$ can be taken from family (2) of
\Theorem{explicit-irr}.  Also, as stated in \Equation{pv-gamma}, we
must take powers of $f$ modulo an irreducible $E$ of degree $n$.  By
choice of $n$, the polynomial $E$ can be taken from
family (3) of \Theorem{explicit-irr}.  The irreducibility of $E$ over
$\mathbb{F}_q[x]$ follows from \Theorem{lidl} since
$\hbox{gcd}(2^{\ell},3^{\ell'}) = 1$ for any $\ell,\ell'$.

The time complexity is immediate.  For space, in calculating
$\Gamma(f,y)$ for $k=0,\ldots,m-1$ we must calculate $f_k =
f^{h^{k}}\mod E$ then evaluate $f_k(y) = q$.  We have $f_k =
f_{k-1}^h\mod E$, which we can calculate time-efficiently in
$O(\ceil{n}\log(q)) = O(\log N + \log q)$ space by iterative
sucessive squaring.  Evaluating $f_k(y)$ takes an additional $O(\log
q)$ space.  In the end, we must perform $m+1$ such evaluations, taking
a total of $O(m\log q) = O(\log
M)$ space.  The total space is thus $O(\log N + \log D +
(1+\alpha)\log K_{\hbox{max}}) = O(\log(DN))$ since $K_{\hbox{max}}
\le N$ and $\alpha = O(1)$.
\end{proof}

Given their expander construction, the authors of \cite{GUV07} then
use an argument of Ta-Shma et al. \cite{TUZ07} that for positive
integers $n,m,d$ and
for $\eps\in (0, 1)$ and $k\in [0, n]$, a $(\le \ceil{2^k},
(1-\eps)\cdot 2^d)$ expander yields a $k\rightarrow_{\eps} k + d$
condenser.  Specifically, as argued in \cite{TUZ07}, the
constructed expander {\em is} a
condenser, where the input string is treated as a left vertex of
an expander with left-degree $2^d$, and the output string is the
index of the right-hand side vertex obtained by following the random
edge corresponding to the seed.  GUV could immediately apply this
connection to obtain a condenser since their $M,D$ of
\Theorem{expander} were powers of $2$. In \Theorem{expander-space}
however, $D$,$M$ are not powers of $2$ (they are powers of $7$).
Dealing with $M$ not being a
power of $2$ is simple: one can add dummy vertices to the right hand
side of the expander to make $M$ a power of $2$, at most doubling $M$
in the process.  The problem with $D$ not being a power of $2$ though
is that
a seed $s$ of length $d = \ceil{\log D}$ does not yield a uniformly random
neighbor if one interprets $s$ modulo $D$.  To deal with this issue,
if we desire a condenser whose output has statistical distance $\eps$
from a $k'$-source, we increase the seed length to $d = \ceil{\log D} +
\ceil{\log(1/\eps)} + 1$.  Now, interpreting the seed as a number in a
range of size at least $(2/\eps)D$, the seed modulo $D$ does yield a
random neighbor conditioned on the good event that the seed is not
larger than $2^{\floor{\log(2D/\eps)}}$, which happens with
probability at least $1-\eps/2$.  Statistical distance $\eps$ can thus
be achieved as long as the expander has expansion at least
$(1-\eps/2)D$.  This gives the following theorem.

\begin{theorem}[Based on Theorem 4.3 of \cite{GUV07}]\TheoremName{condenser}
For every positive integer $n$, and every $k_{\hbox{max}} \le n$,
$\eps > 0$, and $0\le \alpha\le 1/2$, $\alpha = \Omega(1)$, there is a
function
$C : \{0, 1\}^n\times \{0,1\}^d \rightarrow \{0,1\}^m$ with
$d = (1 + 1/\alpha)\cdot (\log n + \log k_{\hbox{max}})
+ O(\log(1/\eps))$ and $m \le 2d + (1 + \alpha)k_{\hbox{max}} + 1$ such that for all
$k \le k_{\hbox{max}}$, $C$ is a $k\rightarrow_{\eps} k + d$ lossless
condenser.  Furthermore, for any $x\in \{0,1\}^n$ and $s\in\{0,1\}^d$,
$C(x,s)$ can be computed in $O(n + \log(1/\eps))$ space and
$\hbox{poly}(n\log(1/\eps))$ time.
\end{theorem}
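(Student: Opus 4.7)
The plan is to invoke \Theorem{expander-space} and then convert its expander into a condenser using the standard equivalence of Ta-Shma, Umans, and Zuckerman \cite{TUZ07}, making minor adjustments to handle the fact that the left-degree $D$ and right-vertex count $M$ produced by \Theorem{expander-space} are powers of $7$ rather than powers of $2$. Concretely, I will instantiate \Theorem{expander-space} with $N = 2^n$, $K_{\hbox{max}} = 2^{k_{\hbox{max}}}$, error parameter $\eps/4$, and an internal $\alpha' = \Theta(\alpha)$ chosen so that the $1+3/\alpha'$ exponent from \Theorem{expander-space} folds into the $1+1/\alpha$ claimed for $d$ — a constant-factor adjustment legitimate because $\alpha = \Omega(1)$.

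The conversion goes as follows. Set $d = \ceil{\log D} + \ceil{\log(4/\eps)}$, interpret a seed $s\in\{0,1\}^d$ as an integer in $[0, 2^d)$, and define $C(x,s) = \Gamma(x,\, s \bmod D)$, where the output is viewed as an element of $\{0,1\}^m$ with $m = \ceil{\log M}$ after padding the right-hand side up to the nearest power of two by adjoining dummy (unused) right-vertices. Seeds in $[\floor{2^d/D}\cdot D,\, 2^d)$ are ``bad'' and occur with probability at most $\eps/4$; conditioned on a good seed, $s\bmod D$ is uniform on $[D]$. The standard TUZ argument — if some right-set $T$ carried too much mass, then the corresponding left-set $S$ would be too large to expand by a factor of $(1-\eps/4)D$, contradicting \Theorem{expander-space} — plus the $\eps/4$ loss from bad seeds gives that $C(\mathbf{X}, U_d)$ is $(\eps/2)$-close, and hence $\eps$-close, to a distribution of min-entropy $k + d$ for every $\mathbf{X}$ of min-entropy $k \le k_{\hbox{max}}$. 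Padding the right-hand side costs nothing in statistical distance since no probability mass lands on dummy vertices.

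Matching the advertised parameters is bookkeeping: $d = \ceil{\log D} + O(\log(1/\eps)) = (1 + 1/\alpha)(\log n + \log k_{\hbox{max}}) + O(\log(1/\eps))$ from \Theorem{expander-space} with $\alpha'$ rescaled, and $m \le \ceil{\log M} \le 2\ceil{\log D} + (1+\alpha)k_{\hbox{max}} + O(1) \le 2d + (1+\alpha)k_{\hbox{max}} + 1$ from $M \le D^2\cdot K_{\hbox{max}}^{1+\alpha}$. The space bound $O(\log(ND)) = O(n + \log(1/\eps))$ and time bound $\poly(n\log(1/\eps))$ for evaluating $C$ come directly from the corresponding bounds on $\Gamma$ in \Theorem{expander-space}, with only an additional $O(d)$ work to do the integer reduction $s \bmod D$.

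The main obstacle is not any deep step — the technical heavy lifting (the linear-space implementation, the explicit families of irreducibles, and the Parvaresh-Vardy expansion bound) is already packaged inside \Theorem{expander-space}. What requires care is the parameter matching: propagating the $\eps/4$ from seed truncation through the TUZ reduction without exceeding the $\eps$ budget, and correctly passing between the internal $\alpha'$ needed to make \Theorem{expander-space}'s exponent align with the external $\alpha$ appearing in both the seed-length formula $(1+1/\alpha)$ and the output-length formula $(1+\alpha)k_{\hbox{max}}$ of the conclusion.
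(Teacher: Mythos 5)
Your proposal is correct and follows essentially the same route as the paper: instantiate the linear-space expander of \Theorem{expander-space}, apply the Ta-Shma--Umans--Zuckerman expander-to-lossless-condenser connection, pad the right-hand side with dummy vertices so $M$ becomes a power of $2$, and lengthen the seed by $O(\log(1/\eps))$ so that reducing it modulo $D$ is uniform except on a low-probability bad event absorbed into the error budget (the paper splits the budget as $\eps/2+\eps/2$ rather than your $\eps/4$'s, a cosmetic difference). The paper's written proof of the theorem itself simply defers to Theorem 4.3 of GUV plus \Theorem{expander-space} for the space bound, with the dummy-vertex and seed-truncation adjustments given in the surrounding discussion exactly as you describe them.
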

\begin{proof}
The proof of the theorem, except for the space upper bound, can be
found as Theorem 4.3 of \cite{GUV07}.  The space requirement follows
follows from \Theorem{expander-space}.
\end{proof}

In the construction of one of their extractors (the extractor we will
be concerned with), \cite{GUV07} uses the following extractor of
Impagliazzo, Levin, and Luby \cite{ILL89} as a subroutine, based on
the leftover hash lemma.

\begin{theorem}[Based on \cite{ILL89}]\TheoremName{hash-extractor}
For all integers $n = 2\cdot 3^{\ell}$, $k\le n$, with $\ell\ge 0$ an
integer,
and for all $\eps > 0$, there is a
$(k,\eps)$ extractor $E : \{0,1\}^n\times\{0,1\}^d \rightarrow
\{0,1\}^m$ with $d = n$ and $m \ge k + d - 2\log(1/\eps)$ such that
for all inputs $x,y$, $E(x,y)$ can be computed simultaneously in space
$O(n)$ and time $n^{O(1)}$.
\end{theorem}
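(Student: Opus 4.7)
The plan is to instantiate the standard leftover hash lemma construction using an explicit universal family based on multiplication in $\mathbb{F}_{2^n}$, and to verify that multiplication can be carried out in the claimed space.

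The first step is to construct the field $\mathbb{F}_{2^n}$ explicitly. Since $n = 2\cdot 3^{\ell}$, Theorem~\ref{thm:explicit-irr}(1) guarantees that $p(t) = t^{2\cdot 3^{\ell}} + t^{3^{\ell}} + 1$ is irreducible over $\mathbb{F}_2[x]$, so $\mathbb{F}_2[t]/(p(t)) \cong \mathbb{F}_{2^n}$. We interpret inputs $x,y \in \{0,1\}^n$ as elements of $\mathbb{F}_{2^n}$ in the natural way, and define
\[ E(x,y) = \text{the first } m \text{ bits of the product } x\cdot y \in \mathbb{F}_{2^n}. \]
The family $\{h_y(x) = E(x,y)\}_{y \in \{0,1\}^n}$ is a well-known universal family of hash functions from $\{0,1\}^n$ to $\{0,1\}^m$, and in particular is pairwise independent on the nonzero inputs, which is all that is needed.

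The second step is to invoke the leftover hash lemma of Impagliazzo, Levin, and Luby. Applied to this universal family, it states that if $\mathbf{X}$ is a $k$-source on $\{0,1\}^n$ and $m \le k + d - 2\log(1/\eps)$, then $(E(\mathbf{X}, \U_d), \U_d)$ is $\eps$-close in statistical distance to $\U_m \times \U_d$. In particular, the marginal distribution $E(\mathbf{X}, \U_d)$ is $\eps$-close to $\U_m$, so $E$ is a $(k,\eps)$ extractor with $d = n$ and output length as claimed.

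The third step, and the only point requiring care, is verifying the space and time bounds. To compute $E(x,y)$ we must multiply two polynomials $\tilde x, \tilde y$ of degree less than $n$ over $\mathbb{F}_2$ and reduce the product modulo $p(t)$. The schoolbook algorithm computes the product in time $O(n^2)$ while storing only the two inputs, an index, and the current accumulator, using $O(n)$ bits total. The irreducible $p(t)$ has only three nonzero coefficients, so reduction modulo $p(t)$ of a polynomial of degree $< 2n$ can be performed by scanning from the highest coefficient downward and using the identity $t^{2\cdot 3^{\ell}} \equiv t^{3^{\ell}} + 1 \pmod{p(t)}$, again in $O(n^2)$ time and $O(n)$ space. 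Truncating to the first $m \le n$ bits is free. This establishes the simultaneous space bound $O(n)$ and time bound $n^{O(1)}$, completing the argument. The main (and essentially only) obstacle is the space bound, which is handled by the explicit irreducible $p(t)$ from Theorem~\ref{thm:explicit-irr}: this is precisely why the restriction $n = 2\cdot 3^{\ell}$ appears in the hypothesis.
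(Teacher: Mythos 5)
Your proof takes essentially the same route as the paper's: represent $\mathbb{F}_{2^n}$ via the explicit irreducible trinomial from family (1) of Theorem~\ref{thm:explicit-irr} (which is exactly why $n=2\cdot 3^{\ell}$ is assumed), define the extractor as a truncated field product, cite the leftover hash lemma for correctness, and observe that schoolbook multiplication and reduction modulo a sparse trinomial run in $O(n)$ space and polynomial time. One small correction: as literally defined, $E(x,y)=xy|_m$ cannot have output length $m\ge k+d-2\log(1/\eps)=k+n-2\log(1/\eps)$, since the product $xy$ has only $n$ bits and this bound exceeds $n$ whenever $k>2\log(1/\eps)$; the extractor must output the seed as well, i.e.\ $E(x,y)=(y,\,xy|_{m'})$ with $m'=k-2\log(1/\eps)$, which is exactly what the joint-closeness form of the leftover hash lemma you quote delivers and what the paper does.
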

\begin{proof}
The proof is sketched in \cite{GUV07} (and given fully in
\cite{ILL89}) except for the analysis of space complexity.  We review
the scheme so that we may prove the space bound.  Elements of
$\{0,1\}^n$ are treated as elements of $\mathbb{F}_{2^n}$, and $E(x,y)
= (y,xy|_m)$, where $xy|_m$ is the first $\ceil{k + d -
  2\log(1/\eps)}$ bits of the product $xy$ over $\mathbb{F}_{2^n}$.
The time and space complexity are thus dictated by the complexity of
multiplying two elements of $\mathbb{F}_{2^n}$ and remaindering modulo
a reducible $E$ of degree polynomial of degree $n$.  By the form of
$n$, we can take $E$ to be from family (1) in \Theorem{explicit-irr}.
The claim then follows by \Fact{obvious-efficiency}.
\end{proof}

The authors of \cite{GUV07} then give the following extractor
construction.

\begin{lemma}[based on Lemma 4.11 of \cite{GUV07}]\LemmaName{kt-extractor}
For every integer $t \ge 1$ and all positive integers $n\ge k$ and all
$\eps > 0$, there is a $(k,\eps)$ extractor $E:\{0,1\}^n\times
\{0,1\}^d\rightarrow \{0,1\}^m$ that can be computed in
$\hbox{poly}(n\log(1/\eps))$ time and $O(n + \log(1/\eps))$ space with $m =
\ceil{k/2}$ and $d\le k/t + O(\log(n/\eps))$.
\end{lemma}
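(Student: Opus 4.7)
The plan is to compose the lossless condenser of \Theorem{condenser} with the leftover-hash extractor of \Theorem{hash-extractor}: first condense $X$ until its min-entropy deficiency from the uniform distribution drops to about $k/t$, and then extract $\lceil k/2\rceil$ bits using universal hashing applied to a short suffix of the condensed string.  The condenser is cheap in seed but cannot by itself produce the uniform output, while the hash extractor is seed-hungry (its seed length equals its input length) but is ideal for a source whose length is only slightly larger than its entropy, so composing the two is the natural way to get both a short seed and a near-uniform output.

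Concretely, I would apply the condenser with parameter $\alpha=\Theta(1/t)$ and error $\eps/2$ to produce $Y=C(X,S_1)\in\{0,1\}^{n'}$ of length $n'\le (1+O(1/t))\,k+O(t\log(n/\eps))$ that is $\eps/2$-close to a source of min-entropy $k+|S_1|$, with $|S_1|=O(t\log(n/\eps))$.  The min-entropy deficiency of $Y$ from uniform on $\{0,1\}^{n'}$ is then at most $\Delta=k/t+O(t\log(n/\eps))$.  Write $Y=(Y_1,Y_2)$ with $|Y_2|=\Delta+\lceil k/2\rceil+2\log(2/\eps)+O(1)$; a standard averaging/conditioning argument (``conditioning on $Y_1$ loses at most $|Y_1|$ min-entropy, except with probability $\eps/4$'') shows that with probability at least $1-\eps/4$ over $Y_1$, the conditional distribution $Y_2\mid Y_1$ is a source of min-entropy at least $\lceil k/2\rceil+2\log(2/\eps)$.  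Apply the hash extractor of \Theorem{hash-extractor} to $Y_2$ with fresh seed $S_2$ of length $|Y_2|$; by the leftover hash lemma this yields $\lceil k/2\rceil$ bits that are $\eps/4$-close to uniform.  The composed extractor outputs these $m=\lceil k/2\rceil$ bits using seed $(S_1,S_2)$.

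This direct composition yields total seed length $|S_1|+|S_2|=O(t\log(n/\eps))+k/t+O(\log(1/\eps))$, which meets the claimed $k/t+O(\log(n/\eps))$ bound only in the range $t=O(\sqrt{k/\log(n/\eps)})$.  To cover all $t$, I would iterate the condenser $O(\log t)$ times, each invocation run with $\alpha=\Theta(1)$ so that each round spends only $O(\log(n/\eps))$ seed bits while shrinking the output length by a constant factor toward the target $(1+1/t)k$; the deficiency then bottoms out at $O(k/t)$ and the cumulative condenser seed is $O(\log t\cdot\log(n/\eps))$.  The main obstacle will be balancing these two regimes so that the $\log t$ factor is absorbed into $O(\log(n/\eps))$ without loss in the final bound; this is the technical heart of the original GUV argument and is what forces the careful choice of $\alpha$ in each round.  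Throughout, space stays at $O(n+\log(1/\eps))$ because each intermediate string has length $O(n)$ and can be streamed bit-by-bit into the next subroutine, and each of \Theorem{condenser} and \Theorem{hash-extractor} is itself space-efficient.
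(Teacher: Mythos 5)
There is a genuine gap, and it sits exactly where your seed accounting is silently wrong. The hash extractor of \Theorem{hash-extractor} has seed length equal to its input length, and to output $m=\ceil{k/2}$ bits its input must be at least $\ceil{k/2}$ bits long; indeed your own $Y_2$ has length $|Y_2|=\Delta+\ceil{k/2}+2\log(2/\eps)+O(1)$, so $|S_2|=|Y_2|\ge \ceil{k/2}$. Your tally ``$|S_1|+|S_2|=O(t\log(n/\eps))+k/t+O(\log(1/\eps))$'' drops the $\ceil{k/2}$ term, and with it the total seed is $\Omega(k)$ rather than $k/t+O(\log(n/\eps))$ --- the lemma becomes vacuous for $t>2$. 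Your proposed fix (iterating the condenser $O(\log t)$ times) only attacks the $O(t\log(n/\eps))$ portion of the condenser seed; it does nothing about the $k/2$ bits that a single invocation of the leftover hash lemma inherently costs when it must produce $k/2$ output bits. No amount of further condensing helps, because the bottleneck is the output length, not the entropy deficiency.

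The missing idea is the block-source decomposition with seed recycling, which is the heart of GUV's Lemma 4.11 and of the proof in the paper. After condensing once with $\alpha=1/(6t)$ to a string $x'$ of length $n'\le(1+\alpha)k+O(\log(n/\eps))$, one partitions $x'$ into $2t$ blocks of length about $n'/(2t)\approx k/(2t)$. Because the deficiency of $x'$ is only $\alpha k+O(\log(n/\eps))=k/(6t)+O(\log(n/\eps))$, each block retains conditional min-entropy $k''=k/(3t)-O(\log(n/\eps))$ given the others, so the blocks form a block source. The hash extractor is then applied to each block in sequence, with a $d''$-bit prefix of each block's output serving as the seed for the next block; only one fresh seed of length $d''=O(k/t)+O(\log(n/\eps))$ is ever drawn, and the $(m''-d'')$-bit suffixes are concatenated to give $2t\cdot\Theta(k/t)\ge\ceil{k/2}$ output bits. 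This recycling is what lets a seed of length $k/t+O(\log(n/\eps))$ drive an extraction of $\Theta(k)$ bits, and it is the step your direct (single-application) composition cannot replicate.
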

\begin{proof}
The analysis of running time and proofs of correctness and output
length are identical to \cite{GUV07}, so we focus on space
analysis.  We
now review the algorithm of \cite{GUV07} for computing $E(x,y)$.
\begin{enumerate}
\item Round $t$ to a positive integer and set $\eps_0 = \eps/(4t+1)$.
\item Apply the condenser of \Theorem{condenser} with error $\eps_0$,
  $\alpha = 1/(6t)$, min-entropy $k$, and seed length $d' =
  O(\log(n/\eps))$ to $x$, using the first $d'$ bits of $y$.  The
  output $x'$ of the condenser will be of length at most $n' = (1+\alpha)k
  + O(\log(n/\eps))$.
\item Partition $x'$ into $2t$ blocks $x_1',\ldots,x_{2t}'$ of size
  $n'' =
  \floor{n'/(2t)}$ or $n'' + 1$ and set $k'' = k/(3t) -
  O(\log(n/\eps))$.
\item Let $E''$ be the extractor of \Theorem{hash-extractor} for
  min-entropy $k''$ with input length $n'' + 1$, seed length $d'' =
  k/t + O(\log(n/\eps))$, and error parameter $\eps_0$.  For this
  setting of parameters, the output length of $E''$ will be $m'' \ge
  \max\{d'',
  k'' + d'' - 2\log(1/\eps_0)\}$.  Now output $(z_1,\ldots,z_{2t})$
  where $y_{2t}'$ is the last $d - d' = d''$ bits of $y$, and for $i =
  2t,\ldots,1$, $(y_{i-1}', z_i)$
  is defined inductively to be a partition of $E''(x_i', y_i')$ into a
  $d''$-bit prefix and $(m'' - d'')$-bit suffix.
\end{enumerate}

We now analyze the space complexity of computing this extractor.
First we note $d = k/t + O(\log(n/\eps)) = O(n + \log(1/\eps))$.  Step
2 requires $O(d' + (1+\alpha) n)$ space,
which is $O(n + \log(1/\eps))$.
To apply $E''$, by \Theorem{hash-extractor} we need $n''+1$ to be of
the form $2\cdot 3^{\ell}$;
for now assume this, and we will fix this later.
Each evaluation of $E''$ in Step 4 takes space $O(n'') = O(n'/t) =
O((1+\alpha) k + \log(n/\eps)) = O(n + \log(1/\eps))$.  We also have
to maintain the
$z_i$ as we generate them, but we can stop
the recursive applications of $E''$ in Step 4 once we have extracted
$\ceil{k/2}$ bits.  Also, there are only $2t = O(1)$
levels of recursion in Step 4, so an implementation can keep track of
the current level of recursion with only $O(1)$ bits of bookkeeping.
The total seed length is $d'' + d' = k/t + O(\log(n/\eps))$.

Now, to fix the fact that $n''+1$ might not be of the form $2\cdot
3^{\ell}$, we increase $n''$ so that this does hold.  Doing so
increases $n''$ by at most a factor of $3$.  Since $d'' = n''$, we
increase the seed to be of length $3k/t + O(\log(n/\eps))$, but this
can be remedied by applying the above construction for $t' = 3t$.
\end{proof}

We now come to the final theorem we will use from \cite{GUV07}.

\begin{theorem}[Theorem 4.17 of \cite{GUV07}]\TheoremName{final-extractor}
For all positive integers $n>0, k\le n$ and for all $\eps > 0$, there
is a $(k,\eps)$ extractor $E:\{0,1\}^n\times
\{0,1\}^d\rightarrow \{0,1\}^m$ with $d = O(\log n + \log(1/\eps))$
and $m\ge k/2$ where $E(x,y)$ can be computed in time
$\hbox{poly}(n\log(1/\eps))$ and space $O(n + \log(1/\eps))$ for all
$x,y$.
\end{theorem}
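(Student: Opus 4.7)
The plan is to obtain Theorem~\ref{thm:final-extractor} as a direct consequence of Lemma~\ref{lem:kt-extractor} by choosing the parameter $t$ to grow with $k$ and $n$, rather than treating it as a constant. Specifically, I would set $t = \max\{1, \lceil k/\log n\rceil\}$, so that $k/t = O(\log n)$. Plugging this choice into Lemma~\ref{lem:kt-extractor} yields a $(k,\eps)$ extractor with seed length
\[ d \leq k/t + O(\log(n/\eps)) = O(\log n + \log(1/\eps))\]
and output length $m = \lceil k/2\rceil \geq k/2$, exactly as claimed. So no new construction is required on top of Lemma~\ref{lem:kt-extractor}; the entire content of the theorem is the verification that this choice of $t$ is admissible and that the stated resource bounds survive.

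The content of that verification is to re-inspect the proof of Lemma~\ref{lem:kt-extractor} and confirm that the implicit dependence on $t$ in the time and space bounds is mild. The construction applies the condenser of Theorem~\ref{thm:condenser} once (space $O(n+\log(1/\eps))$, time $\poly(n\log(1/\eps))$), then invokes the ILL/leftover-hash extractor $E''$ of Theorem~\ref{thm:hash-extractor} on each of $2t$ blocks of size $n'' = O(n/t)$, with the output of one invocation serving as the seed for the next. Because the $2t$ invocations are strictly sequential and the output $z_i$ of each stage is simply appended to the growing output string, we can reuse workspace: at any time we hold only the condenser's output $x'$, the current block $x_i'$, the current seed $y_i'$, and the workspace of a single invocation of $E''$, which costs $O(n'' + \log(1/\eps)) = O(n + \log(1/\eps))$ space.

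The only bookkeeping concern is that the proof of Lemma~\ref{lem:kt-extractor} casually writes ``there are only $2t = O(1)$ levels of recursion in Step 4'', which used that $t$ was a constant. When $t$ may grow with $n$, tracking the current iteration requires $O(\log t) = O(\log n)$ additional bits, which is absorbed into the $O(n+\log(1/\eps))$ space bound. The running time also grows by at most a factor of $2t \le n$, so it remains $\poly(n\log(1/\eps))$. Finally, one should check that the output-length bound $m = \lceil k/2 \rceil$ does not degrade with $t$: in the proof of Lemma~\ref{lem:kt-extractor}, $m$ arises from requiring that $2t$ blocks each contribute at least $m''-d''$ output bits, and the calculation of $m''$ in terms of $k''$ and $d''$ gives $m \geq k/2$ independently of $t$ (the dependence on $t$ shows up only in $d$ and in the inner parameters).

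The main obstacle, such as it is, lies in this last verification --- making sure that in the parameter choices of Lemma~\ref{lem:kt-extractor} (in particular the entropy $k'' = k/(3t) - O(\log(n/\eps))$ per block and the resulting output length of $E''$), no hidden term becomes trivial when $t$ is as large as $k/\log n$. Since we pick $t$ only up to $k/\log n$, each block still has min-entropy $k'' = \Omega(\log n)$ available for extraction, and the total extracted output $2t\cdot(m''-d'') \geq k/2$ continues to hold. This completes the reduction of Theorem~\ref{thm:final-extractor} to Lemma~\ref{lem:kt-extractor} applied with this particular non-constant choice of $t$.
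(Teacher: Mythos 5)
There is a genuine gap, and it is fatal to the whole plan: the $O(\log(n/\eps))$ term in the seed length of \Lemma{kt-extractor} hides a constant that grows \emph{linearly} in $t$, so you cannot drive $k/t$ down to $O(\log n)$ for free. Concretely, Step 2 of the proof of \Lemma{kt-extractor} invokes the condenser of \Theorem{condenser} with $\alpha = 1/(6t)$, and that condenser has seed length $d' = (1+1/\alpha)(\log n + \log k_{\hbox{max}}) + O(\log(1/\eps)) = \Theta(t\log n) + O(\log(1/\eps))$ (the underlying expander has degree $D = O(((\log N)(\log K_{\hbox{max}})/\eps)^{1+3/\alpha})$, so $\log D = \Theta(t\log(nk/\eps))$). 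With your choice $t = \Theta(k/\log n)$ the condenser seed alone is $\Omega(k)$ bits, so the total seed length is $k/t + \Theta(t\log(n/\eps)) = \Omega(k)$ rather than $O(\log n + \log(1/\eps))$; even the optimal trade-off $t \approx \sqrt{k/\log(n/\eps)}$ only gets you down to $\Theta(\sqrt{k\log(n/\eps)})$. Moreover \Theorem{condenser} as stated requires $\alpha = \Omega(1)$, so for non-constant $t$ you cannot even invoke it. Your verification step looked for hidden $t$-dependence only in the time and space bounds, but the dangerous dependence is in the seed length itself.

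This is exactly why the paper (following GUV) proves \Theorem{final-extractor} by a genuinely different, recursive construction rather than by tuning $t$: it keeps $t$ constant ($t=9$ or $t=16$), condenses the source, splits the condensed string into two halves, uses a short-output instance of \Lemma{kt-extractor} on one half to \emph{generate the seed} for a recursively defined extractor $E_{k'}$ applied to the other half, and iterates for $O(\log k)$ levels. Each level adds only $O(\log(n/\eps))$ to the seed (in fact the seed lengths across levels form a geometric-like series dominated by $d$), which is how the final seed length $O(\log n + \log(1/\eps))$ is achieved while the output length is boosted back to $k/2$ by repeating the step a constant number of times with decreasing entropy parameters. If you want to salvage a non-recursive argument you would need an extractor family whose seed length is simultaneously $O(\log(n/\eps))$ and whose entropy loss is a constant fraction, which is precisely what is being constructed, so the recursion (or some equivalent seed-composition mechanism) cannot be avoided here.
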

\begin{proof}
The construction here is completely unchanged from \cite{GUV07}.
We only analyze the space complexity, as the time complexity and
extractor parameters were analyzed in \cite{GUV07}.  To perform this
analysis, we present the details of the construction.

Define $\eps_0 = \eps/\hbox{poly}(n)$.  For any integer $k$, define
$i(k)$ to be the smallest integer $i$ such that $k\le 2^i\cdot 8d$
with $d = c\log(n/\eps_0)$ for some large constant $c$.  For every
$k\in [0, n]$, GUV define an extractor $E_k$ recursively.  In their
base case, $i(k) = 0$ so that $k \le 8d$.  Here they apply
\Lemma{kt-extractor} with $t = 9$.

For $i(k) > 0$, $E_k(x,y)$ is evaluated as follows.
\begin{enumerate}
\item Apply \Theorem{condenser} to $x$ with a seed length of
  $O(\log(n/\eps_0))$ to obtain a string $x'$ of
  length $(9/8)k + O(\log(n/\eps_0))$.
\item Divide $x'$ into two equal-sized halves $x'_1$, $x'_2$. Set $k'
  = k/2 - k/8 - O(\log(n/\eps_0))$, for which $k' \ge 2d$ by setting
  $c$ sufficienty large.  Set $E' = E_{k'}$, which has seed
  length $d_1 = d$, and obtain a
  $(2d,\eps_0)$ extractor
  $E''$ from \Lemma{kt-extractor} with $t=16$, output length $d$,
  and seed length $d_2 = d/8 + O(\log(n/\eps_0))$.
\item Apply $E''$ to $x'_2$ to to yield an output $y$.  Output
  $E'(x'_1, y)$, which has length at least $k/6$.
\end{enumerate}
The total seed length is $d/8 + O(\log(n/\eps_0))$.  To yield
$k/2$ bits of output and not just $k/6$, repeat Steps 1
through 3 above but with $k$ replaced by $k_2 = 5k/6 - 1$.  Then
repeat again with $k_3 = 5k_2/6 - 1$ then $k_4 = 5k_3/6 - 1$.  The
total number of output bits is then $(1 - (5/6)^4)k - O(1) \ge
k/2$, and the seed length has increased by a factor of $4$, but is
still at most $d$.

Now we analyze space complexity.  Steps 1 through 3 above are
performed four times at any recursive level, so we have a recursion
tree with branching factor four and height $O(\log k)$.  At a level
of recursion where we handle some min-entropy $k''$, the input at that
level is some $x''$ of length $\Theta(k'')$ along with a seed of
length $d$ (except for the topmost level which has input $x$ of length
$n$).  For all levels but the bottommost in the recursion tree, we
have
$k'' > d$ so that the total space needed to store all inputs at all
levels when performing the computation depth-first on the recursion
tree is bounded by a geometric series with largest value $\Theta(k +
d)$.  At each non-leaf node of recursion we run the extractor from
\Lemma{kt-extractor} four times, each with input length $\Theta(k'')$
and seed length $d < k''$, using space $O(k'')$.  We therefore have
that no level uses space more than
$O(k'')$ to perform its computations.  The total space to calculate
$E_k$ is thus $O(n + k + d) = O(n + \log(1/\eps))$.
\end{proof}

\subsubsection{Applying GUV to Armoni's PRG}
We begin with a formal definition of a pseudorandom generator.

\begin{definition}\DefinitionName{prg}
A function $G:\{0,1\}^l\rightarrow \{0,1\}^R$ is a {\em
  $\gamma$-pseudo-random generator} ($\gamma$-PRG) for space $S$ with
$R$ random bits if any space-$S$
machine $M$ with one-way access to $R$ random bits is {\em $\gamma$-fooled} by
$G(\U_l)$.  That is, if we let $M(x,y)$ denote the final state of the
machine $M$ on an input $x$ and $R$-bit string $y$, $||M(x,\U_R) -
M(x,G(\U_l))|| \le \gamma$ for all inputs $x$, where
$||\mathbf{A}-\mathbf{B}||$ denotes the statistical difference between
two distributions $\mathbf{A},\mathbf{B}$.
\end{definition}

Armoni defines a $\gamma$-PRG slightly differently.  Namely, in his
definition the machine $M$ outputs a binary answer (``accept'' or
``reject''), and he only requires that the distribution of the
decision made by $M$ changes by at most $\gamma$ in statistical
distance for any input.  However, in fact the PRG construction he
gives actually satisfies \Definition{prg}.  This is because he models
the machine's execution on an input $x$ by a branching
program with $R$ layers and width $2^S$ in each layer. One should interpret
nodes in the branching program as states of the algorithm, where each
node in the $i$th layer, $i<R$, has out-degree $2$ into the $(i+1)$st
layer with the edges labeled $0$ and $1$ corresponding to the $i$th
bit of randomness. Armoni then actually provides a PRG
which $\gamma$-fools branching programs with respect to the
distribution of the final ending node, i.e., the final state
of the algorithm.

Henceforth we summarize the PRG construction of Armoni \cite{Armoni98}
to illustrate that the space-efficient implementation of the GUV
extractor described in \Theorem{final-extractor} gives a PRG using
seed length $O((S/(\log S - \log\log R + O(1)))\log R)$ to produce $R$
pseudorandom bits for any $R = 2^{O(S)}$ which fool space-$S$ machines
with one-way access to their randomness.  We assume $R\ge S$, since
otherwise the machine could afford to store all random bits it uses.

To use notation similar to that of Armoni, for an extractor
$E:\{0,1\}^k\times \{0,1\}^t\rightarrow \{0,1\}^r$, define
$\hat{G}_{E,n}:\{0,1\}^{k+nt}\rightarrow \{0,1\}^{nr}$ by
$$\hat{G}_{E,n}(x,y_1,y_2,\ldots,y_n) = E(x,y_1)\cdots E(x,y_n)$$
where $x\in\{0,1\}^k$ and $y_i\in\{0,1\}^t$.  To obtain a
$\gamma$-PRG, Armoni recursively defines functions $G_i
\{0,1\}^k\times \{0,1\}^{(i-1)k'}\times\{0,1\}^{n_{i-1}t}\rightarrow
\{0,1\}^R$ as follows.
\begin{enumerate}
\item $G_1(x_1,y_1,\ldots,y_n) = \hat{G}_{E,n}(x_1,y_1,\ldots,y_n)$
\item $G_i(x_1,\ldots,x_i,y_1,\ldots,y_n) =
  G_{i-1}(x_1,\ldots,x_{i-1},\hat{G}_{E',n}(x_i,y_1,\ldots,y_{n_{i-1}}))$
\end{enumerate}
where $k = O(S)$, $k' = O(S + \log(1/\gamma))$, $t =
O(\log(R/\gamma))$, and $n_i = n_{i-1}/\Theta(S/(\log R +
\log(1/\gamma)))$ for $i>0$ with $n_0 = R$.  The extractor $E$ has
input length $k$ and seed length $t$, while $E'$ has input length $k'$
and seed length $t$.  The string $x_1$ is in $\{0,1\}^k$, while
$x_2,\ldots,x_i\in \{0,1\}^{k'}$ and $y_i\in\{0,1\}^t$.  
The final PRG is defined as $G = G_h$, with $h =
\Theta(\log(R)/(\max\{1,\log(S) - \log\log(R/\gamma))\}))$. For each
$i<h$, the output of $\hat{G}_{E',n}$ is split into equal-size blocks
of size $t$ to obtain the $y_1,\ldots,y_{n_i}$ for $G_{i+1}$.

In Armoni's proof of correctness of his PRG, he needs the following
type of extractor.  For every integer $\ell$ and every $\eps>0$, he
requires a $(\ell/2,\eps)$ extractor
$E:\{0,1\}^{\ell}\times\{0,1\}^d\rightarrow\{0,1\}^{\ell/4}$ with $d =
\Theta(\log(\ell/\eps))$.  
The extractors $E$, $E'$ above must be taken to
have these parameters with $\ell=k$ and $\ell=k'$.  By
\Theorem{final-extractor}, we know such $E$, $E'$ can be chosen that
can be evaluated in space $O(k + t)$ and $O(k' + t)$,
respectively.\footnote{We note Armoni defines extractors to
  be ``strong'', i.e. the seed appears at the end of the output. It is
  known that the GUV extractor can be easily made strong with no
  increase in complexity (see Remark 4.22 of \cite{GUV07}).}
We now analyze the space-complexity of computing any single bit in the
output of $G$.  We must store
a seed of length $O(k + k'(h-1) +t)$, which is
$O(((S+\log(1/\gamma))/\max\{1,\log S -
\log\log(R/\gamma)\})\log R)$ (see Theorem 2 of \cite{Armoni98} for a
detailed calculation).  To calculate a single output bit, in a
recursive implementation there are $h = O(\log R) = O(S)$ levels of
recursion, and in each we must evaluate either $E$ or $E'$ on some
$y_i$, split the output of that evaluation into blocks, then recurse.
At a level $i$ of recursion we need to know the seed $y_i$ we have
recursed on, as well as which output bit $b_i$ we will want in
$G_i(x_1,x_2,\ldots,x_i,y)$.  The value $b_i$ fits
into at most $\log R$ bits, and the length of $y_i$ is $t =
O(\log(R/\gamma))$.  Note though that once we have calculated
$y_{i-1}$ and $b_{i-1}$ for our recursive step to the $(i-1)$st level,
we no longer need to know $y_i$ and $b_i$.  Thus, the $y_i$ and $b_i$
can be kept in a global register, taking a total of $t =
O(\log(R/\gamma)) = O(S + \log(1/\gamma))$ bits throughout the entire
recursion.  At each level of recursion we must perform one evaluation
of an extractor, which takes space $O(k' + t) = O(S +
\log(1/\gamma))$.  We thus have the following theorem, which extends
Corollary 1 of \cite{Armoni98} by working for the full range of $R$,
as opposed to just $R < 2^{S^{1-\delta}}$ for some $\delta>0$.

\begin{theorem}\TheoremName{final-armoni}
For any $\gamma>0$ and integers $S\ge 1, R = 2^{O(S)}$, there is a
$\gamma$-PRG
stretching $O(\frac{S+\log(1/\gamma)}{\max\{1,\log S -
\log\log(R/\gamma)\}}\log R)$ bits of seed to $R$ pseudorandom
bits $\gamma$-fooling space-$S$ machines such that any of the $R$
output bits
can be computed in space $O(S + \log(1/\gamma))$ and time
$\hbox{poly}(S\log(1/\gamma))$.\afterproof
\end{theorem}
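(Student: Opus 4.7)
The plan is to invoke Armoni's recursive construction \cite{Armoni98} verbatim, but instantiate the two extractor primitives it requires using the space-efficient GUV extractor of \Theorem{final-extractor}, and then verify that space usage of a single-bit query is dominated by the per-level extractor evaluation cost. Correctness (the PRG property, including the seed-length calculation) is Armoni's theorem and requires no change; the new content is purely complexity-theoretic.

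First, I would recall the recursive definition of $G_i$ on seeds $(x_1,\ldots,x_i,y_1,\ldots,y_{n_{i-1}})$, with parameters $k=O(S)$, $k'=O(S+\log(1/\gamma))$, $t=O(\log(R/\gamma))$, and $n_i=n_{i-1}/\Theta(S/(\log R+\log(1/\gamma)))$, and take $h=\Theta(\log R/\max\{1,\log S-\log\log(R/\gamma)\})$. Armoni's analysis requires, for every relevant input length $\ell\in\{k,k'\}$, an $(\ell/2,\eps)$ extractor $E_\ell:\{0,1\}^\ell\times\{0,1\}^d\to\{0,1\}^{\ell/4}$ with $d=\Theta(\log(\ell/\eps))$. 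I would simply plug in the extractor from \Theorem{final-extractor}, which meets these parameters and additionally runs in space $O(\ell+\log(1/\eps))$. The total seed length is $O(k+k'(h-1)+t)=O(\frac{S+\log(1/\gamma)}{\max\{1,\log S-\log\log(R/\gamma)\}}\log R)$ exactly as in Armoni's original analysis, which I would cite rather than redo.

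The only substantive new argument is the space bound for evaluating one output bit of $G=G_h$. I would describe the computation as a depth-$h$ recursion: at recursion level $i$ we need to produce a specific bit $b_{i-1}$ of $\hat{G}_{E',n}(x_i,y_1,\ldots,y_{n_{i-1}})$, which reduces to evaluating $E'(x_i,y_{i-1}^*)$ on a single chosen seed block $y_{i-1}^*$ of length $t$ and then reading off a specific bit, thereby producing the index $b_i$ and seed $y_i^*$ for the next deeper level. The key observation is that once $(b_i,y_i^*)$ are computed, the previous $(b_{i-1},y_{i-1}^*)$ are no longer needed, so a single global register of size $O(t+\log R)=O(S+\log(1/\gamma))$ suffices to hold ``current index/seed.'' Each level performs one extractor evaluation using $O(k'+t)=O(S+\log(1/\gamma))$ working space by \Theorem{final-extractor}, and only a constant-size counter is needed to track the recursion depth $i\le h$. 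Adding the space to store the initial seed $(x_1,\ldots,x_h)$, which also fits in $O(S+\log(1/\gamma))$ bits per level but is read once in streaming fashion from the input, the total working space is $O(S+\log(1/\gamma))$. The time bound $\mathrm{poly}(S\log(1/\gamma))$ follows since each of the $h=O(S)$ recursive levels runs one GUV extractor evaluation of time $\mathrm{poly}(k'+t)$.

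The main obstacle is being careful about what is ``stored'' versus ``recomputed'' in the recursion: a naive implementation would store all intermediate extractor outputs, blowing the space up by a factor of $h$. The space-efficient implementation crucially relies on (a) the fact that $\hat{G}_{E',n}$'s $j$th output bit depends only on a single block $y_{\lceil j/r\rceil}$ of the seed and on $x_i$, so we can index into it without materializing the full string, and (b) that each extractor evaluation in \Theorem{final-extractor} itself runs in linear space rather than space proportional to its output length. These two facts together are what promote the $O(S\log R)$ space of a naive black-box use of Armoni's construction to the claimed $O(S+\log(1/\gamma))$ bound; everything else is bookkeeping on top of Armoni's correctness theorem and the GUV implementation of \Section{linear-guv}.
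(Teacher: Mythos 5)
Your proposal is correct and follows essentially the same route as the paper: instantiate Armoni's recursion with the space-efficient GUV extractor of \Theorem{final-extractor}, observe that a single output bit is computed by a depth-$h$ recursion in which only the current $(b_i,y_i^*)$ pair need be retained in a global $O(S+\log(1/\gamma))$-bit register, and charge each level one extractor evaluation in space $O(k'+t)$. The paper's argument is identical, down to the remark that the full seed is treated as read-only input rather than working space.
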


We note that Indyk's algorithm is designed to succeed with constant
probability (say, $2/3$), so in the application of
\Theorem{final-armoni} to his algorithm, $\gamma$ is a constant.


\subsection{A balls and bins process}\SectionName{balls-and-bins}
Consider the following random process which arises in the analysis
of both our $F_0$ and $L_0$ algorithms.  We throw a set of $A$ ``good''
balls and $B$ ``bad'' balls into $K$ bins at random.  In the analysis
of our $L_0$ algorithm, we will be concerned with the special case
$B=0$, whereas the $F_0$ algorithm analysis requires understanding the
more general random process.
We let $X_i$ denote
the random variable indicating that at least one good ball, and no bad
balls, landed in bin $i$, and we let $X = \sum_{i=1}^K X_i$.
We now prove a few lemmas.  

\begin{lemma}\LemmaName{exactgoodbadballs}
$$\E[X] = K\left(1 - \left(1 - \frac{1}{K}\right)^A\right) \left(1 -
  \frac{1}{K}\right)^B$$
\begin{center}
and
\end{center}
\begin{align*}
\Var[X] = K\left(1 - \left(1 - \frac 1K\right)^A\right) \left(1 -
  \frac 1K\right)^B & + K(K-1)\left(1 - \frac 2K\right)^B \left(1 -
  2\left(1 - \frac 1K\right)^A + \left(1 - \frac 2K\right)^A\right) \\
& - K^2\left(1 - 2\left(1 - \frac 1K\right)^A + \left(1 - \frac
    1K\right)^{2A}\right) \left(1 - \frac 1K\right)^{2B}
\end{align*}
\end{lemma}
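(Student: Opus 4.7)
The plan is to compute both quantities by direct enumeration, exploiting linearity of expectation for the first moment and symmetry plus inclusion–exclusion for the second.

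For the expectation, I would write $\E[X] = \sum_{i=1}^K \Pr[X_i = 1]$ and note that by symmetry every bin is equally likely to be good, so $\E[X] = K \cdot \Pr[X_1 = 1]$. The event $\{X_1 = 1\}$ is the intersection of two events depending on disjoint random choices (the placement of the $A$ good balls and the placement of the $B$ bad balls), hence factors as $\Pr[\text{at least one good ball lands in bin } 1]\cdot \Pr[\text{no bad ball lands in bin } 1]$. Each bad ball independently misses bin $1$ with probability $1-1/K$, giving $(1-1/K)^B$, while the complementary event for good balls is that all $A$ good balls miss bin $1$, with probability $(1-1/K)^A$. This yields the stated formula for $\E[X]$.

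For the variance, I would use $\Var[X] = \E[X^2] - \E[X]^2$ and expand $\E[X^2] = \sum_i \E[X_i^2] + \sum_{i \ne j} \E[X_i X_j]$. Since each $X_i$ is a $0/1$ indicator, $\E[X_i^2] = \Pr[X_i = 1]$, contributing exactly the $\E[X]$ expression computed above. For the cross terms, by symmetry it suffices to compute $\Pr[X_1 = 1 \wedge X_2 = 1]$ and multiply by $K(K-1)$. Again separating good from bad, the bad-ball contribution is $(1 - 2/K)^B$ (every bad ball must miss both bins $1$ and $2$), and the good-ball contribution is the probability that bin $1$ and bin $2$ each receive at least one good ball. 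By inclusion–exclusion on the complementary events ``bin $1$ gets no good ball'' and ``bin $2$ gets no good ball'', the latter probability equals $1 - 2(1-1/K)^A + (1-2/K)^A$.

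Finally, I would compute $\E[X]^2 = K^2(1 - (1-1/K)^A)^2 (1-1/K)^{2B}$, expanding $(1-(1-1/K)^A)^2 = 1 - 2(1-1/K)^A + (1-1/K)^{2A}$, and subtract from $\E[X^2]$. The three resulting groups of terms correspond precisely to the three summands in the stated variance formula. There is no real obstacle beyond keeping track of factors; the only subtlety is recognising that the bad-ball throws and good-ball throws are independent (so the joint event factorises), and using inclusion–exclusion correctly for the pair of bins in the cross term.
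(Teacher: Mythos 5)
Your proposal is correct and follows essentially the same route as the paper: linearity of expectation for $\E[X]$, the decomposition $\Var[X]=\E[X]+K(K-1)\E[X_1X_2]-\E^2[X]$, factoring the pair event into independent good-ball and bad-ball parts, and computing the good-ball part $1-2(1-1/K)^A+(1-2/K)^A$ (the paper arrives at this via a slightly more laborious conditional-probability calculation, but your direct inclusion--exclusion on the complementary events is the same computation).
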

\begin{proof}
The computation for $\E[X]$ follows by linearity of expectation.

For $\Var[X]$, we have
$$\Var[X] = \E[X^2] - \E^2[X] = \sum_i \E[X_i^2] +
2\sum_{i<j}\E[X_iX_j] - \E^2[X]$$
We have $\E[X_i^2] = \E[X_i]$, so the first sum is simply $\E[X]$.  We
now calculate $\E[X_iX_j]$ for $i\neq j$.  Let $Y_i$ indicate that
at least one good ball landed in bin $i$, and let $Z_i$ indicate that
at least one bad ball landed in bin $i$.  Then,
\begin{eqnarray*}
\E[X_iX_j] &=& \Pr[Y_i\wedge Y_j\wedge \bar{Z_i} \wedge \bar{Z_j}] \\
&=& \Pr[\bar{Z_i}\wedge \bar{Z_j}]\cdot \Pr[Y_i\wedge
Y_j|\bar{Z_i}\wedge \bar{Z_j}]\\
&=& \Pr[\bar{Z_i}\wedge \bar{Z_j}]\cdot \Pr[Y_i\wedge Y_j]\\
&=& \left(1 - \frac 2K\right)^B\cdot \left(1 - \Pr[\bar{Y_i}\wedge
  \bar{Y_j}] - \Pr[Y_i \wedge \bar{Y_j}] - \Pr[\bar{Y_i} \wedge Y_j]
\right)\\
&=& \left(1 - \frac 2K\right)^B\cdot \left(1 - \Pr[\bar{Y_i}\wedge
  \bar{Y_j}] - 2\cdot \Pr[Y_i \wedge \bar{Y_j}] \right)\\
&=& \left(1 - \frac 2K\right)^B\cdot \left(1 - \Pr[\bar{Y_i}\wedge
  \bar{Y_j}] - 2\cdot \Pr[\bar{Y_j}]\cdot\Pr[Y_i|\bar{Y_j}] \right)\\
&=& \left(1 - \frac 2K\right)^B\cdot \left(1 - \left(1 - \frac
    2K\right)^A - 2\left(1 - \frac 1K\right)^A\left(1 - \left(1 -
      \frac 1{K-1}\right)^A\right) \right)\\
&=& \left(1 - \frac 2K\right)^B\cdot \left(1 - \left(1 - \frac
    2K\right)^A - 2\left(1 - \frac 1K\right)^A + 2\left(1 -
      \frac 2K\right)^A\right) \\
&=& \left(1 - \frac 2K\right)^B\cdot \left(1 - 2\left(1 - \frac
    1K\right)^A + \left(1 - \frac 2K\right)^A \right)\\
\end{eqnarray*}

The variance calculation then follows by noting $2\sum_{i<j}\E[X_iX_j]
= K(K-1)\E[X_1X_2]$ then expanding out $\E[X] + K(K-1)\E[X_1X_2] -
\E^2[X]$.
\end{proof}

\begin{lemma}\LemmaName{bigballexpectation}
If $A\ge K/160$ and $A,B\le K/2$, 
then $\E[X] \ge K/500$.
\end{lemma}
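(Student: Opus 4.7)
The plan is to start from the closed-form expression
$$\E[X] = K\left(1 - \left(1 - \frac 1K\right)^A\right)\left(1 - \frac 1K\right)^B$$
provided by \Lemma{exactgoodbadballs}, and lower bound each of the two parenthesized factors using only the hypotheses on $A$, $B$, and $K$.

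For the factor involving $B$, I would invoke Bernoulli's inequality: since $B$ is a non-negative integer with $B\le K/2$, we have $(1 - 1/K)^B \ge 1 - B/K \ge 1/2$. For the factor involving $A$, I would first use the inequality $\ln(1-x)\le -x$ to write $(1-1/K)^A \le e^{-A/K}$, so that $1 - (1-1/K)^A \ge 1 - e^{-A/K}$. Since $x \mapsto 1-e^{-x}$ is monotonically increasing and the hypothesis $A \ge K/160$ gives $A/K \ge 1/160$, this yields $1 - (1-1/K)^A \ge 1 - e^{-1/160}$. A Taylor expansion of $e^{-x}$ (whose first four terms form an alternating series with decreasing magnitude for such a small $x$) then gives the clean lower bound $1 - e^{-1/160} \ge (1/160)(1 - 1/320) = 319/51200$.

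Multiplying the two factor bounds yields
$$\E[X] \ge K \cdot \frac{319}{51200} \cdot \frac 12 = \frac{319\,K}{102400} > \frac{K}{500},$$
with the last inequality equivalent to $319 \cdot 500 = 159500 > 102400$, and the lemma follows.

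I do not anticipate any real obstacle; the argument is a direct computation using standard elementary inequalities, and the constants leave roughly $50\%$ slack above the required threshold $K/500$. The only minor subtlety is that the form of Bernoulli's inequality used requires $B$ to be a non-negative integer, which is indeed the case in the ball-throwing process underlying the lemma; likewise, $A \ge K/160$ together with $A \le K/2$ and integrality of $A$ ensures $K \ge 2$, so there are no boundary issues in applying these elementary estimates.
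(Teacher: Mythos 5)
Your proof is correct and follows essentially the same route as the paper: both start from the closed form of $\E[X]$ in \Lemma{exactgoodbadballs}, bound the $B$-factor by $1-B/K\ge 1/2$ via Bernoulli, and bound the $A$-factor by a second-order estimate (the paper truncates the binomial expansion of $(1-1/K)^A$ directly, you pass through $e^{-A/K}$ and Taylor-expand, which is the same calculation). The constants work out in both cases, so there is nothing to add.
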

\begin{proof}
Applying \Lemma{exactgoodbadballs},
$$\E[X] \ge K\left(1 - \frac BK\right)\left(1 -
  \left( 1 - \frac AK + \frac{A^2}{2K^2}\right)\right)\ge \frac
K2\cdot\frac AK\left(1 - \frac A{2K}\right)\ge \frac K{320}\cdot \frac
34 \ge \frac K{500}$$
\end{proof}

In the next lemma, we use the following inequalities.

\begin{lemma}[Motwani and Raghavan {\cite[Proposition
    B.3]{MR95}}]\LemmaName{mr95}
For all $t,n\in\mathbb{R}$ with $n\ge 1$ and $|t|\le n$,
$$ e^t\left(1 - \frac{t^2}{n}\right) \le \left(1 + \frac tn\right)^n
\le e^t  $$
\end{lemma}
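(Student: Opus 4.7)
The plan is to dispose of each inequality separately. The upper bound is elementary: applying $1+x \le e^x$ (valid for all real $x$) with $x = t/n$ gives $1 + t/n \le e^{t/n}$. Since $|t|\le n$ forces $1 + t/n \ge 0$, both sides are nonnegative, so raising to the $n$th power preserves the inequality and yields $(1+t/n)^n \le e^t$.

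For the lower bound, I would first handle the case $t^2 \ge n$: here $1 - t^2/n \le 0$ while $(1+t/n)^n \ge 0$, so the claim is trivial. Assume henceforth $t^2 < n$, so both $(1+t/n)^n$ and $e^t(1-t^2/n)$ are strictly positive and we may take logarithms. The task reduces to proving
\[ F(t) \;:=\; n\ln(1+t/n) - t - \ln(1 - t^2/n) \;\ge\; 0 \quad \text{for } |t|<\sqrt{n}. \]
A direct differentiation, with the three terms combined over the common denominator $(1+t/n)(1-t^2/n)$, gives after routine simplification
\[ F'(t) \;=\; \frac{t\bigl(1 + 2t/n + t^2/n\bigr)}{n\,(1+t/n)(1-t^2/n)}. \]
The denominator is strictly positive on this interval, and the numerator factor $1 + 2t/n + t^2/n$ is a quadratic in $t$ with discriminant $4/n^2 - 4/n \le 0$ since $n\ge 1$, hence nonnegative for every $t$. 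Thus $F'(t)$ has the same sign as $t$, so $F$ is minimized at $t=0$ with $F(0)=0$, giving $F\ge 0$ throughout.

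The only real obstacle is making the right case split at the outset: the factor $1-t^2/n$ on the right-hand side can be negative, and one must notice that this is exactly the regime where the inequality becomes trivial. Once restricted to $t^2<n$, the problem reduces to a one-variable calculus exercise whose only nontrivial input is the nonnegativity of $1+2t/n+t^2/n$ for $n\ge 1$.
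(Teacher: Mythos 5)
Your proof is correct. Note that the paper does not prove this lemma at all --- it is quoted verbatim from Motwani and Raghavan \cite[Proposition B.3]{MR95} and used as a black box --- so there is no in-paper argument to compare against. Your two-case treatment is sound: the upper bound via $1+x\le e^x$ and monotonicity of $a\mapsto a^n$ on $a\ge 0$ is fine, the observation that the lower bound is vacuous when $t^2\ge n$ is exactly the right case split, and the derivative computation checks out ($F'(t)=\tfrac{t(1+2t/n+t^2/n)}{n(1+t/n)(1-t^2/n)}$, with the quadratic factor nonnegative because its discriminant is $\tfrac{4}{n^2}(1-n)\le 0$ and the denominator positive since $|t|<\sqrt n\le n$). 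For what it is worth, the textbook proof of the lower bound is usually algebraic rather than calculus-based: write $(1+t/n)^n\ge (1-t^2/n^2)^n/(1-t/n)^n$, apply Bernoulli's inequality to get $(1-t^2/n^2)^n\ge 1-t^2/n$, and bound $(1-t/n)^n\le e^{-t}$; your monotonicity argument is an equally valid and arguably cleaner alternative, at the mild cost of having to verify the sign of $F'$ carefully.
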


\begin{lemma}\LemmaName{badballsvar}
If $A,B \le K/4$ then $\Var[X] \le 7K$.
\end{lemma}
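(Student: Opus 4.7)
The plan is to plug into the explicit formula for $\Var[X]$ from \Lemma{exactgoodbadballs} and show that most of it telescopes. Introduce the abbreviations $\alpha = (1-1/K)^A$ and $\beta = (1-1/K)^B$, so that $(1-1/K)^{2A} = \alpha^2$ and $(1-1/K)^{2B} = \beta^2$. Then the formula reads
$$\Var[X] = K(1-\alpha)\beta + K(K-1)(1-2/K)^B\bigl(1-2\alpha+(1-2/K)^A\bigr) - K^2(1-\alpha)^2\beta^2.$$
The first summand is at most $K$ since $1-\alpha,\beta\in[0,1]$, so it suffices to show that the remaining two terms together are nonpositive.

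The key observation is the elementary pointwise inequality $1-2/K \le (1-1/K)^2$ (the two sides differ by $1/K^2$), which upon raising to the $A$-th and $B$-th powers gives
$$(1-2/K)^A \le \alpha^2, \qquad (1-2/K)^B \le \beta^2.$$
Then I would split into two cases. If $1-2\alpha+(1-2/K)^A \ge 0$, substituting these two inequalities and using $K(K-1)\le K^2$ yields
$$K(K-1)(1-2/K)^B\bigl(1-2\alpha+(1-2/K)^A\bigr) \le K^2\beta^2(1-2\alpha+\alpha^2) = K^2\beta^2(1-\alpha)^2,$$
so the bracket is at most $0$. If instead $1-2\alpha+(1-2/K)^A < 0$, then the middle term is already nonpositive, and we subtract the nonnegative quantity $K^2(1-\alpha)^2\beta^2$, so again the sum of the last two terms is $\le 0$. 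Either way, $\Var[X] \le K$, which is stronger than the claimed $\le 7K$.

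I do not expect any real obstacle: the argument is pure identity-and-cancellation, and the hypothesis $A,B\le K/4$ is not actually needed for this bound (it is presumably there to be consistent with the other balls-and-bins lemmas in \Section{balls-and-bins}). The only mild subtlety is remembering that $1-2\alpha+(1-2/K)^A$ can be negative when $\alpha$ is close to $1$, so one cannot blindly substitute the upper bound $\alpha^2$ for $(1-2/K)^A$; the two-case split above handles this cleanly.
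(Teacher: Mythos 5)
Your proof is correct, and it takes a genuinely different (and cleaner) route than the paper's. The paper also starts from \Lemma{exactgoodbadballs}, but then invokes \Lemma{mr95} to replace every factor of the form $(1-t/K)^{\cdot}$ by an exponential $e^{-t\cdot/K}$ (up to the $(1-t^2/K)$ correction), uses $A,B\le K/4$ to combine like terms, and finally bounds each surviving positive coefficient of $K$ by $1$ or $2$ to land on $7K$; the hypothesis $A,B\le K/4$ is genuinely used there to control the correction factors. Your argument instead exploits the exact algebraic structure: the pointwise inequality $1-2/K\le(1-1/K)^2$ together with the identity $1-2\alpha+\alpha^2=(1-\alpha)^2$ makes the second and third terms of the variance formula cancel against each other (with the two-case split correctly handling the possibility that the middle bracket is negative), leaving only the first term, which is trivially at most $K$. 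This buys you a sharper constant ($\Var[X]\le K$ rather than $7K$), dispenses with the hypothesis $A,B\le K/4$ entirely (you only need $K\ge 2$ so that $(1-2/K)^B\ge 0$, which holds in context), and avoids the exponential approximations altogether. The case split you flag is indeed the one subtlety: substituting $\alpha^2$ for $(1-2/K)^A$ inside the bracket is only monotone when the bracket is multiplied by a nonnegative quantity and remains one to be compared against, and your Case 2 correctly observes that when the bracket is negative the middle term is already nonpositive. Since $\Var[X]\le K\le 7K$, the stated lemma follows.
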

\begin{proof}
Applying \Lemma{exactgoodbadballs} and \Lemma{mr95},
\begin{eqnarray*}
\Var[X] &\le& Ke^{-B/K} - Ke^{-(A+B)/K}\left(1-\frac
  1K\right)^{(A+B)/K} \\
&+& K(K-1)e^{-2B/K}\left(1 - 2e^{-A/K}\left(1-\frac
  1K \right)^{(A+2B)/K}  + e^{-2A/K} \right)\\
&-& K^2e^{-2B/K}\left( \left(1 - \frac 1K \right)^{2B/K} - 2e^{-A/K} +
  e^{-2A/K}\left(1-\frac
  1K \right)^{2(A+B)/K} \right)
\end{eqnarray*}
Now using the fact that $A,B\le K/4$ and combining like terms,
\begin{eqnarray*}
\Var[X] &\le& K\left(e^{-B/K} - e^{-(A+B)/K}\left(1-\frac
  1K\right) - e^{-2B/K} + 2e^{-(A+2B)/K} - e^{-2A/K}\right)\\
&&+\ K^2\Bigg( e^{-2B/K} - 2e^{-(A+2B)/K}\left(1 - \frac 1K\right) +
  e^{-2(A+B)/K} - e^{-2B/K}\left(1 - \frac 1K\right)\\ 
&&+\ 2e^{-(A+2B)/K}
  - e^{-2(A+B)/K}\left(1 - \frac 1K\right) \Bigg)\\
&=& K\left(e^{-B/K} - e^{-(A+B)/K}\left(1-\frac
  1K\right) - e^{-2B/K} + 2e^{-(A+2B)/K} - e^{-2A/K}\right)\\
&&+\ K\left( e^{-2B/K} +
  e^{-2(A+B)/K} + 2e^{-(A+2B)/K} \right)
\end{eqnarray*}

Each of the positive terms multiplying $K$ above is upper bounded by
either $1$ or $2$, and we have $\Var[X] \le 7K$.
\end{proof}

\begin{lemma}\LemmaName{yuck}
If $B=0$ and $100 \leq A \leq K/20$, then $\Var[X] <
4A^2/K$.
\end{lemma}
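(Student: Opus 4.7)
The plan is to start from the exact formula for $\Var[X]$ given by \Lemma{exactgoodbadballs} specialized to $B=0$, then re-express it in a form that makes the cancellation between its two terms transparent. Writing $a = (1-1/K)^A$ and $b = (1-2/K)^A$, a short algebraic rearrangement yields
$$\Var[X] = K(a-b) - K^2(a^2 - b),$$
and since $(1-1/K)^2 = 1 - 2/K + 1/K^2 \geq 1 - 2/K$, raising to the $A$-th power gives $a^2 \geq b$, so the second term subtracts a nonnegative quantity from the first. Keeping this cancellation is essential: dropping it leaves the useless bound $\Var[X] \leq K(a-b) \leq A$, whereas the target is $O(A^2/K)$.

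Next I would express both differences as integrals via the fundamental theorem of calculus,
$$a - b = A\int_{1/K}^{2/K}(1-x)^{A-1}\,dx, \qquad a^2 - b = A\int_0^{1/K^2}(1 - 2/K + t)^{A-1}\,dt.$$
The first integrand is decreasing in $x$ and the second is increasing in $t$, so the natural endpoint estimates produce matched one-sided bounds
$$K(a-b) \leq A(1-1/K)^{A-1}, \qquad K^2(a^2 - b) \geq A(1-2/K)^{A-1},$$
and subtracting gives $\Var[X] \leq A\bigl[(1-1/K)^{A-1} - (1-2/K)^{A-1}\bigr]$.

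Applying the same trick once more to the bracketed difference,
$$(1-1/K)^{A-1} - (1-2/K)^{A-1} = (A-1)\int_{1/K}^{2/K}(1-x)^{A-2}\,dx \leq (A-1)/K \leq A/K,$$
gives $\Var[X] \leq A^2/K < 4A^2/K$, with the strict inequality using $A \geq 100 > 0$. The hypothesis $A \leq K/20$ is used only loosely, to ensure $K \geq 2000$ so that $1-1/K$ and $1-2/K$ are safely positive. The main obstacle is conceptual rather than technical: one must resist the temptation to bound $\Var[X] \leq Ka(1-a)$ by discarding the non-positive covariance term in the formula of \Lemma{exactgoodbadballs}, since $Ka(1-a)$ is itself of order $A$ and already misses the target by a factor of $K/A \geq 20$. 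Both pieces of the variance must be retained together, and the integral representation above is precisely what makes their near-cancellation to order $A^2/K$ visible.
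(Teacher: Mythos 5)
Your proof is correct, and it takes a genuinely different route from the paper's. You both start from the same exact variance formula of \Lemma{exactgoodbadballs} (your rearrangement $\Var[X] = K(a-b) - K^2(a^2-b)$ with $a=(1-1/K)^A$, $b=(1-2/K)^A$ is algebraically identical to the paper's starting expression $Ka + K(K-1)b - K^2a^2$), but from there the paper proceeds by binomial expansion: it writes each power as $1 - (\text{linear term}) + E_i$, where $E_1,\dots,E_4$ collect the quadratic-and-higher terms, and then bounds each $E_i$ by a geometric series using $A \le K/20$, eventually landing at $\Var[X] \le 3.93A^2/K$. Your approach instead represents each of the two differences as an integral via the fundamental theorem of calculus and exploits monotonicity of the integrands to get matched one-sided endpoint bounds — an upper bound on $K(a-b)$ and a lower bound on $K^2(a^2-b)$ — whose difference telescopes into a third difference of the same type, which the same trick bounds by $A/K$. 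This is shorter, avoids all the error-term bookkeeping, yields the slightly sharper constant $\Var[X] \le A(A-1)/K \le A^2/K$, and in fact barely uses the hypotheses $A \ge 100$ and $A \le K/20$ (only positivity of $A$ and $K \ge 2$ or so are really needed), whereas the paper's geometric-series estimates lean on both. The one thing the paper's expansion makes slightly more visible is the second-order structure of the variance (the leading $2A(A-1)/K$ term), but for the purpose of this lemma your argument is cleaner and fully rigorous.
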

\begin{proof}
By \Lemma{exactgoodbadballs},
\begin{eqnarray*}
\Var[X] & = & K\left (K-1 \right
)\left(1-\frac 2K\right)^{A} + K\left(1-\frac 1K\right)^{A} -
K^2 \left(1-\frac 1K\right)^{2A}\\
& = & K^2 \left [\left(1-\frac 2K\right)^{A} - \left(1-\frac
    1K\right)^{2A}
\right ] + K\left [\left(1-\frac 1K\right)^{A} -
  \left(1- \frac 2K\right)^{A} \right ]\\
& = & K^2\left(1- \frac 2K\right)^{A} \left [1 - \left
    (\frac{1- \frac 2K + \frac 1{K^2}}{1- \frac 2K} \right )^{A}\right ] +
K \left [\left(1- \frac 1K\right)^{A} - \left (1 - \frac 2K\right)^{A}
\right ]\\
& = & K^2\left(1- \frac 2K\right)^{A} \left [1 - \left (1 +
    \frac{1}{K^2\left(1- \frac 2K\right)} \right )^{A} \right ] +
K\left [\left(1- \frac 1K\right)^{A} - \left(1- \frac 2K\right)^{A}
\right ]\\
& = & K^2\left(1- \frac 2K\right)^{A} \left [1 - \left (1 +
    \frac{A}{K^2\left(1- \frac 2K\right)} + E_1 \right ) \right ]+
K\left [\left (1 - \frac AK + E_2 \right ) - \left (1- \frac{2A}{K} +
    E_3 \right ) \right ],
\end{eqnarray*}
where $E_1, E_2,$ and $E_3$ are the sum of quadratic and higher terms
of the binomial expansions for $(1+/(K^2(1-2/K)))^{A}$,
$(1-1/K)^{A}$, and $(1-2/K)^{A}$, respectively. Continuing
the expansion, 
\begin{eqnarray*}
\Var[X] & = & -K^2\left(1-\frac 2K\right)^{A} \left
  (\frac{A}{K^2\left(1-\frac 2K\right)} +E_1 \right ) + A +
K(E_2-E_3)\\
& = & -A\left(1-\frac 2K\right)^{A-1} - K^2E_1\left(1-\frac
  2K\right)^{A} +
A + K(E_2-E_3)\\
& = & -A\left(1-\frac{2(A-1)}{K} + E_4\right) -
K^2E_1\left(1-\frac 2K\right)^{A} + A + K(E_2-E_3)\\
& = & -A + \frac{2A(A-1)}{K} - AE_4 -
K^2E_1\left(1-\frac 2K\right)^{A} + A + K(E_2-E_3)\\
& = & \frac{2A(A-1)}{K} - AE_4 - K^2E_1\left(1-\frac 2K\right)^{A}
+ K(E_2-E_3),\\
\end{eqnarray*}
where $E_4$ is the sum of quadratic and higher terms of the binomial
expansion of $(1-2/K)^{A-1}$. Since
$10 \leq A \leq K/20$, we have that $E_4$ is bounded by a
geometric series with starting value $(2/K)^2(A-1)^2/2 \leq
2A(A-1)/K^2 \leq (A-1)/(5K)$ and common ratio at most
$2(A-1)/K \leq 2A/K \leq 1/10$, and so $E_4 \leq
((A-1)/(5K))/(1-1/10) = 2(A-1)/(9K)$. Thus,
$-A E_4 \leq 2A(A-1)/(9K)$. 

Arguing similarly, we see that $E_1$ is at most
$(A^2)/(K^4(1-A/K^2)) \leq 2A^2/K^4$ for sufficiently
large $K$. It follows that 
$$ K^2E_1\left(1-\frac 2K\right)^{A} \leq K^2E_1 \leq
2\frac{A^2}{K^2} \leq \frac{A(A-1)}{9K},$$
for sufficiently large $K$. 

Finally, we look at $E_2-E_3$,
\begin{eqnarray*}
E_2 - E_3 & = & \left (\frac{\binom{A}{2}}{K^2} -
  \frac{\binom{A}{3}}{K^3} +
  \cdots \right ) - \left (\frac{4\binom{A}{2}}{K^2} -
  \frac{8\binom{A}{3}}{K^3}  + \cdots \right )\\
& = & - \frac{3\binom{A}{2}}{K^2} + \frac{7\binom{A}{3}}{K^3} - \cdots
\end{eqnarray*}
This series can be upper bounded by the series
$\sum_{i=2}^{\infty}\frac{(2^i-1)(A/K)^i}{i!}$, and  lower
bounded by the series
$-\sum_{i=2}^{\infty}\frac{(2^i-1)(A/K)^i}{i!}$. This series,
in absolute value, is just a geometric series with starting term
$3A^2/(2K^2)$ and common ratio at most $A/K \leq
1/20$. Thus, $|E_2-E_3| \leq \frac{20}{19} \cdot \frac{3A^2}{2K^2} =
\frac{30}{19} \cdot (A/K)^2$. It follows that
$|K(E_2-E_3)| \leq \frac{30}{19} \cdot A^2/K \leq
\frac{30}{19} \cdot \frac{100}{99} \cdot A(A-1)/K =
\frac{3000}{1881} \cdot A(A-1)/K$, since $A \geq 100$. 

Hence, 
$$|A E_4| + \left |K^2E_1\left(1-\frac 2K\right)^{A} \right | +
\left |K(E_2 - E_3)\right | \leq \left (\frac{2}{9} +
  \frac{1}{9} + \frac{3000}{1881} \right )A(A-1)/K <
1.93A(A-1)/K.$$
and thus $\Var[X] \le 3.93A^2/K$.
\end{proof}

\begin{lemma}\LemmaName{ind}
There exists some constant $\eps_0$ such that the following holds for
$\eps\le \eps_0$.
Let $\mathcal{H}$ be a family of $c\log(K/\eps)/\log\log(K/\eps)$-wise
independent hash functions mapping the $A+B$ good and bad balls into
$K$ bins for
some sufficiently large constant $c > 0$.  Suppose $A,B\le K/e$ and
$A \ge 1$, and
we choose a
random $h \in \mathcal{H}$ mapping balls to bins. For $i \in
[K]$, let $X_i'$ be an indicator variable which is $1$ if and
only if there
exists at least one good ball, and no bad balls, mapped to bin $i$ by
$h$. Let $X'=
\sum_{i=1}^{K}X_i'$.  Then for a sufficiently large constant
$c$, the following holds:

\begin{enumerate}
\item  $|\E[X']-\E[X]| \le \eps\E[X]$
\item $\Var[X']-\Var[X] \le \eps^2$
\end{enumerate}
\end{lemma}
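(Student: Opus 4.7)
The core principle is that $k$-wise independence of the hash family (with $k = c\log(K/\eps)/\log\log(K/\eps)$ for a sufficiently large constant $c$) preserves the expectation of every polynomial in the indicators $\indicate[h(j)=i]$ of total degree at most $k$. Combining this with inclusion--exclusion expansions for $X_i'$ and $X_i'X_j'$, and with the distribution-free Bonferroni inequality, handles both parts.

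For part (1), decompose
\[\E[X_i'] = \Pr[\text{no ball of } B \text{ hits bin } i] - \Pr[\text{no ball of } G\cup B \text{ hits bin } i].\]
To each probability apply inclusion--exclusion truncated at level $k-1$. The truncated polynomial has degree at most $k-1$, so its expectation under $k$-wise and under full independence coincide, and Bonferroni bounds the remaining truncation error by $S_k = \binom{n}{k}/K^k$ with $n \in \{B, A+B\}$; the $k$-term intersection probabilities appearing in $S_k$ are themselves computed exactly by $k$-wise independence, so both distributions yield the same upper bound here. Using $A+B \le 2K/e$ and $\binom{n}{k}/K^k \le (n/K)^k/k! \le (2/e)^k/k!$, the per-bin error is $O((2/e)^k/k!)$. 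Summing over the $K$ bins gives $|\E[X']-\E[X]| = O(K/k!)$, and since $\E[X] = \Omega(A) = \Omega(1)$ (from $A \ge 1$, $B \le K/e$, and \Lemma{exactgoodbadballs}), Stirling's estimate shows this is at most $\eps\E[X]$ for $c$ large.

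For part (2), the main work is bounding $|\E[X_i'X_j'] - \E[X_iX_j]|$ for $i \ne j$. Expand $X_i'X_j' = (P_i-Q_i)(P_j-Q_j)$ with $P_i = \prod_{b \in B}(1-\indicate[h(b)=i])$ and $Q_i = \prod_{j' \in G\cup B}(1-\indicate[h(j')=i])$. The identity $\indicate[h(b)=i]\,\indicate[h(b)=j] = 0$ for $i \ne j$ collapses each of the four cross-products $\E[P_iP_j]$, $\E[P_iQ_j]$, $\E[Q_iP_j]$, $\E[Q_iQ_j]$ into the probability of an intersection of events each of probability $1/K$ or $2/K$, and the truncated-Bonferroni argument from part (1) bounds each per-pair error by $O((4/e)^k/k!)$ (with the worst factor $2(A+B)/K \le 4/e$ arising in $\E[Q_iQ_j]$). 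Expanding $\Var[X']-\Var[X]$ as a sum over bins and pairs of bins, the dominant contribution is $\sum_{i\ne j}(\E[X_i'X_j'] - \E[X_iX_j]) = O(K^2(4/e)^k/k!)$, with the remaining per-bin and $\E[X_i']\E[X_j']$-type terms contributing only $O(K^2/k!)$ (using part (1) and $\E[X_i'],\E[X_i]\le 1$). Stirling's formula then shows the total is at most $\eps^2$ for the same $k$ with $c$ large.

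The principal subtlety is the choice of Bonferroni truncation level: taking level $k-1$ rather than $k$ is essential, so that the error term $S_k$ involves only $k$-term intersection probabilities (exactly computed under $k$-wise independence) rather than $(k+1)$-term probabilities (which $k$-wise independence only bounds above), which would introduce a spurious factor of $\Theta(K)$ per bin and force a larger value of $k$. Once this truncation is in place, the remainder is bookkeeping together with an application of Stirling's formula to the factorial inequality $k! \gtrsim K^2(4/e)^k/\eps^2$.
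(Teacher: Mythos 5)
Your proposal is correct and follows essentially the same route as the paper: the paper's approximating polynomial $f_k(n)=\sum_{i=0}^k(-1)^i\binom{n}{i}$ is exactly the truncated inclusion--exclusion sum you use, its error bound $|f_k(n)|\le\binom{n}{k+1}$ is the Bonferroni bound, and the expectation of that error term is likewise evaluated via limited independence and shown to be $k^{-\Theta(k)}$ per bin (or per pair of bins, for the variance). The only cosmetic differences are your difference decomposition $X_i'=P_i-Q_i$ versus the paper's product $f_k(B_i)(1-f_k(A_i))$, and your truncation at level $k-1$ versus the paper's reliance on $2(k+1)$- and $4(k+1)$-wise independence to evaluate the omitted term exactly --- the latter shows your ``essential'' truncation choice is convenient but not necessary, since the lost factor of $K$ is absorbed into the constant $c$.
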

\begin{proof}
Let $A_i$ be the random variable number counting the number of good
balls in bin $i$ when
picking $h$ from $\mathcal{H}$.  Let
$B_i$ be the
number of bad balls in bin $i$.  Define the function:

$$f_k(n) = \sum_{i=0}^k (-1)^i \binom{n}{i}$$

We note that $f_k(0)=1$, $f_k(n)=0$ for $1\le n \le k$ and
$|f_k(n)| \le
\binom{n}{k+1}$ otherwise.  Let $f(n) = 1$ if $n=0$ and $0$ otherwise.
We now approximate $X_i$ as $f_k(B_i)(1-f_k(A_i))$.  We note that this
value
is determined entirely by $2k$-independence of the bins the balls are
put into.  We note that this is also

\begin{eqnarray*}
&&\left(f(B_i) \pm O\left(\binom{B_i}{k+1}\right)\right)\left
  (1-f(A_i)\pm O\left(\binom{A_i}{k+1}\right)\right)\\
&&=\ X_i \pm
O\left(\binom{B_i}{k+1} + \binom{A_i}{k+1} +
\binom{A_i}{k+1}\binom{B_i}{k+1}\right)
\end{eqnarray*}

The same expression holds for the $X_i'$, and thus both $\E[X_i']$ and
$\E[X_i]$ are sandwiched inside an interval of size bounded by twice the
expected error.
To bound the expected error we can use $2(k+1)$-independence.  We have
that the
expected value of, say, $\binom{A_i}{k+1}$ is $\binom{A}{k+1}$ ways of
choosing $k+1$ of the good balls times the product of the
probabilities that each ball is in bin $i$.  This is

$$\binom{A}{k+1}K^{-(k+1)} \le \left(\frac{eA}{K(k+1)}\right)^{k+1}$$
and similarly for $\E[\binom{B_i}{k+1}]$.  Assuming that $A,B\le
K/e$, $|\E[X_i] -
\E[X_i']| \le \eps^2/K$ as long as $6(2(k+1))^{-(k+1)} \le \eps^2$, which
occurs for $k = c\log(K/\eps)/\log\log(K/\eps)$ for sufficiently large
constant $c$.
In this case $|\E[X] - \E[X']| \le \eps^2 \le
 \eps\E[X]$ for sufficiently small $\eps$ since $\E[X] = \Omega(1)$
 when $B\le K$ and $A\ge 1$.

We now analyze $\Var[X']$.  
We approximate $X_i X_j$ as
$f_k(B_i)f_k(B_j)(1-f_k(A_i))(1-f_k(A_j))$.  This is determined by
$4k$-independence of the balls and is equal to
\begin{eqnarray*}
&&\left(f(B_i) \pm O\left(\binom{B_i}{k+1}\right)\right)
\left(f(B_j) \pm O\left(\binom{B_j}{k+1}\right)\right)
\left(1-f(A_i)\pm O\left(\binom{A_i}{k+1}\right)\right)\\
&&\times\
\left(1-f(A_j)\pm O\left(\binom{A_j}{k+1}\right)\right) \\
&=& X_iX_j \pm
O\Bigg(\binom{A_i}{k+1} + \binom{A_j}{k+1} + \binom{B_i}{k+1} +
  \binom{B_j}{k+1} + \binom{A_i}{k+1}\binom{A_j}{k+1} +
  \binom{B_i}{k+1}\binom{B_j}{k+1}\\
&&+\ \binom{A_i}{k+1}\binom{B_i}{k+1} +
\binom{A_i}{k+1}\binom{B_j}{k+1}
  + \binom{A_j}{k+1}\binom{B_i}{k+1} +
  \binom{A_j}{k+1}\binom{B_j}{k+1}\\
&&+\
\binom{A_i}{k+1}\binom{A_j}{k+1}\binom{B_i}{k+1} +
\binom{A_i}{k+1}\binom{A_j}{k+1}\binom{B_j}{k+1} +
\binom{A_i}{k+1}\binom{B_i}{k+1}\binom{B_j}{k+1}\\
&&\ + \binom{A_j}{k+1}\binom{B_i}{k+1}\binom{B_j}{k+1} +
\binom{A_i}{k+1}\binom{A_j}{k+1}\binom{B_i}{k+1}\binom{B_j}{k+1}
\Bigg)
\end{eqnarray*}

We can now analyze the error using $4(k+1)$-wise independence.  The
expectation of each term in the error is calculated as before, except
for products of the form

$$\binom{A_i}{k+1}\binom{A_j}{k+1},$$
and similarly for $B_i,B_j$.
The expected value of this is

$$\binom{A}{k+1,k+1}K^{-2(k+1)} \le
\binom{A}{k+1}^2 K^{-2(k+1)} \le
\left(\frac{eA}{K(k+1)}\right)^{2(k+1)} .$$

Thus, again, if $A,B\le K/e$ and $k =
c'\log(K/\eps)/\log\log(K/\eps)$ for $c'$ sufficiently large, each
summand in the error above is bounded by $\eps^3/(32K^2)$, in which
case
$|\E[X_iX_j] - \E[X_iX_j]| \le \eps^3/K^2$. We can also make $c'$
sufficiently large so that $|\E[X] - \E[X']| \le \eps^3/K^2$.
Now, we have 
\begin{eqnarray*}
\Var[X'] - \Var[X] &\le& |(\E[X] - \E[X']) + 2\sum_{i<j}(\E[X_iX_j] -
\E[X_i'X_j']) - (\E^2[X] - \E^2[X'])|\\
&\le& |\E[X] - \E[X']| + K(K-1)\max_{i<j}|\E[X_iX_j] -
\E[X_i'X_j']| + |\E^2[X] - \E^2[X']|\\
&\le& \eps^3/K^2 + \eps^3 + \E^2[X](2\eps^3/K^2 + (\eps^3/K^2)^2)\\
&\le& 5\eps^3
\end{eqnarray*}
which is at most $\eps^2$ for $\eps$ sufficiently small.
\end{proof}

\begin{lemma}\LemmaName{ind-consequences}
There exists a constant $\eps_0$ such that the following holds.
Let $\mathcal{H}$, $X'$ be as in \Lemma{ind}, and also
assume $B=0$ and $100 \leq A \leq K/20$ with $K = 1/\eps^2$ and
$\eps\le \eps_0$.  Then
$$\Pr_{h\leftarrow\mathcal{H}}[|X' - \E[X]| \le 8\eps \E[X]] \ge
3/4$$.
\end{lemma}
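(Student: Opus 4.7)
The plan is to prove this by Chebyshev's inequality, using the variance and expectation bounds already established in \Lemma{yuck} and \Lemma{ind}. Since $B=0$ and $100 \le A \le K/20$, all the hypotheses of those earlier lemmas apply directly.

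First I would assemble the three ingredients I need. From \Lemma{yuck}, $\Var[X] < 4A^2/K$. Combining with \Lemma{ind} (for $\eps$ below the constant $\eps_0$ of that lemma), I get $\Var[X'] \le 4A^2/K + \eps^2 = 4A^2\eps^2 + \eps^2$, using $K = 1/\eps^2$. I also need a lower bound on $\E[X]$; since $\E[X] = K(1-(1-1/K)^A)$ and $A/K \le 1/20$, the inequality $1-e^{-x} \ge x - x^2/2$ (together with $(1-1/K)^A \le e^{-A/K}$) gives $\E[X] \ge A(1 - A/(2K)) \ge (39/40)A$. In particular $\E[X] = \Theta(A)$ and $\E^2[X] \ge A^2/2$.

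Next I would transfer the deviation from being around $\E[X']$ to being around $\E[X]$ using the triangle inequality. By \Lemma{ind}, $|\E[X'] - \E[X]| \le \eps\E[X]$, so
\[ \{|X' - \E[X]| \ge 8\eps\E[X]\} \subseteq \{|X' - \E[X']| \ge 7\eps\E[X]\}. \]
Applying Chebyshev's inequality to $X'$ then yields
\[ \Pr[|X' - \E[X]| \ge 8\eps\E[X]] \le \frac{\Var[X']}{49\eps^2\E^2[X]} \le \frac{4A^2\eps^2 + \eps^2}{49\eps^2 \cdot A^2/2} = \frac{2(4 + 1/A^2)}{49}. \]
Using $A \ge 100$ makes the $1/A^2$ term negligible, and the bound becomes at most $8/49 + o(1) < 1/4$ for $\eps_0$ chosen sufficiently small.

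I do not expect any real obstacle; the main technical content was already packed into \Lemma{yuck} (the sharp variance estimate which is smaller than the crude $\Var[X] \le \E[X]$ one would get from full independence plus Poissonization) and \Lemma{ind} (showing limited independence preserves both $\E[X]$ and $\Var[X]$ up to tiny additive error). The only care required is to make sure the $\eps^2$ additive slack in the variance bound from \Lemma{ind} is dominated by the main term $4A^2/K = 4A^2\eps^2$, which it trivially is since $A \ge 100$, and to verify that $8/49$ leaves comfortable room below $1/4$.
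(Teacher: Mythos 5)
Your proposal is correct and follows essentially the same route as the paper: lower-bound $\E[X]$ by $(39/40)A$, combine \Lemma{yuck} with \Lemma{ind} to get $\Var[X'] \le 4A^2\eps^2 + \eps^2$ and $|\E[X']-\E[X]|\le\eps\E[X]$, then apply Chebyshev and the triangle inequality. The only cosmetic difference is that you perform the event inclusion first and center Chebyshev's threshold at $7\eps\E[X]$, whereas the paper centers the deviation at $\E[X']$ and converts afterwards; both yield a probability bound comfortably below $1/4$.
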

\begin{proof}
Observe that 
\begin{eqnarray*}
\E[X] &\ge& (1/\eps^2)\left(1-\left(1-A\eps^2 +
\binom{A}{2}\eps^4\right)\right)\\
&=& (1/\eps^2)\left(A \eps^2 - \binom{A}{2}\eps^4\right)\\
&\ge& (39/40) A,
\end{eqnarray*}
since $A \le 1/(20\eps^2)$.

By \Lemma{ind} we have $\E[X'] \ge (1-\eps)\E[X] > (9/10)A$, and
additionally using \Lemma{yuck} we have that
$\Var[X'] \le \Var[X] + \eps^2 \le 5\eps^2A^2$.
Set $\eps' = 7\eps$. Applying Chebyshev's inequality,
\begin{eqnarray*}
\Pr[|X' - \E[X']| \ge (10/11)\eps' \E[X']] &\le& \Var[X']/((10/11)^2(\eps')^2
\E^2[X'])\\
&\le& 5\cdot A^2 \eps^2/((10/11)^2(\eps')^2 (9/10)^2 A^2)\\
&<& (13/2)\eps^2/(10\eps'/11)^2\\
&<& 1/4
\end{eqnarray*}

Thus, with probability at least $1/4$, by the triangle inequality and
\Lemma{ind} we have
$|X'-\E[X]| \le |X' - \E[X']| + |\E[X'] - \E[X]|\le
8\eps\E[X]$.
\end{proof}

\subsubsection{Proofs from \Section{f0}}\SectionName{f0-proofs}

Here we provide the proofs of two lemmas used in the analysis of our
$F_0$ algorithm in \Section{f0}.

\begin{proofof}{\Lemma{bounded-deriv}}
We calculate
\begin{eqnarray*}
f'(y) &=& x\ln\left(1 - \frac 1x\right)\left(1 - \frac 1x\right)^y -
2x\ln\left(1 - \frac 1x\right)\left(1 - \frac 1x\right)^{2y}\\
&=& x\ln\left(1 + \frac 1{x-1}\right)\left(1 - \frac 1x\right)^y
\left[ 2\left(1 - \frac 1x\right)^y - 1\right ]\\
 &\ge& \frac{x}{2(x-1)}\left(1 - \frac 13\right)\left[ 2\left( 1 -
     \frac 13\right) - 1\right]\\
&\ge& \frac 12\cdot \frac 23 \cdot \frac 13
\end{eqnarray*}
\end{proofof}

\begin{proofof}{\Lemma{AequalsB}}
We use $B \pm \eps B$ to denote a value in $[(1-\eps)B, (1+\eps)B]$.
Then,
\begin{eqnarray*}
\left(1 - \frac 1K\right)^{B'} &=& \left(1 - \frac
  1K\right)^{B}\left(1 - \frac
  1K\right)^{\pm\eps B}\\
&\le& \left(1 - \frac 1K\right)^{B}\cdot\frac{1}{1 - \frac {\eps
    B}{K}}\\
&\le& (1 + 2\eps)\left(1 - \frac 1K\right)^{B}
\end{eqnarray*}
Also,
\begin{eqnarray*}
\left(1 - \frac 1K\right)^{B'} &=& \left(1 - \frac
  1K\right)^{B}\left(1 - \frac
  1K\right)^{\pm\eps B}\\
&\ge& \left(1 - \frac 1K\right)^{B}\cdot\frac{1}{1 - \frac {\eps
    B}{K}}\\
&\ge& \left(1 - \frac BK\right)\left(1 - \frac 1K\right)^{B}\\
&\ge& (1 - \eps)\left(1 - \frac 1K\right)^{B}
\end{eqnarray*}
\end{proofof}

\end{document}